\theoremstyle{plain} 
\newtheorem{theorem}{Theorem}[section]
\newtheorem*{theorem*}{Theorem}
\newtheorem{lemma}[theorem]{Lemma}
\newtheorem{corollary}[theorem]{Corollary}
\theoremstyle{definition}
\newtheorem{remark}[theorem]{Remark}
\newtheorem{definition}[theorem]{Definition}
\newtheorem*{Assumptions}{Assumptions}
\DeclareMathOperator*{\argmin}{arg~min}
\DeclareMathOperator{\supp}{supp}
\newcommand{\bfx}{{\bf x}}
\newcommand{\bfb}{{\bf b}}
\newcommand{\bfu}{{\bf u}}
\newcommand{\bfv}{{\bf v}}
\newcommand{\bfy}{{\bf y}}
\newcommand{\bfe}{{\bf e}}
\newcommand{\bfz}{{\bf z}}
\newcommand{\bfw}{{\bf w}}
\newcommand{\bfr}{{\bf r}}
\newcommand{\bfone}{{\bf 1}}
\newcommand{\ER}{Erd\H{o}s - R\`enyi }
\def\3bar{{|\hspace{-.02in}|\hspace{-.02in}|}}
\title{Compressive Sensing for cut improvement and local clustering.}
\author{Ming-Jun Lai\footnote{mjlai@uga.edu. Department of Mathematics,
University of Georgia, Athens, GA 30602.
This research is partially supported by 
the National Science Foundation under the grant \#DMS 1521537. }
\and
Daniel Mckenzie \footnote{mckenzie@math.ucla.edu. Department of 
Mathematics, University of California, Los Angeles, CA 155505. The financial assistance of the National Research Foundation of South Africa 
(NRF) towards this research is hereby acknowledged. Opinions expressed and conclusions arrived at, are those of the author and not 
necessarily to be attributed to the NRF. This research was conducted while this author was a graduate student at the University of Georgia and he gratefully acknowledges support and encouragement received from the Math Department of UGA.}}
\date{\today}
\begin{document}

\maketitle

\begin{abstract}
We show how one can phrase the cut improvement problem for graphs as a sparse recovery problem, whence one can use algorithms originally developed for use in compressive sensing (such as {\tt SubspacePursuit} or {\tt CoSaMP}) to solve it. We show that this approach to cut improvement is fast, both in theory and practice and moreover enjoys statistical guarantees of success when applied to graphs drawn from probabilistic models such as the Stochastic Block Model. Using this new cut improvement approach, which we call {\tt ClusterPursuit}, as an algorithmic primitive we then propose new methods for local clustering and semi-supervised clustering, which enjoy similar guarantees of success and speed.  Finally, we verify the promise of our approach with extensive numerical benchmarking.  
\end{abstract}

\begin{center}
{\bf Keywords:} Cluster Extraction, Local Clustering, Cut Improvement, Semi-Supervised Clustering, Community Detection, Compressive Sensing, Sparse Solution,  Graph Laplacian.
\end{center}


\section{Introduction}
Finding clusters is a problem of primary interest when analyzing graphs. 
This is because vertices which are in the same cluster 
can reasonably be assumed to have some latent similarity. Thus, clustering can be used to find communities in social networks \cite{GN02, TMP12, WWMBN14} or deduce political affiliation from a network of blogs \cite{AG05}. Moreover, even data sets which are not presented as graphs can profitably be studied by first creating an auxiliary graph ({\em eg.} a $K$- or $\epsilon$-nearest-neighbors graph) and then applying graph clustering techniques. This has been successfully applied to image segmentation \cite{SM00,MOV12}, image classification \cite{JME18} and natural language processing \cite{D01}. \\

\begin{figure}[!h]
\centering
\minipage{.32\textwidth}
  \includegraphics[width =0.9\linewidth]{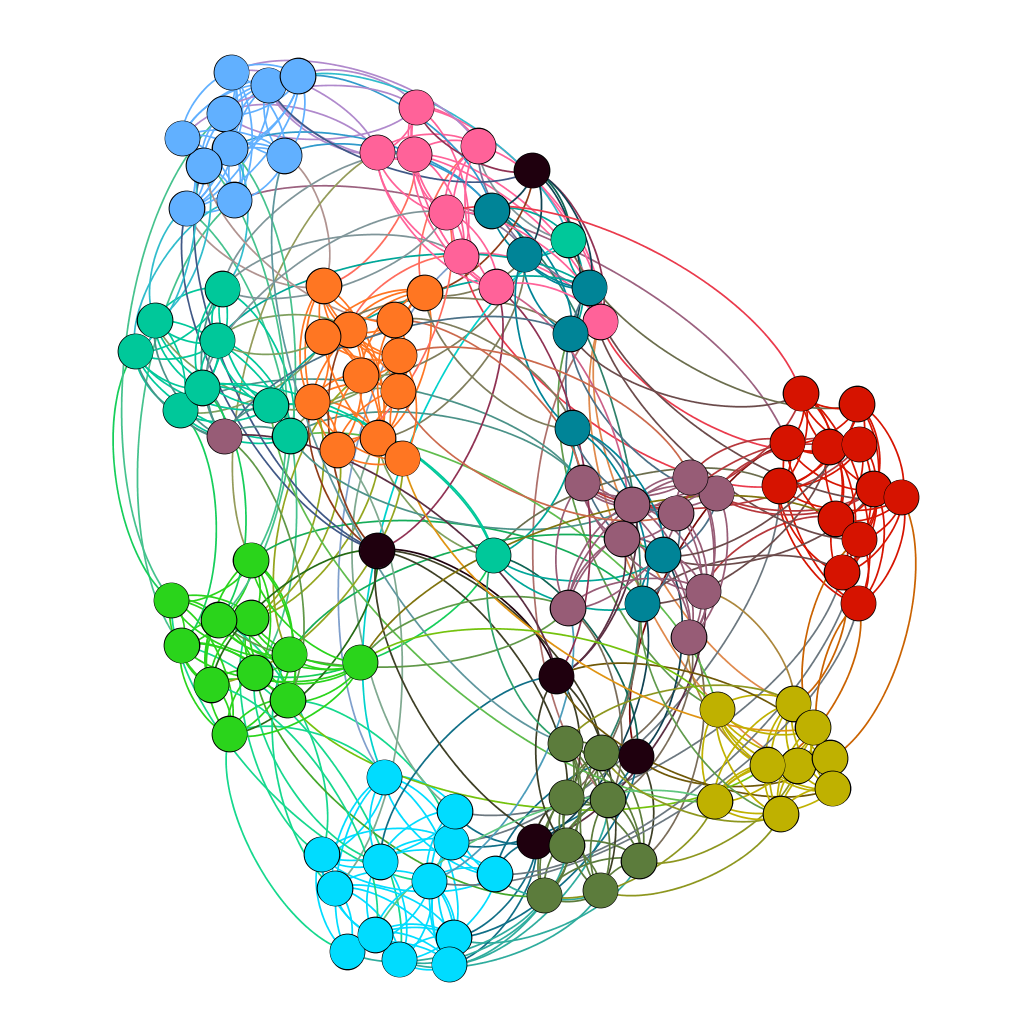}
\endminipage\hfill
\minipage{.32\textwidth}
\includegraphics[width =1.1\linewidth]{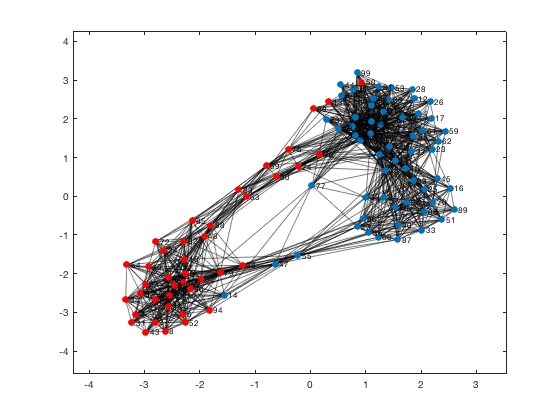}
\endminipage\hfill
\minipage{.32\textwidth}%
  \includegraphics[width = \linewidth]{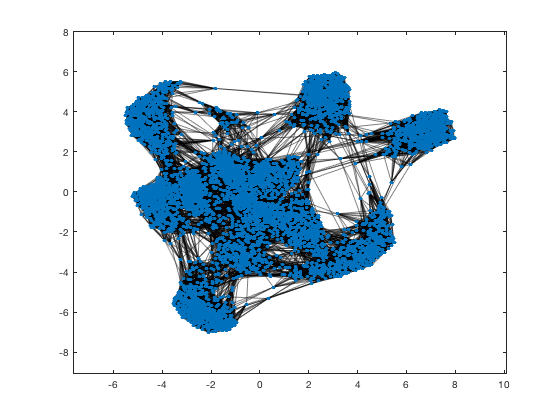}
\endminipage
\caption{{\em Left:} the College Football graph of \protect{\cite{GN02}}. Vertices represents colleges fielding (American) football teams in the 2000 season. Vertices are connected if the respective teams played each other during the regular season. Clusters correspond to the various conferences in which teams play. Note that there are five schools, denoted in black, which are ``independents'' {\em ie} they are not affiliated with any conference. These can be thought of as background vertices. {\em Middle:} Senate co-voting for the 97th Congress, created using data from \protect{\cite{L19}}. Vertices represent Senators and are connected if the respective Senators cast the same vote on a majority of bills. The two large clusters correspond to the two major American political parties. Notice how the blue cluster can be visually subdivided into two sub-clusters. {\em Right:} The OptDigits dataset consists of $5620$ grayscale images of handwritten digits 0--9 of size $8\times 8$. We discuss how to turn this into a graph in \S \protect{\ref{sec:NumericalExperiments}}. Note that as there are ten digits, we expect this graph to have ten disjoint clusters.}
\label{fig:Examples_of_clusters}
\end{figure}

We shall informally think of a cluster as a subset of vertices, $C\subset V$ with many edges between vertices in $C$, and few edges to the rest of the graph, $C^{c}$. See Figure \ref{fig:Examples_of_clusters} for a few examples. While some graphs may allow a neat partitioning into disjoint clusters (for example the OptDigits graph in Figure \ref{fig:Examples_of_clusters}), for many graphs this is not the case. Some graphs may contain {\em background vertices}, that is, vertices which do not belong to any cluster (see the College Football graph in Figure \ref{fig:Examples_of_clusters}). Alternatively, graphs may exhibit clusters at multiple scales (See the Senate Co-voting graph in Figure \ref{fig:Examples_of_clusters}). In many cases, one has certain {\em a priori} information that could be used to improve clustering. For example in the OptDigits graph, we may know that some small subset, $\Gamma\subset V$, all represent images of ones. It is reasonable to assume that algorithms which incorporate this additional information (usually referred to as semi-supervised algorithms) will perform better than ones which do not. With this in mind, it is convenient to appeal to the following taxonomy of clustering algorithms:

\begin{enumerate}
\item {\em Global clustering algorithms} assign every vertex to one of $k$ clusters, where the clusters may or may not be disjoint. Algorithms for this problem may be unsupervised (for example {\tt SpectralClustering} \cite{SM00,NJW02} or {\tt GenLouvain} \cite{MFFP11})  or semisupervised (for example the auction dynamics approach of \cite{JME18}, or the regional force based methods of \cite{YT18}). This is appropriate for graphs such as the OptDigits graph of Figure \ref{fig:Examples_of_clusters}, where one expects a clear partition of the vertices into clusters.
	\item {\em Local clustering algorithms}\footnote{Also known as cluster extraction algorithms in the statistics literature} take as input a small set of ``seed vertices'', $\Gamma\subset V$ and return a good cluster containing $\Gamma$. Algorithms for local clustering are not confounded by background vertices, as they are not required to assign them to a cluster. One can further sub-divide local clustering algorithms into strongly and weakly local clustering algorithms. Strongly local algorithms, for {\tt Nibble} \cite{ST04,ST13}, {\tt PPR-Grow} \cite{A06} or {\tt CapacityReleasingDiffusion} \cite{W17}, are characterized by having run time proportional to the size of the cluster found. This is advantageous when the cluster in question has much fewer vertices than the graph as a whole. Weakly local algorithms are characterized as having run time proportional to the size of $G$. In practice they are frequently faster than strongly local algorithms when finding large or moderately large clusters. We note that both kinds of local clustering algorithms may take as input a scale parameter, which dictates the size of the output cluster returned. This is useful when the graph at hand contains clusters at multiple scales, such as the Senate co-voting graph of  Figure \ref{fig:Examples_of_clusters}.
\item {\em Cut improvement algorithms}(cf. \cite{A06}, \cite{O14}, \cite{V16}) take as input a cut, or subset $\Omega\subset V$, which one can think of as an approximation to a cluster $C$, and refine it to produce a better approximation. Often cut improvement algorithms are run on the output of a local clustering algorithm to improve the quality of the output.\\
\end{enumerate}

The central contribution of this paper is a new cut improvement algorithm which we call {\tt ClusterPursuit}, that 
phrases the cut improvement problem as a sparse recovery problem. 
We pair this with a simple local clustering algorithm which we call Random Walk Thresholding or {\tt RWThresh} to obtain a two-stage weakly local clustering algorithm that we shall refer to as {\tt CP+RWT}. One can iterate this algorithm to find all clusters in a graph; we call this procedure iterated {\tt CP+RWT} or {\tt ICP+RWT}. After presenting some mathematical preliminaries and outlining the assumptions we place on generative models of graphs in \S \ref{sec:Preliminaries}, we derive the {\tt ClusterPursuit} algorithm in \S \ref{sec:ClusterPursuit} and prove that, given a cut $\Omega$ satisfying $|C_{1}\bigtriangleup\Omega|/|C_{1}| = O(1)$ {\tt ClusterPursuit} returns $C_{1}^{\#}$ satisfying  $|C_{1}\bigtriangleup C_{1}^{\#}|/|C_{1}| = o(1)$. Here, $C_{1}$ denotes the smallest cluster in the graph. In \S \ref{sec:RWThresh} we discuss the {\tt RWThresh} algorithm, and show that given a small set of seed vertices, $\Gamma\subset C_1$, it is capable of finding an $\Omega$ satisfying $|C_{1}\bigtriangleup\Omega|/|C_{1}| = O(1)$. This leads naturally to guarantees of success for the two-stage local clustering algorithm {\tt CP+RWT}, which we present in \S \ref{sec:CPRWT}. In \S \ref{sec:ICPRWT} we briefly discuss {\tt ICP+RWT} while in \S \ref{sec:Complexity} we show that {\tt CP+RWT} and {\tt ICP+RWT} enjoy a computational complexity of $O(n d_{\max}\log(n))$ where $d_{\max}$ is the largest vertex degree in the graph. In \S \ref{sec:LiteratureReview} we survey the literature and compare our work with relevant recent work in the area, while in \S \ref{sec:SBM} we show that a popular generative model of graphs with communities, namely the stochastic block model, satisfies the assumptions outlined in \S \ref{sec:Preliminaries}. Finally, we complement theoretical insight with experimental results in \S \ref{sec:NumericalExperiments}. In the interest of reproducibility, we make our code available at: \url{danielmckenzie.github.io}. 
  
 \section{Preliminaries}
 \label{sec:Preliminaries}

\subsection{Graph Notation and Definitions}
We restrict our attention to finite, simple, undirected graphs $G = (V,E)$, possibly with non-negative edge weights. We identify the vertex set $V$ with the integers $[n]:= \{1,\ldots, n\}$ and denote an edge between vertices $i$ and $j$ as $\{i,j\}\in E$. The (possibly weighted) adjacency matrix of $G$ will be denoted as $A$. By $d_i$ we mean the degree of the $i$-th vertex, computed as $d_i = \sum_{j} A_{ij}$. For any $S\subset V$ define $\text{vol}(S) = \sum_{i\in S} d_i$. For quantities such as $d_i$ (and later $\lambda_i$) that are indexed by $i\in [n]$,  let $d_{\max} := \max_i d_i$ and similarly $d_{\min} := \min_{i} d_i$. Denote by $D$ the diagonal matrix whose $(i,i)$ entry is $d_i$. By ``cluster'' we shall mean a subset of vertices, $C\subset V$, that is well-connected but sparsely connected to the rest of the graph. If a graph has clusters we shall refer to them as $C_{1},\ldots, C_{k}$. We define $n_a := |C_a|$ and assume that the clusters are ordered by size, so that $n_{1}\leq n_2\leq \ldots, \leq n_{k}$. We reserve the letters $a$ and $b$ for indexing clusters, while $i$ and $j$ will index vertices. 

\begin{definition}[Laplacians of graphs]
The normalized, random walk Laplacian is defined as $L = I - D^{-1}A$. We shall simply refer to it as 
\emph{the Laplacian}. The normalized, symmetric Laplacian is: $L^{\text{sym}} := I - D^{-1/2}AD^{-1/2}$. 
\label{def:Laplacians}
\end{definition} 

Recall the following elementary result in spectral graph theory (see \cite{L07}, for 
example, for a proof):

\begin{theorem}
\label{thm:EvecFormBasis}
Let $C_{1},\ldots, C_{k}$ denote the connected components of a graph $G$. Then the cluster 
indicator vectors
$\mathbf{1}_{C_1},\ldots, \mathbf{1}_{C_k}$ form a basis for the kernel of $L$.
\end{theorem}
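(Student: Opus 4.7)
The plan is to prove the theorem in three steps: (i) verify that each indicator vector $\mathbf{1}_{C_a}$ lies in $\ker L$, (ii) check linear independence, (iii) show that these indicators span $\ker L$. Since $L = I - D^{-1}A$, a vector $f \in \mathbb{R}^n$ lies in $\ker L$ if and only if $f(i) = \frac{1}{d_i}\sum_{j}A_{ij}f(j)$ for every vertex $i$, i.e., $f(i)$ equals the weighted average of $f$ over its neighbors. I will use this harmonic characterization throughout.

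For step (i), I will compute $(L\mathbf{1}_{C_a})(i)$ directly. If $i \in C_a$, then because $C_a$ is a connected component, every neighbor $j$ of $i$ also lies in $C_a$, so $\sum_j A_{ij}\mathbf{1}_{C_a}(j) = \sum_j A_{ij} = d_i$, giving entry $1 - d_i/d_i = 0$. If $i \notin C_a$, then by the same argument no neighbor of $i$ lies in $C_a$, so the sum is $0$ and the indicator itself is $0$ at $i$. Hence $L\mathbf{1}_{C_a}=0$. For step (ii), the $\mathbf{1}_{C_a}$ have pairwise disjoint supports, so any nontrivial relation $\sum_a c_a \mathbf{1}_{C_a} = 0$ forces $c_a = 0$ on each component separately.

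The main work is in step (iii), showing that $\dim \ker L \leq k$ by proving every $f \in \ker L$ is constant on each connected component. This is a discrete maximum principle argument. Fix a component $C_a$ and let $M = \max_{i \in C_a} f(i)$, attained at some $i^* \in C_a$. Since $f(i^*)$ is a convex combination (weights $A_{i^*j}/d_{i^*} \geq 0$ summing to $1$) of the values $f(j)$ over neighbors $j$ of $i^*$ — all of which lie in $C_a$ and hence satisfy $f(j) \leq M$ — equality $f(i^*) = M$ forces $f(j) = M$ for every neighbor $j$ of $i^*$. Iterating this along paths in $C_a$ and invoking connectedness, $f \equiv M$ on $C_a$. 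Therefore $f = \sum_a M_a \mathbf{1}_{C_a}$ with $M_a$ the constant value of $f$ on $C_a$, establishing spanning.

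I anticipate the only subtlety is making sure the maximum-principle propagation step is stated cleanly: one should phrase it as "if $f$ attains its maximum over $C_a$ at a vertex $i^*$, then $f$ attains the same maximum at every neighbor of $i^*$," and then argue by iterating along an arbitrary path within the connected component $C_a$. Combined with steps (i) and (ii), this yields $\dim\ker L = k$ with basis $\{\mathbf{1}_{C_1},\ldots,\mathbf{1}_{C_k}\}$.
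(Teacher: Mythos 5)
Your proof is correct. The paper does not argue this result itself---it cites the spectral clustering literature (von Luxburg), where the standard proof passes through the quadratic form: one notes $f\in\ker L$ iff $D^{1/2}f\in\ker L^{\text{sym}}$ (or iff $f$ is in the kernel of the unnormalized Laplacian) and then uses $\sum_{i,j}A_{ij}(f_i-f_j)^2=0$ to force $f$ to be constant on each connected component. You instead work directly with $L=I-D^{-1}A$ and prove constancy on components by a discrete maximum principle: at a vertex realizing the maximum over a component, the harmonic-mean condition forces equality at every neighbor, and connectedness propagates this along paths. Both routes are sound; the quadratic-form argument is shorter once one symmetrizes and makes nonnegativity of the Dirichlet energy transparent, while your argument avoids any conjugation to a symmetric matrix and works with the row-stochastic structure of $D^{-1}A$ directly, which is a nice fit for the random-walk normalization the paper actually uses. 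The only shared caveat (implicit in the paper's definition of $L$ as well) is that every vertex must have $d_i>0$ so that $D^{-1}$ is defined; with that standing assumption your steps (i)--(iii) together give $\dim\ker L=k$ with the stated basis, exactly as claimed.
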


Suppose that $G$ has clusters $C_{1},\ldots, C_{k}$. By definition, clusters have few edges 
between them, and so it is useful  to write $G$ as the union of two edge-disjoint subgraphs, 
defined as follows: let $G^{\text{in}} = (V,E^{\text{in}})$ have only edges between vertices in 
the same cluster, while $G^{\text{out}}= (V,E^{\text{out}})$ consist only of edges between 
vertices in different clusters. We emphasize that this is a theoretical construction, as in 
practice we of course cannot ascertain whether two vertices are in the same cluster without 
first solving the clustering problem, which is precisely what we are trying to do. Denote by 
$A^{\text{in}}$ and $L^{\text{in}}$ (resp. $A^{\text{out}}$ and $L^{\text{out}}$) the adjacency 
matrix and Laplacian of $G^{\text{in}}$ (resp. $G^{\text{out}}$). Similarly,  $d_{i}^{\text{in}}$
 (resp. $d^{\text{out}}_{i}$) shall denote the degree of the vertex $i$ in the graph 
$G^{\text{in}}$ (resp. $G^{\text{out}}$). For future reference we define the random walk transition matrices $P = AD^{-1}$ and $N:= D^{-1/2}AD^{-1/2}$.  We note that the spectra of $P,N, A, L$ are related:


\begin{lemma}
For any matrix $B$ with real eigenvalues let $\lambda_i(B)$ denote the $i$-th smallest eigenvalue, counted with multiplicity. Then $\lambda_i(L) = \lambda_i(L^{\text{sym}})$ while $\lambda_{n-i}(N) = \lambda_{n-i}(P) = 1 - \lambda_i(L)$
\label{lemma:Evals_of_L_Lsym_P}
\end{lemma}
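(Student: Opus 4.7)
The plan is to observe that all four matrices in play are either conjugate to one another or related by $I-(\cdot)$, so the lemma reduces to routine linear algebra once one writes down the correct similarity transformations.

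First I would verify the identity $L = D^{-1/2} L^{\mathrm{sym}} D^{1/2}$ by direct substitution: $D^{-1/2}(I - D^{-1/2}AD^{-1/2})D^{1/2} = I - D^{-1}A = L$. Since the underlying graph has no isolated vertices, $D$ is positive definite and hence invertible, so this is a genuine similarity transform and $L,L^{\mathrm{sym}}$ have identical characteristic polynomials. This gives the first equality $\lambda_i(L) = \lambda_i(L^{\mathrm{sym}})$ immediately.

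Next I would note that $N = I - L^{\mathrm{sym}}$ straight from the definitions, so if $\mu_1\le\mu_2\le\cdots\le\mu_n$ are the eigenvalues of $L^{\mathrm{sym}}$ then $\{1-\mu_j\}$ are the eigenvalues of $N$, sharing the same eigenvectors and multiplicities. Because $\mu\mapsto 1-\mu$ is order-reversing, the $i$-th smallest eigenvalue of $L^{\mathrm{sym}}$ becomes the $i$-th largest eigenvalue of $N$, which (with the author's indexing) is $\lambda_{n-i}(N)$. Combined with the first step this gives $\lambda_{n-i}(N) = 1 - \lambda_i(L)$. Finally, the similarity $P = D^{1/2} N D^{-1/2}$ follows from $D^{1/2}(D^{-1/2}AD^{-1/2})D^{-1/2} = AD^{-1} = P$, so $P$ and $N$ have the same spectrum and $\lambda_{n-i}(P)=\lambda_{n-i}(N)$.

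There is no genuine obstacle here: the only point requiring care is that $D$ must be invertible so that the conjugations $D^{\pm 1/2}(\cdot)D^{\mp 1/2}$ make sense, which is implicit in the standing assumption that $d_i>0$ for all $i$ (otherwise the random-walk matrices $L$ and $P$ themselves would not be defined). The structural content is simply that conjugating by $D^{\pm 1/2}$ interchanges the symmetric and random-walk normalizations, while subtracting from $I$ interchanges Laplacian and transition-matrix forms.
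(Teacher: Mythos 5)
Your proposal is correct and follows essentially the same route as the paper: conjugation by $D^{\pm 1/2}$ to identify the spectra of $L$ with $L^{\text{sym}}$ and of $P$ with $N = I - L^{\text{sym}}$, then the order-reversing map $\mu \mapsto 1-\mu$ to convert $i$-th smallest into $i$-th largest. The only addition is your explicit remark that $D$ is invertible, which the paper leaves implicit.
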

\begin{proof}
Observe that $L = D^{-1/2}L^{\text{sym}}D^{1/2}$, hence $L$ and $L^{\text{sym}}$ have the same spectrum. Similarly $P = D^{1/2}\left(I - L^{\text{sym}}\right)D^{-1/2}$ hence $P$ and $N = I - L^{\text{sym}}$ have the same spectrum. Thus if $\lambda$ is the $i$-th smallest eigenvalue of $L^{\text{sym}}$ it is the $i$-th largest (and hence the $(n-i)$-th smallest) eigenvalue of $I - L^{\text{sym}}$. 
\end{proof}

For any $S\subset V$, we denote by $G_{S}$ the induced sub-graph with vertices $S$ and 
edges all $\{i,j\}\in E$ with $i,j\in S$. By $A_{G_{S}}$ (resp. $L_{G_{S}}$) we mean the adjacency matrix (resp. Laplacian) of the graph $G_S$. Note that $L_{G_S}$ is not a submatrix of $L$!
For any  $S\subset [n]$ we define an {\it indicator vector} $\mathbf{1}_{S}\in \mathbb{R}^{n}$ by 
$(\mathbf{1}_{S})_{i} = 1$ if 
$i\in S$ and $(\mathbf{1}_{S})_{i} = 0$ otherwise. 
$|S|$ will always denote the cardinality of $S$. For any matrix 
$B$, by $B_{S}$ we mean the submatrix of $B$ consisting of the columns 
$b_i$ for all $i\in S$. 

\subsection{Compressive Sensing}
\label{subsection:CS}
Recall for any $\bfx\in\mathbb{R}^{n}$,  $\|\bfx\|_{0}:= |\text{supp}(\bfx)| = |\{ i: \ x_{i} \neq 0\}|$ is the sparsity of $\bfx$. 
If $\|\bfx\|_{0} \ll n$ we say that $\bfx$ is {\em sparse}. Cand\'{e}s, Donoho and 
their collaborators in \cite{D06, CRT06} pioneered the study 
of compressive sensing, which offers theoretical analysis 
and algorithmic tools for finding sparse solutions to linear systems $\Phi\bfx = \bfb$, for example by solving the  minimization problem:
\begin{equation}
\text{argmin} \|\Phi\bfx - \bfy \|_{2} \text{ subject to } \|\bfx \|_{0} \leq s,  
\label{eq:CSProblem}
\end{equation}
where  $\Phi\in\mathbb{R}^{m\times n}$ is referred to as the \emph{sensing matrix}. Typically, it is assumed that $m \leq n$ although this will not be the case in this paper.  There are many algorithms available to 
solve Problem \eqref{eq:CSProblem}, but the one we shall focus on is the {\tt SubspacePursuit} algorithm introduced in 
\cite{DM09}.

\begin{algorithm}
\caption{{\tt SubspacePursuit}, as presented in \cite{DM09}}
\label{algorithm:SP}
Input variables: measurement matrix $\Phi$, measurement vector $\bfy$, sparsity parameter $s$ and number of iterations $J$.
\begin{algorithmic}
\State Initialization:
	\State (1) $S^{(0)} = \mathcal{L}_{s}(\Phi^\top \bfy)$.
	\State (2) $\bfx^{(0)} = \argmin_{\bfz\in\mathbb{R}^{N}}\{ \|\bfy - \Phi\bfz\|_{2}: \ \supp(\bfz) \subset S^{(0)}\}$
	\State (3) $\bfr^{(0)} = \bfy - \Phi\bfx^{(0)}$
	\For{$j = 1:J$}  
	\State (1) $\hat{S}^{(j)} = S^{(j-1)}\cup\mathcal{L}_{s}\left(\Phi^\top \bfr^{(j-1)}\right)$
	\State (2) $\displaystyle \bfu = \argmin_{\bfz\in\mathbb{R}^{N}}\{  \|\bfy - \Phi\bfz\|_{2} :\ \supp(\bfz) \subset \hat{S}^{(j)}\}$
	\State (3) $S^{(j)} = 	\mathcal{L}_{s}(\bfu)$ and $\bfx^{(j)} = \mathcal{H}_{s}(\bfu)$
	\State (4) $\bfr^{(j)} = \bfy - \Phi\bfx^{(j)}$
	\EndFor
\end{algorithmic}
\end{algorithm}

Here $\mathcal{L}_{s}(\cdot)$ and $\mathcal{H}_{s}(\cdot)$ are thresholding operators:
\begin{align*}
& \mathcal{L}_{s}(\bfv) := \{i\in [n]: \ v_i \text{ among } s \text{ largest-in-magnitude entries in } \bfv \}\\
& \mathcal{H}_{s}(\bfv)_{i} := \left\{\begin{array}{cc} v_i & \text{ if } i \in \mathcal{L}_{s}(\bfv) \\ 0 & \text{ otherwise. } \end{array} \right.
\end{align*}
In quantifying whether \eqref{eq:CSProblem} has a unique solution, the following constant is often used (see \cite{FR13}) 
\begin{definition}
The $s$ Restricted Isometry Constant ($s$-RIC) of $\Phi\in\mathbb{R}^{m\times n}$, written $\delta_{s}(\Phi)$, is defined to be the smallest value of $\delta > 
0$ such that, for all $\bfx\in\mathbb{R}^{n}$ with $\|\bfx\|_{0} \leq s$, we have:
\begin{equation*}
(1-\delta)\|\bfx\|_{2}^{2} \leq \|\Phi\bfx\|_{2}^{2} \leq (1+\delta)\|\bfx\|_{2}^{2}. 
\end{equation*}
If $\delta_{s}(\Phi) < 1$ we often say that $\Phi$ has the \emph{Restricted Isometry Property} (RIP).
\end{definition}
One of the reasons for the remarkable usefulness of compressive sensing is its robustness to error, both additive (\emph{i.e.} in $\bfy$) and multiplicative (\emph{i.e.} in $\Phi$).  More precisely, suppose that a signal $\hat{\bfy} = \hat{\Phi}\bfx^{*}$ is acquired, but that we do not know the sensing matrix $\hat{\Phi}$ exactly. Instead, we have access only to $\Phi = \hat{\Phi} + M$, for some small perturbation $M$. Suppose further that there is some noise in the measurement process, so that the signal we actually receive is $\bfy = \hat{\bfy} + \bfe$. Can one hope to approximate a sparse vector $\bfx^{*}$ from $\bfy$, given only $\Phi$? This question is answered in the affirmative way by several authors, starting with the work of \cite{HS10}. For {\tt SubspacePursuit}, we have the following result (cf. \cite{Li16}):

\begin{theorem}
\label{thm:PerturbedSP}
Let $\bfx^{*}$, $\bfy$ $\hat{\bfy}$, $\Phi$ and $\hat{\Phi}$ be as above and suppose that $\|\bfx^{*}\|_{0} \leq s$. For any $t\in [n]$, let $\delta_{t} := \delta_{t}(\Phi)$. Define the following constants:
\begin{equation*}
 \epsilon_{\bfy} := \|\bfe\|_{2}/\|\hat{\bfy}\|_{2} \text{ and } \epsilon^{s}_{\Phi} = \|M\|_{2}^{(s)}/\|\hat{\Phi}\|_{2}^{(s)}
\end{equation*}
where for any matrix $B$, $\|B\|_{2}^{(s)} := \max\{ \|B_{S}\|_{2}: \ S\subset [n] \text{ and } |S| = s\}$. Define further: 

\begin{equation*}
\rho = \frac{\sqrt{2\delta_{3s}^{2}(1+\delta_{3s}^{2})}}{1 - \delta_{3s}^{2}} \quad \text{ and } \quad \tau = \frac{(\sqrt{2} + 2)\delta_{3s}}{\sqrt{1 - \delta_{3s}^{2}}}(1 - \delta_{3s})(1 - \rho) + 
\frac{2\sqrt{2}+1}{(1 - \delta_{3s})(1-\rho)} 
\end{equation*}
Assume $\delta_{3s} \leq 0.4859$ and let $\bfx^{(m)}$ be the output of {\tt 
SubspacePursuit} applied to Problem \eqref{eq:CSProblem} after $m$ iterations. Then: 
\begin{equation*}
\frac{\|\bfx^{*} - \bfx^{(m)}\|_{2}}{\|\bfx^{*}\|_{2}} \leq \rho^{m} + \tau\frac{\sqrt{1 + \delta_{s}}}{1 - \epsilon^{s}_{\Phi}}(\epsilon^{s}_{\Phi} + \epsilon_{\bfy}).
\end{equation*}
\end{theorem}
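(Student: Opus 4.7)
The plan is to reduce the simultaneously perturbed problem (errors in both $\Phi$ and $\bfy$) to a standard problem with perturbed measurements only, and then invoke the known convergence analysis of \texttt{SubspacePursuit} in the presence of additive noise. Concretely, I rewrite
\begin{equation*}
\bfy \;=\; \hat{\bfy} + \bfe \;=\; \hat{\Phi}\bfx^{*} + \bfe \;=\; \Phi\bfx^{*} + \bigl(\bfe - M\bfx^{*}\bigr) \;=:\; \Phi\bfx^{*} + \tilde{\bfe},
\end{equation*}
so the algorithm is effectively being run on the exact sensing matrix $\Phi$ but with measurement noise $\tilde{\bfe}$. This is the key reinterpretation; everything else is bookkeeping.

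Next I would quote the standard error bound for \texttt{SubspacePursuit} applied to $\bfy = \Phi \bfx^{*} + \tilde{\bfe}$ under the RIP assumption $\delta_{3s}(\Phi) \le 0.4859$ (this is precisely the form proved in \cite{DM09} and sharpened in \cite{Li16}), which reads
\begin{equation*}
\|\bfx^{*} - \bfx^{(m)}\|_{2} \;\le\; \rho^{m}\,\|\bfx^{*}\|_{2} \;+\; \tau\,\|\tilde{\bfe}\|_{2},
\end{equation*}
with $\rho$ and $\tau$ exactly the constants defined in the statement. This is the single deepest input to the argument and I would treat it as a black box rather than re-deriving it; producing those sharp constants is the only genuinely technical step.

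The remaining work is to translate $\|\tilde{\bfe}\|_{2}$ back into the quantities $\epsilon_{\bfy}$ and $\epsilon_{\Phi}^{s}$ appearing in the statement. By the triangle inequality
\begin{equation*}
\|\tilde{\bfe}\|_{2} \;\le\; \|\bfe\|_{2} + \|M\bfx^{*}\|_{2}.
\end{equation*}
Since $\bfx^{*}$ is $s$-sparse with some support $S^{*}$, I bound $\|M\bfx^{*}\|_{2} = \|M_{S^{*}}\bfx^{*}_{S^{*}}\|_{2} \le \|M\|_{2}^{(s)}\,\|\bfx^{*}\|_{2} = \epsilon_{\Phi}^{s}\,\|\hat{\Phi}\|_{2}^{(s)}\,\|\bfx^{*}\|_{2}$, and similarly $\|\bfe\|_{2} = \epsilon_{\bfy}\,\|\hat{\bfy}\|_{2} \le \epsilon_{\bfy}\,\|\hat{\Phi}\|_{2}^{(s)}\,\|\bfx^{*}\|_{2}$, using that $\hat{\bfy}=\hat{\Phi}\bfx^{*}$ and $\bfx^{*}$ is $s$-sparse. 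This gives $\|\tilde{\bfe}\|_{2} \le (\epsilon_{\Phi}^{s}+\epsilon_{\bfy})\,\|\hat{\Phi}\|_{2}^{(s)}\,\|\bfx^{*}\|_{2}$.

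Finally, I need to replace $\|\hat{\Phi}\|_{2}^{(s)}$ (an unknown) by something expressible through $\delta_{s}(\Phi)$. From $\hat{\Phi} = \Phi - M$ and the definition of $\epsilon_{\Phi}^{s}$,
\begin{equation*}
\|\hat{\Phi}\|_{2}^{(s)} \;\le\; \|\Phi\|_{2}^{(s)} + \|M\|_{2}^{(s)} \;=\; \|\Phi\|_{2}^{(s)} + \epsilon_{\Phi}^{s}\,\|\hat{\Phi}\|_{2}^{(s)},
\end{equation*}
so rearranging yields $\|\hat{\Phi}\|_{2}^{(s)} \le \|\Phi\|_{2}^{(s)}/(1-\epsilon_{\Phi}^{s})$, and the RIP immediately gives $\|\Phi\|_{2}^{(s)} \le \sqrt{1+\delta_{s}}$. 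Substituting these bounds into the \texttt{SubspacePursuit} estimate and dividing through by $\|\bfx^{*}\|_{2}$ produces exactly the inequality in the statement. The only subtle point in this last block is that the bound on $\|\hat{\Phi}\|_{2}^{(s)}$ implicitly requires $\epsilon_{\Phi}^{s}<1$, which is tacitly assumed by the appearance of $1-\epsilon_{\Phi}^{s}$ in the denominator of the theorem's bound.
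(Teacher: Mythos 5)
Your proposal is correct, and it is essentially the route behind the paper's own (one-line) proof: the paper simply cites Corollary 1 of Li \cite{Li16}, and your argument — rewriting $\bfy = \Phi\bfx^{*} + (\bfe - M\bfx^{*})$, invoking the noisy {\tt SubspacePursuit} bound with constants $\rho,\tau$ under $\delta_{3s}(\Phi)\le 0.4859$, and converting $\|\bfe - M\bfx^{*}\|_2$ into $(\epsilon^{s}_{\Phi}+\epsilon_{\bfy})\sqrt{1+\delta_s}/(1-\epsilon^{s}_{\Phi})\,\|\bfx^{*}\|_2$ via $\|\hat{\Phi}\|_2^{(s)}\le \|\Phi\|_2^{(s)}/(1-\epsilon^{s}_{\Phi})$ — is precisely the Herman--Strohmer-style reduction underlying that corollary. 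The only caveat is that the black-boxed additive-noise bound with those exact constants and the $0.4859$ threshold is itself the substantive content of \cite{Li16} (the original \cite{DM09} constants are weaker), so your proof is a faithful reconstruction of the citation rather than an independent derivation of its hardest step.
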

\begin{proof}
This is Corollary 1 in \cite{Li16}. Note that our convention on hats is different to theirs --- our $\Phi$ is their $\hat{\Phi}$, hence our $\rho$ is their $\hat{\rho}$ and so on.
\end{proof}

Next it is easy to obtain bounds on the quantity $\|B\|_{2}^{(s)}:= \max_{S\subset [n] \atop |S| = s}\|B_{S}\|_{2}$:
\begin{lemma}
For any matrix $B$ and any $2 \leq s \leq n$ we have that $\sigma_{s-1}(B) \leq \|B\|_{2}^{(s)} \leq \sigma_{\max}(B) = \|B\|_2$, where $\sigma_{j}(B)$ denotes the $j$-th smallest singular value of $B$.
\label{lemma:s_norm_bound_with_eigenvalues}
\end{lemma}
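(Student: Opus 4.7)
The plan is to establish the two inequalities separately. The upper bound is essentially a restriction argument: for any $S \subset [n]$ with $|S| = s$ and any $\mathbf{v} \in \mathbb{R}^s$, I would embed $\mathbf{v}$ into $\tilde{\mathbf{v}} \in \mathbb{R}^n$ by zero-padding on the complement of $S$, so that $B_S \mathbf{v} = B\tilde{\mathbf{v}}$ and $\|\tilde{\mathbf{v}}\|_2 = \|\mathbf{v}\|_2$. This immediately yields $\|B_S \mathbf{v}\|_2 \leq \|B\|_2\|\mathbf{v}\|_2$, hence $\|B_S\|_2 \leq \|B\|_2$; taking the maximum over admissible $S$ gives the upper bound $\|B\|_2^{(s)} \leq \sigma_{\max}(B)$.

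The lower bound is the more interesting half, and the Cauchy interlacing theorem is tailor-made for it. The key algebraic observation is that $B_S^\top B_S$ is \emph{exactly} the $s \times s$ principal submatrix of the $n \times n$ Gram matrix $B^\top B$ obtained by restricting rows and columns to the index set $S$, since $(B^\top B)_{ij} = \langle b_i, b_j\rangle$ depends only on the columns $b_i$, $b_j$ of $B$. The eigenvalues of $B^\top B$ are precisely $\sigma_1^2(B) \leq \sigma_2^2(B) \leq \cdots \leq \sigma_n^2(B)$, and the largest eigenvalue of $B_S^\top B_S$ is $\|B_S\|_2^2$. Cauchy interlacing applied to the Hermitian matrix $B^\top B$ and its $s \times s$ principal submatrix then yields $\|B_S\|_2^2 \geq \sigma_s^2(B)$ for \emph{every} subset $S$ of size $s$. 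Taking the maximum over $S$ and using $\sigma_s(B) \geq \sigma_{s-1}(B)$ delivers the claimed lower bound.

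I do not anticipate a substantive obstacle, since everything reduces to two standard linear-algebra facts once one spots the correspondence between $B_S^\top B_S$ and a principal submatrix of $B^\top B$. In fact this argument gives the slightly sharper bound $\|B\|_2^{(s)} \geq \sigma_s(B)$; one might suspect the authors state the weaker form $\sigma_{s-1}(B) \leq \|B\|_2^{(s)}$ because it is all that is needed downstream when bounding $\epsilon_{\Phi}^{s}$ in Theorem \ref{thm:PerturbedSP}. It is also worth noting that the inequality is vacuous but still correct in the degenerate regime $s - 1 > \min(m,n)$, where $\sigma_{s-1}(B) = 0$ by convention, so no dimensional hypothesis on $B$ is required.
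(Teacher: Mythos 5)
Your proof is correct and follows essentially the same route as the paper: both rest on interlacing for the column submatrix $B_S$, the paper by invoking Thompson's singular-value interlacing theorem directly, you by reducing to Cauchy eigenvalue interlacing for the principal submatrix $B_S^\top B_S$ of the Gram matrix $B^\top B$ (which is the standard proof of that theorem), together with a zero-padding argument for the upper bound in place of the interlacing upper estimate. As you note, this gives the marginally sharper lower bound $\sigma_s(B) \leq \|B\|_2^{(s)}$, which implies the stated $\sigma_{s-1}(B) \leq \|B\|_2^{(s)}$; the only quibble is your closing aside --- under the convention implicit in your own argument, namely that $B$ has $n$ singular values given by the eigenvalues of $(B^\top B)^{1/2}$, the zeros sit at the bottom of the increasing order, so $\sigma_{s-1}(B)$ need not equal $0$ when $s-1 > \min(m,n)$ --- but nothing in the main argument depends on that remark, and in the paper's application $B$ is square in any case.
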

\begin{proof}
Observe that, for any matrix $B$,
\begin{equation*}
\|B\|_{2}^{(s)}= \max_{S\subset [n] \atop |S| = s}\|B_{S}\|_{2} = \max_{S\subset [n] \atop |S| = s}\sigma_{\max}(B_{S}), 
\end{equation*}
where $\sigma_{\max}(B_{S})$ denotes the maximum singular value of $B_S$. Because $\sigma_{\max}(B_{S}) = \sigma_{s}(B_S)$, by the interlacing theorem for singular values (cf. \cite{T72})  $\sigma_{s-1}(B) \leq \sigma_{\max}(B_{S}) \leq \sigma_{\max}(B)$.
\end{proof}

\subsection{The Data Model}
\label{sec:DataModel}
For conceptual clarity, we shall take an asymptotic viewpoint, and consider graphs $G\in 
 \mathcal{G}_{n}$ as $n\to\infty$. Note that the graphs under consideration may be weighted or unweighted. We say that a graph property $P$ holds {\em almost surely} for $\mathcal{G}_n$ if the probability of a $G$ drawn from $\mathcal{G}_n$ {\em not having} $P$ is $o(1)$. 

\begin{Assumptions} Suppose that  there exist $\epsilon_{i}  = o(1)$ as $n\to \infty$
for $i=1,2, 3$ such that for all $G\in \mathcal{G}_n$:
\end{Assumptions}

\begin{enumerate}
\item [(A1)] $V = C_1\cup\ldots\cup C_k$ where the $C_{a}$ are disjoint, planted clusters and $k$ is $O(1)$ as $n\to \infty$.
\item[(A2)] For all $a\in [k]$ we have that $\lambda_{2}(L_{G_{C_a}}) \geq 1-\epsilon_1$ 
and $\lambda_{n_a}(L_{G_{C_a}}) \leq 1+\epsilon_1$ almost surely.
\item[(A3)] letting $r_{i}:= d^{\text{out}}_i/d^{\text{in}}_i$,  $r_{i} \leq \epsilon_2$ 
for all $i\in [n]$ almost surely.
\item[(A4)] If $d^{\text{in}}_{\text{av}} := \mathbb{E}[d^{\text{in}}_{i}]$ then  
$d^{\text{in}}_{\max} \leq (1+\epsilon_{3}) d^{\text{in}}_{\text{av}}$ and 
$d^{\text{in}}_{\min} \geq (1-\epsilon_{3}) d^{\text{in}}_{\text{av}}$ almost surely.
\end{enumerate}

Note that we can think of (A1)--(A4) as ``regularity'' requirements for graphs; 
as they insist that degrees do not  
vary too wildly, and that the eigenvalues are well behaved. In \S \ref{sec:SBM} we verify that a common 
model of unweighted graphs with clusters---the stochastic block model---satisfies these 
assumptions, so they are certainly not too restrictive. It seems probable (and indeed supported by the numerical evidence of \S \ref{sec:NumericalExperiments}) that reasonable models of random weighted graphs satisfy these properties too, although we leave this for future work. 

\section{The {\tt ClusterPursuit} Algorithm}
\label{sec:ClusterPursuit}
The motivation for our algorithm is the following observation. Suppose for a moment that one had access to $L^{\text{in}}$. Suppose further that one is given a cut $\Omega$ ``near'' a cluster of interest, $C_a$, which we shall take quantitatively to mean that $|C_a\bigtriangleup \Omega| = \epsilon |C_a|$, where $\bigtriangleup$ denotes the symmetric difference, i.e $C \bigtriangleup  \Omega= (C\backslash \Omega) \cup (\Omega\backslash C)$ 
and $\epsilon \in (0,1)$. Letting $U = C_a\setminus \Omega$ and $W = \Omega\setminus C_a$ one observes that:
\begin{align*}
	& \bfone_{\Omega} = \bfone_{C_a} + \bfone_{W} - \bfone_{U} \\
	\implies & L^{\text{in}}\bfone_{\Omega} =  L^{\text{in}}\bfone_{C_a} + L^{\text{in}}\left(\bfone_{W} - \bfone_{U}\right) \\
	\implies  & L^{\text{in}}\bfone_{\Omega} = 0 + L^{\text{in}}\left(\bfone_{W} - \bfone_{U}\right) \quad \text{ (by Theorem \ref{thm:EvecFormBasis})} \\
	\implies & \bfy^{\text{in}} = L^{\text{in}}\left(\bfone_{W} - \bfone_{U}\right) \quad \text{ (if  $\bfy^{\text{in}} :=  L^{\text{in}}\bfone_{\Omega}$)}
\end{align*}
Solving the linear system $\bfy^{\text{in}} = L^{\text{in}}\bfx$ is unlikely to yield $\bfx = \bfone_{W} - \bfone_{U}$, as $L^{\text{in}}$ has a large kernel (Theorem \ref{thm:EvecFormBasis}). However, Theorem \ref{thm:Unique_Sol_Unperturbed} will show that one may recover $\bfone_{W} - \bfone_{U}$ as the solution to the sparse recovery problem:
\begin{equation}
\argmin_{\bfx\in\mathbb{R}^{n}} \left\{ \|L^{\text{in}}\bfx - \bfy^{\text{in}}\|_{2}: \ \|\bfx\|_{0} \leq s \right\}
\label{eq:UnperturbedSparseRecovery}
\end{equation}
where $s \approx |C_a\bigtriangleup \Omega|$. Of course, one will not in practice have access to $L^{\text{in}}$, only $L$. Thus one needs to consider a perturbed version of \eqref{eq:UnperturbedSparseRecovery}:

\begin{equation}
\argmin_{\bfx\in\mathbb{R}^{n}} \left\{ \|L\bfx - \bfy\|_{2}: \ \|\bfx\|_{0} \leq s \right\}
\label{eq:PerturbedSparseRecovery}
\end{equation}
where $\bfy = L\bfone_{\Omega}$. Theorem \ref{thm:SCP_ClusterPursuitWorks} will show that the solution 
$\bfx^{\#}$ to the minimization problem  \eqref{eq:PerturbedSparseRecovery} found by {\tt SubspacePursuit}  is a good enough approximation to $\bfone_{U} - \bfone_{W}$, hence one may infer $U$ and $W$ from the signed support of $\bfx^{\#}$. Clearly, if one knows $\Omega, U$ and $W$ one may reconstruct $C_a$ as $C_a = \left(\Omega\setminus W\right)\cup U$. This is the essence of {\tt ClusterPursuit}, which we present as Algorithm \ref{alg:ClusterPursuit}.

\begin{algorithm}
   \caption{{\tt ClusterPursuit}}
   \label{alg:ClusterPursuit}
\begin{algorithmic}
   \State {\bfseries Input:} Adjacency matrix $A$, initial cut $\Omega$, estimate $s \approx |\Omega\bigtriangleup C_a|$ and $R\in [0,1)$. 
   \State (1) Compute $L = I - D^{-1}A$ and $\bfy = L\bfone_{\Omega}$. 
   \State (2) Let $\bfx^{\#}$ be the solution to 
   		\begin{equation}
   			\text{argmin}\{\| L\bfx - \bfy\|_{2}:\ \|\bfx\|_{0} 
\leq s \}
   			\label{eq:CP_eq3}
   		 \end{equation}
   		 obtained after $m = O(\log(n))$ iterations of {\tt SubspacePursuit}.
   \State (3) Let $U^{\#} = \{i: \ x_{i}^{\#} < -R\}$ and   $W^{\#} = \{ i: \ x_{i}^{\#} > R \}$.
   \State {\bfseries Output: }  $C_{a}^{\#} = \left(\Omega\setminus W^{\#}\right)\cup U^{\#}$.
\end{algorithmic}
\end{algorithm}

\begin{remark}
{\tt ClusterPursuit} requires as an input an estimate of $|\Omega\bigtriangleup C_a|$, which might not always be available. This is less of an issue than it might first appear as: 
\end{remark}

\begin{enumerate}
	\item Theorem \ref{thm:SCP_ClusterPursuitWorks} will show that as long as $|\Omega\bigtriangleup C_a| \leq s \leq  0.13|C_a|$ {\tt ClusterPursuit} works well. 
	\item If no knowledge of $|\Omega\bigtriangleup C_a|$ is available, one may run {\tt ClusterPursuit} for various values of $s$ and keep the returned cluster with lowest conductance. 
	\item Alternatively, one could consider the Lasso form of problem \eqref{eq:CP_eq3}: 
\begin{equation}
\label{lasso}
	\text{argmin} \left\{ \| L\bfx - \bfy\|_{2} + \lambda\|\bfx\|_1 \right\}= 
	\text{argmin} \left\{ \| L\bfx - \bfy\|_{2} + \lambda\|\bfx\|_0 \right\}
\end{equation}
as the sparse solution is the cluster indicator $\bfone_U-\bfone_W$ which satisfies  $\|\bfx\|_0=\|\bfx\|_1$. We do not analyze this further here. 
\end{enumerate}

\begin{theorem}
$\bfone_{W} - \bfone_{U}$ is the unique solution to Problem \eqref{eq:UnperturbedSparseRecovery}, for any graph $G$ with clusters $C_1,\ldots, C_k$, as long as $|C_a\bigtriangleup \Omega| \leq s < n_1/2$.
\label{thm:Unique_Sol_Unperturbed}
\end{theorem}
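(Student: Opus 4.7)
The plan is to establish existence and uniqueness separately, with uniqueness following from a careful case analysis using the kernel description given by Theorem \ref{thm:EvecFormBasis}.

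First, I would verify that $\bfx^{*} := \bfone_{W} - \bfone_{U}$ is a feasible point achieving zero residual. By construction $\bfone_{\Omega} = \bfone_{C_a} + \bfone_{W} - \bfone_{U}$, and $L^{\text{in}}\bfone_{C_a} = 0$ since $C_a$ is a connected component of $G^{\text{in}}$; hence $\bfy^{\text{in}} = L^{\text{in}}\bfx^{*}$. Also $\|\bfx^{*}\|_{0} = |U| + |W| = |C_a \bigtriangleup \Omega| \leq s$, so $\bfx^{*}$ is feasible and optimal.

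For uniqueness, suppose $\bfx'$ is any other feasible point with $\|L^{\text{in}}\bfx' - \bfy^{\text{in}}\|_{2} = 0$. Then $\bfx' - \bfx^{*} \in \ker(L^{\text{in}})$, and Theorem \ref{thm:EvecFormBasis} applied to $G^{\text{in}}$ (whose connected components include $C_1,\ldots,C_k$) gives
\begin{equation*}
\bfx' = \bfx^{*} + \sum_{b=1}^{k} c_{b}\bfone_{C_b}
\end{equation*}
for some scalars $c_b$. The plan is to show every $c_b = 0$ by analyzing the restriction to each cluster. Writing $W_b := W \cap C_b$ for $b \neq a$ (so $\bfx^{*}|_{C_b} = -\bfone_{W_b}$) and $\bfx^{*}|_{C_a} = -\bfone_{U}$, one has
\begin{equation*}
\bfx'|_{C_b} = (c_b - 1)\bfone_{W_b} + c_b\bfone_{C_b \setminus W_b} \quad (b \neq a), \qquad \bfx'|_{C_a} = (c_a - 1)\bfone_{U} + c_a \bfone_{C_a \setminus U}.
\end{equation*}

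The key observation is that $n_b \geq n_1 > 2s$, together with $|W_b|,|U| \leq s$. If $c_b \notin \{0,1\}$ for some $b$, then the support of $\bfx'|_{C_b}$ is all of $C_b$, contributing $n_b > 2s > s$ nonzero entries and violating $\|\bfx'\|_{0} \leq s$. If $c_b = 1$ for some $b \neq a$, the support on $C_b$ has size $n_b - |W_b| \geq n_1 - s > s$, again a contradiction; the same bound rules out $c_a = 1$. Hence $c_b = 0$ for every $b$ and $\bfx' = \bfx^{*}$.

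The only subtle step is the case $c_b = 1$, which is what forces the strict inequality $s < n_1/2$ rather than $s \leq n_1$; the easier case $c_b \notin \{0,1\}$ would actually go through under $s < n_1$ alone. Otherwise the argument is a direct cardinality count, so I expect no genuine obstacle beyond bookkeeping.
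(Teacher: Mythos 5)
Your proof is correct and follows essentially the same route as the paper's: decompose the difference of two zero-residual feasible points in the kernel basis $\{\bfone_{C_b}\}$ (the paper decomposes $\bfv - \bfone_{\Omega}$, which is the same thing up to a shift of the coefficient on $C_a$) and rule out nonzero coefficients by counting supports against $s < n_1/2$. One small slip that does not affect the argument: since $W = \Omega\setminus C_a$ lies outside $C_a$, the restriction is $\bfx^{*}|_{C_b} = +\bfone_{W_b}$ rather than $-\bfone_{W_b}$, so the special coefficient value to exclude on $C_b$ ($b\neq a$) is $c_b = -1$ instead of $c_b = 1$; the cardinality counts are identical.
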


\begin{proof}
One can easily verify that $\bfone_{W} - \bfone_{U}$ is {\em a} solution to \eqref{eq:UnperturbedSparseRecovery}, thus it remains to show that it is the unique one. So, suppose that $\bfv$ satisfies $L^{\text{in}}\bfv = \bfy^{\text{in}}$ and that $\bfv \neq \bfone_{W} - \bfone_{U}$. Because $\bfy^{\text{in}} = L^{\text{in}}\bfone_{\Omega}$:
\begin{align*}
& L^{\text{in}}\bfv - L^{\text{in}}\bfone_{\Omega} = 0 \implies \bfv - \bfone_{\Omega} \in \text{ker}(L^{\text{in}}) \implies \bfv - \bfone_{\Omega} = \sum_{b = 1}^{k}\alpha_{b}\bfone_{C_b} \quad \text{( by Theorem \ref{thm:EvecFormBasis})} \\
\implies & \bfv = \sum_{b=1}^{k} \alpha_{b}\bfone_{C_{b}\setminus\Omega} + \sum_{b=1}^{k}(\alpha_{b} + 1)\bfone_{C_{b}\cap\Omega}
\end{align*}
Now if $\alpha_{a} = -1$ and $\alpha_{b} =  0$ for all $b\neq a$ then $\bfv = \bfone_{W} - \bfone_{U}$, which we are assuming is not the case. Hence either $\alpha_{a} \neq -1$, in which case $\|\bfv\|_{0} \geq |C_a\cap\Omega| \geq |C_a| - |C_a\bigtriangleup\Omega| > n_a/2$, or $\alpha_{b} \neq 0$ for $b \neq a$ in which case $\|\bfv\|_{0} \geq |C_{b}\setminus\Omega| \geq |C_b| - |C_a\bigtriangleup\Omega| > n_b/2$ as we are assuming that $|C_a\bigtriangleup\Omega| < n_1/2$ and $n_1 = \min_{b} n_b$. By assumption, $s <  n_1/2$ hence in either case $\bfv$ is infeasible for Problem \eqref{eq:UnperturbedSparseRecovery}, as it does not satisfy the constraint $\|\bfv\|_{0} \leq s$.
\end{proof}

Henceforth, we shall focus on recovering the smallest cluster, $C_1$. We do this to avoid a technical complication in the estimation of $\delta_{\gamma n_a}(L)$ for $a > 1$ (see Theorem \ref{thm:BoundAllQuantities} and Remark \ref{remark:Explaining_why_smallest_cluster}). We note that as long as $n_a \approx n_1$ this is not really an issue, and the proof of Theorem \ref{thm:SCP_ClusterPursuitWorks} will extend to this case, albeit with a tighter bound on $\epsilon$.\\

 Let us now quantify the size of the perturbation in moving from \eqref{eq:UnperturbedSparseRecovery} to \eqref{eq:PerturbedSparseRecovery}. Define $M := L - L^{\text{in}}$ and $\bfe := \bfy - \bfy^{\text{in}}$.  Recall 
from Theorem \ref{thm:PerturbedSP}, that the three key parameters in perturbed compressive 
sensing are the restricted isometry constant of $L$ and:
\begin{equation}
\epsilon_{\bfy} = \frac{\|\bfe\|_{2}}{\|\bfy^{\text{in}}\|_{2}} \quad \text{ 
and } \quad \epsilon^{s}_{L} = \frac{\|M\|_{2}^{(s)}}{\|L^{\text{in}}
\|_{2}^{(s)}}
\label{eq:BoundEpsilons}
\end{equation}
as well as two secondary quantities, $\rho$ and $\tau$. We prove the following:
\begin{theorem}
Suppose that $\mathcal{G}_n$ satisfies (A1)--(A4) and that $|\Omega\bigtriangleup C_1|\leq 0.13n_1$. Then for any $\gamma \in (0,1)$ the following hold almost surely:
\label{thm:BoundAllQuantities}
\end{theorem}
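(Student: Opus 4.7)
The plan is to treat the three claims uniformly via the decomposition $L = L^{\text{in}} + M$, with $M := L - L^{\text{in}}$, reducing everything to two intermediate estimates: an RIP-type bound for $L^{\text{in}}$ obtained from its block-diagonal structure together with (A2), and an $\ell_2$ operator-norm bound on $M$ obtained from (A3) and (A4).

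For the first estimate, I would exploit that $L^{\text{in}}$ is block-diagonal across clusters, with the $a$-th diagonal block equal to $L_{G_{C_a}}$. Given a vector $\bfx$ supported on $S$ with $|S| \leq \gamma n_1$, I would decompose $\bfx = \sum_a \bfx_a$ along clusters and further split $\bfx_a = \alpha_a n_a^{-1/2}\bfone_{C_a} + \bfx_a^\perp$. Theorem \ref{thm:EvecFormBasis} annihilates the $\bfone_{C_a}$ component, while (A2) forces the action of $L_{G_{C_a}}$ on $\bfx_a^\perp$ into $[1-\epsilon_1, 1+\epsilon_1]$. A Cauchy--Schwarz step bounds $\alpha_a^2 \leq (|S\cap C_a|/n_a)\|\bfx_a\|_2^2 \leq \gamma\|\bfx_a\|_2^2$ (using $|S\cap C_a|\leq s \leq \gamma n_1 \leq \gamma n_a$, where the last inequality crucially uses the ordering $n_1 \leq n_a$), so summing over clusters yields $\delta_{\gamma n_1}(L^{\text{in}}) \leq \gamma + O(\epsilon_1)$.

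For the second estimate, I would expand $M = [(D^{\text{in}})^{-1} - D^{-1}]A^{\text{in}} - D^{-1}A^{\text{out}}$ and use the identity $1/d_i^{\text{in}} - 1/d_i = r_i/((1+r_i)d_i^{\text{in}})$. Then (A3) bounds every row sum of $|M|$ by $O(\epsilon_2)$, so $\|M\|_\infty = O(\epsilon_2)$; for the column sums, (A4) is invoked to compare each $d_j^{\text{in}}$ to $d^{\text{in}}_{\min}$, yielding $\|M\|_1 = O(\epsilon_2)$ as well. The interpolation $\|M\|_2 \leq \sqrt{\|M\|_1\|M\|_\infty}$ then gives $\|M\|_2 = O(\epsilon_2)$, and \emph{a fortiori} $\|M\|_2^{(s)} \leq \|M\|_2 = O(\epsilon_2)$ for every $s$.

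The three assertions then fall out as follows. For the RIC of $L$, the triangle inequality $\bigl|\,\|L\bfx\|_2 - \|L^{\text{in}}\bfx\|_2\,\bigr| \leq \|M\|_2\|\bfx\|_2$ combined with the two estimates above gives $\delta_{\gamma n_1}(L)\leq \gamma + O(\epsilon_1+\epsilon_2)$. For $\epsilon_\bfy$, the numerator is $\|\bfe\|_2 = \|M\bfone_\Omega\|_2 \leq \|M\|_2\sqrt{|\Omega|}$ (with $|\Omega|\leq 1.13 n_1$ from the hypothesis), and the denominator is $\|L^{\text{in}}(\bfone_W - \bfone_U)\|_2$, lower-bounded by applying the Step-1 RIP to the $(|U|+|W|)$-sparse signed vector $\bfone_W - \bfone_U$ (a legitimate application since $|U|+|W| \leq 0.13 n_1$). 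For $\epsilon^s_L$, Lemma \ref{lemma:s_norm_bound_with_eigenvalues} supplies $\|L^{\text{in}}\|_2^{(s)} \geq \sigma_{s-1}(L^{\text{in}}) \geq \sqrt{1-\delta_{s-1}(L^{\text{in}})}$, and the ratio is $O(\epsilon_2)$. The main obstacle will be the careful bookkeeping in Step 2: a row-sum bound on $M$ alone is insufficient, and column sums must be separately controlled to upgrade to a genuine $\ell_2$ bound --- precisely the role of (A4). A secondary subtlety is that the $(1-\gamma)$ loss in Step 1 must remain uniform across every cluster intersecting $S$, which is the reason the statement is restricted to $C_1$, as anticipated in Remark \ref{remark:Explaining_why_smallest_cluster}.
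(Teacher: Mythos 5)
The main gap is in your Step 1. You decompose $\bfx_a = \alpha_a n_a^{-1/2}\bfone_{C_a} + \bfx_a^{\perp}$ with $\bfx_a^{\perp}\perp \bfone_{C_a}$ and assert that (A2) alone forces $\|L_{G_{C_a}}\bfx_a^{\perp}\|_2$ into $[(1-\epsilon_1)\|\bfx_a^{\perp}\|_2,(1+\epsilon_1)\|\bfx_a^{\perp}\|_2]$. This is unjustified: $L_{G_{C_a}} = I - D_{G_{C_a}}^{-1}A_{G_{C_a}}$ is not symmetric, so its eigenvalue bounds do not control its norm on the orthogonal complement of its kernel. The correct move is to conjugate, $L = D^{-1/2}L^{\text{sym}}D^{1/2}$, and there the kernel direction is $D^{1/2}\bfone_{C_a}$, not $\bfone_{C_a}$; a sparse vector can have a sizeable component along $D^{1/2}\bfone_{C_a}$ even after you remove its $\bfone_{C_a}$-component, and the conjugation itself costs factors of $d_{\max}/d_{\min}$. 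Both effects are controlled only by the degree-regularity assumption (A4), which you explicitly reserve for Step 2. This is precisely how the paper proceeds (Lemma \ref{lemma:GenGraphDeltaBound} and Theorem \ref{theorem:Bound_RIP_L_in}): the per-block RIC bound there is $1-\lambda_2^2\bigl(\tfrac{d_{\min}}{d_{\max}} - \tfrac{d_{\max}}{d_{\min}}\tfrac{s}{n_0}\bigr)$, with the sparsity entering through $\langle \bfv,\bfone\rangle \le \sqrt{s}\|\bfv\|_2$ exactly as in your Cauchy--Schwarz step, and (A4) is then invoked to make the degree ratios $1+o(1)$. So your conclusion $\delta_{\gamma n_1}(L^{\text{in}})\le \gamma + O(\epsilon_1)$ is not established by your argument; the fix is available (use (A4) and work with $L^{\text{sym}}$), but it is a missing ingredient, not bookkeeping.

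Two smaller points. First, in Step 3 you write $\sigma_{s-1}(L^{\text{in}}) \ge \sqrt{1-\delta_{s-1}(L^{\text{in}})}$; an RIP bound does not lower-bound small singular values, since the bottom singular vectors need not be sparse (a projection annihilating $s-1$ ``spread'' directions has $\sigma_{s-1}=0$ but tiny RIC). The paper instead bounds $\sigma_{\gamma n_1 - 1}(L^{\text{in}})$ from the explicit spectrum of the block-diagonal $L^{\text{in}}$ (exactly $k$ zero eigenvalues, the rest at least $1-\epsilon_1$ by (A2)); alternatively $\|L^{\text{in}}\|_2^{(s)} \ge \max_i \|L^{\text{in}}e_i\|_2$ would do. Also, as in the paper's Theorem \ref{theorem:Bound_epsY_and_epsPhi}, the $\epsilon_{\bfy}=o(1)$ conclusion needs $|C_1\bigtriangleup\Omega|$ to be a constant fraction of $n_1$, which you (like the paper) leave implicit. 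Second, on the positive side, your bound $\|M\|_2 \le \sqrt{\|M\|_1\|M\|_\infty}$ with separate row- and column-sum estimates is sound and in fact tighter reasoning than the paper's own step in Theorem \ref{thm:MBound}, which bounds the spectral norm of the non-symmetric $M$ by its spectral radius via Gershgorin; keeping that part of your argument would be an improvement.
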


\begin{enumerate}
\item $\epsilon_{\bfy} = o(1)$ and $\epsilon^{ \gamma n_1}_{L} = o(1)$. 
\item $\delta_{\gamma n_1}(L) \leq \gamma + o(1)$. 
\item If $\delta_{3s}(L)\leq 0.45$ then $\rho \leq 0.8751$ and $\tau \leq 55.8490$ for any $s\in (0,n_1/3)$.
\end{enumerate}

\begin{proof}
Part (3) follows by direct computation. For parts (1) and (2) see Appendix~\ref{A:Part2Proof}. 
\end{proof}

We now prove the main result of this section:

\begin{theorem}
\label{thm:SCP_ClusterPursuitWorks}
Suppose that $A$ is the adjacency matrix of $G\sim \mathcal{G}_{n}$ satisfying assumptions (A1)--(A4), and that $\Omega$ satisfies $|C_1\bigtriangleup\Omega| = \epsilon n_1$ with $\epsilon \leq 0.13$. If $C^{\#}$ is the output of {\tt ClusterPursuit} when given inputs $A$, $\Omega$, $\epsilon n_1 \leq s \leq 0.13n_1$ and $R = 0.5$ then:
$$
\frac{\left|C_1\bigtriangleup C_1^{\#}\right|}{|C_1|} = o(1) \quad \text{ almost surely.}
$$
\end{theorem}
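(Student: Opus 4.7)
The plan is to instantiate the perturbed {\tt SubspacePursuit} guarantee of Theorem \ref{thm:PerturbedSP} with $\hat{\Phi} = L^{\text{in}}$, $\Phi = L$, $\hat{\bfy} = \bfy^{\text{in}} = L^{\text{in}}\bfone_{\Omega}$, $\bfy = L\bfone_{\Omega}$, and target vector $\bfx^{*} = \bfone_{W} - \bfone_{U}$, where $U = C_1\setminus\Omega$ and $W = \Omega\setminus C_1$. By Theorem \ref{thm:Unique_Sol_Unperturbed}, $\bfx^{*}$ is the unique solution to the unperturbed problem \eqref{eq:UnperturbedSparseRecovery} (since $\|\bfx^{*}\|_0 = |U|+|W| = \epsilon n_1 \leq s \leq 0.13 n_1 < n_1/2$). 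First I would verify the hypotheses of Theorem \ref{thm:PerturbedSP}: since $3s \leq 0.39 n_1$, Theorem \ref{thm:BoundAllQuantities}(2) gives $\delta_{3s}(L) \leq 0.39 + o(1) \leq 0.45 < 0.4859$ almost surely, so Theorem \ref{thm:BoundAllQuantities}(3) applies and yields $\rho \leq 0.8751$ and $\tau \leq 55.8490$. Theorem \ref{thm:BoundAllQuantities}(1) gives $\epsilon_{\bfy}, \epsilon^{s}_{L} = o(1)$. Choosing $m = \lceil c \log n\rceil$ with $c$ large enough makes $\rho^{m} = o(1)$, so Theorem \ref{thm:PerturbedSP} produces
\begin{equation*}
\|\bfx^{(m)} - \bfx^{*}\|_{2} \;\leq\; \left(\rho^{m} + \tau\,\frac{\sqrt{1+\delta_{s}}}{1-\epsilon^{s}_{L}}(\epsilon^{s}_{L}+\epsilon_{\bfy})\right)\|\bfx^{*}\|_{2} \;=\; o(1)\cdot\|\bfx^{*}\|_{2}.
\end{equation*}
Since $\|\bfx^{*}\|_{2} = \sqrt{|U|+|W|} \leq \sqrt{0.13\, n_1}$, this gives $\|\bfx^{(m)} - \bfx^{*}\|_{2} = o(\sqrt{n_1})$.

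Next I would convert this $\ell^{2}$ approximation into a bound on the symmetric difference of the recovered support. Define the set of ``confused'' coordinates $B := \{i: |x^{(m)}_{i} - x^{*}_{i}| \geq 1/2\}$. Each $i\in B$ contributes at least $1/4$ to $\|\bfx^{(m)}-\bfx^{*}\|_{2}^{2}$, so $|B| \leq 4\|\bfx^{(m)}-\bfx^{*}\|_{2}^{2} = o(n_1)$. With threshold $R = 1/2$, for every $i \notin B$ the rule in step (3) of Algorithm \ref{alg:ClusterPursuit} classifies $i$ correctly: if $x^{*}_{i} = -1$ then $x^{(m)}_{i} < -1/2$ places $i\in U^{\#}$; if $x^{*}_{i} = 1$ then $x^{(m)}_{i} > 1/2$ places $i\in W^{\#}$; and if $x^{*}_{i} = 0$ then $|x^{(m)}_{i}| < 1/2$ keeps $i$ out of $U^{\#}\cup W^{\#}$. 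Hence $U\bigtriangleup U^{\#} \subseteq B$ and $W\bigtriangleup W^{\#} \subseteq B$.

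Finally I would translate this into a bound on $C_1\bigtriangleup C_1^{\#}$. Because $U, U^{\#}$ are disjoint from $\Omega$ and $W, W^{\#}\subseteq \Omega$, the decompositions $C_1 = (\Omega\setminus W)\cup U$ and $C_1^{\#} = (\Omega\setminus W^{\#})\cup U^{\#}$ yield, by direct case analysis, $C_1\bigtriangleup C_1^{\#} \subseteq (U\bigtriangleup U^{\#})\cup (W\bigtriangleup W^{\#})\subseteq B$. Therefore $|C_1\bigtriangleup C_1^{\#}|/|C_1| \leq |B|/n_1 = o(1)$ almost surely.

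The main obstacle is the tight interlocking of constants: one needs the RIP threshold $\delta_{3s}(L) \leq 0.4859$ of Theorem \ref{thm:PerturbedSP} to hold for every admissible $s$, including $s$ as large as $0.13 n_1$. This is precisely what Theorem \ref{thm:BoundAllQuantities} is engineered to deliver, since $3 \cdot 0.13 = 0.39 < 0.4859$. The rest of the argument -- the logarithmic choice of $m$ to kill $\rho^{m}$, the Markov-type conversion from $\ell^{2}$ error to support error via the threshold $R = 1/2$, and the set identity $C_1\bigtriangleup C_1^{\#} \subseteq (U\bigtriangleup U^{\#})\cup(W\bigtriangleup W^{\#})$ -- is essentially bookkeeping. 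A small technical caveat is that one must check that the bound $o(1)\cdot\|\bfx^{*}\|_{2}$ survives after dividing by $\sqrt{n_1}$, which is automatic because $\|\bfx^{*}\|_{2}^{2} \leq 0.13\, n_1$ is only a constant factor away from $n_1$.
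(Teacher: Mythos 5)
Your proposal is correct and follows essentially the same route as the paper: uniqueness via Theorem \ref{thm:Unique_Sol_Unperturbed}, the bounds $\delta_{3s}(L)\leq 0.45$, $\rho\leq 0.8751$, $\tau\leq 55.8490$, $\epsilon_{\bfy},\epsilon^{s}_{L}=o(1)$ from Theorem \ref{thm:BoundAllQuantities}, the perturbed {\tt SubspacePursuit} bound of Theorem \ref{thm:PerturbedSP} with $m=O(\log n)$, and then a conversion of the $o(\sqrt{n_1})$ $\ell^2$-error into support error, where your ``confused coordinates'' set $B$ is exactly the content of the paper's Lemma \ref{lemma:Support_Error}. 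One small slip: $U^{\#}$ and $W^{\#}$ are defined by thresholding $\bfx^{\#}$ and need not be disjoint from (resp.\ contained in) $\Omega$, but the inclusion $C_1\bigtriangleup C_1^{\#}\subseteq (U\bigtriangleup U^{\#})\cup(W\bigtriangleup W^{\#})$ still holds by the case analysis you cite, so the argument goes through.
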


\begin{remark}
$s \leq 0.13 n_1$ is a conservative upper bound on $s$ for which the guarantees from \S \ref{subsection:CS} will hold. If one has no further information on $|C_1\bigtriangleup \Omega|$ we recommend using this as the default value of $s$. Empirically (see \S \ref{sec:SyntheticExperiments}) we still observe excellent performance when $s > 0.13n_1$.  
\end{remark}

\begin{proof}
Recall $\bfx^{\#}$ is the solution obtained by $m = O(\log(n))$ iterations of 
{\tt SubspacePursuit} on Problem \ref{eq:PerturbedSparseRecovery}, which we are regarding as a perturbation of 
Problem \ref{eq:UnperturbedSparseRecovery}. Clearly, $0.13n_1 < n_1/2$, hence by Theorem 3.2 $\bfone_{W} - \bfone_{U}$ is the unique solution to \ref{eq:UnperturbedSparseRecovery}. By Theorem \ref{thm:BoundAllQuantities} part (2) we get $\delta_{s}(L) \leq 0.13 + o(1) < 0.15$ almost surely, for large enough $n_1$. Similarly 
$\delta_{3s}(L) \leq 0.45$, again almost surely for $n_1$ large enough. It follows from 
Theorem~\ref{thm:BoundAllQuantities} part (3) that $\rho \leq 0.8751$ and $\tau \leq 55.8490$.  We now appeal to Theorem \ref{thm:PerturbedSP} to obtain:
\begin{equation*}
\frac{\|(\bfone_{U} - \bfone_{W}) - \bfx^{\#}\|_{2}}{\|\bfone_{U} - \bfone_{W}\|_{2}} \leq \rho^{m} + \tau\frac{\sqrt{1 + \delta_{s}}}{1 - 
\epsilon^{s}_{\Phi}}(\epsilon^{s}_{\Phi} + \epsilon_{\bfy}).  
\end{equation*}
The second term on the right-hand side is $o(1)$ by Theorem \ref{thm:BoundAllQuantities}. As 
long as $m \geq \log_{\rho}(1/n) = O(\log(n))$, we obtain that $\rho^{m} = 1/n = o(1)$ too. Thus:
\begin{equation}
\frac{\|(\bfone_{U} - \bfone_{W}) - \bfx^{\#}\|_{2}}{\|\bfone_{U} - \bfone_{W}\|_{2}}\leq o(1) \Longrightarrow \|(\bfone_{U} - \bfone_{W}) - \bfx^{\#}\|_{2} \leq o\left(\|\bfone_{U} - \bfone_{W}\|_{2}\right) = o(\sqrt{n_1})
\label{eq:Sick_of_This}
\end{equation}
As $\|\bfone_{U} - \bfone_{W}\|_2 = \sqrt{|U| + |W|} = \sqrt{\epsilon n_1}$. In Lemma \ref{lemma:Support_Error} below we show that, because $\bfone_{U} - \bfone_{W}$ is a difference of binary vectors, equation \eqref{eq:Sick_of_This} implies that $|U\bigtriangleup U^{\#}| = o(n_1)$ and $|W\bigtriangleup W^{\#}| = o(n_1)$, and hence $|C_1\bigtriangleup C_1^{\#}| = o(n_1)$, as required. 
\end{proof}

\begin{lemma}
\label{lemma:Support_Error}
Consider disjoint $T_1,T_2\subset [n]$ and any $\bfv\in\mathbb{R}^{n}$.  Define $T_1^{\#} = \{i: \ v_i > 0.5\}$ and $T_2^{\#} = \{i: \ v_i < -0.5\}$. If $\|\left(\bfone_{T_1} - \bfone_{T_2}\right) - \bfv\|_2 \leq D$ then:
$$
|T_1\bigtriangleup T_1^{\#}| +  |T_2\bigtriangleup T_2^{\#}| \leq 4D^2. 
$$
\end{lemma}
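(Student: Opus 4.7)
The plan is a coordinate-by-coordinate accounting argument. Set $\bfw := \bfone_{T_1} - \bfone_{T_2}$, so each $w_i \in \{-1,0,1\}$: specifically $w_i = 1$ on $T_1$, $w_i = -1$ on $T_2$, and $w_i = 0$ elsewhere (using that $T_1\cap T_2 = \emptyset$). Analogously, since $T_1^{\#} \cap T_2^{\#}= \emptyset$ (as no $v_i$ can simultaneously exceed $0.5$ and fall below $-0.5$), define $\tilde{w}_i := \mathbf{1}_{i\in T_1^{\#}} - \mathbf{1}_{i\in T_2^{\#}}\in \{-1,0,1\}$. Let $c_i \in \{0,1,2\}$ denote the number of the two sets $T_1\bigtriangleup T_1^{\#}$ and $T_2\bigtriangleup T_2^{\#}$ in which $i$ appears, so that $|T_1\bigtriangleup T_1^{\#}| + |T_2\bigtriangleup T_2^{\#}| = \sum_i c_i$. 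The goal reduces to proving the per-coordinate inequality
\begin{equation*}
(v_i - w_i)^2 \;\geq\; \tfrac14 c_i \quad \text{for every } i\in [n],
\end{equation*}
since summing this and invoking the hypothesis $\|\bfv-\bfw\|_2^2 \leq D^2$ immediately yields $\sum_i c_i \leq 4D^2$.

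To establish the per-coordinate inequality I would argue by cases on the pair $(w_i,\tilde{w}_i)$. If $w_i = \tilde{w}_i$ then $c_i = 0$ and nothing is required. If exactly one of $w_i, \tilde{w}_i$ is zero (so $c_i = 1$), then the definition of the thresholded sets forces $|v_i - w_i| \geq 0.5$: for instance, if $w_i = 1$ and $\tilde{w}_i = 0$ then $i\notin T_1^{\#}$ gives $v_i \leq 0.5$, while if $w_i = 0$ and $\tilde{w}_i = 1$ then $v_i > 0.5$; the other two sub-cases are symmetric. Finally, if $w_i$ and $\tilde{w}_i$ are both nonzero and of opposite sign, then $i$ belongs to both of $T_1\bigtriangleup T_1^{\#}$ and $T_2\bigtriangleup T_2^{\#}$, so $c_i = 2$, and the thresholding condition forces $|v_i - w_i| > 1.5$, giving $(v_i-w_i)^2 > 2.25 \geq \tfrac12 = \tfrac14 c_i$. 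In every case the desired inequality holds.

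The one subtlety to be careful about is the ``both-directions-wrong'' case: if $i \in T_1 \cap T_2^{\#}$ (or the mirror situation $T_2\cap T_1^{\#}$), it contributes to \emph{both} symmetric differences, hence $c_i = 2$, and one must verify that the squared error $(v_i - w_i)^2$ is large enough (at least $0.5$) to absorb this doubled contribution --- which is exactly what the $|v_i - w_i|>1.5$ estimate above accomplishes. I do not anticipate any technical obstacle; the proof is a finite enumeration followed by summing.
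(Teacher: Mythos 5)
Your proof is correct, and it rests on the same underlying idea as the paper's: thresholding at $0.5$ forces every index on which the estimated supports disagree with the true ones to contribute at least $(0.5)^2$ to the squared error $\|(\bfone_{T_1}-\bfone_{T_2})-\bfv\|_2^2$. The difference is organizational but not trivial. The paper splits $[n]$ into the blocks $T_1$, $T_2$ and $T_3=[n]\setminus(T_1\cup T_2)$, writes $\bfv=\bfv^{(1)}+\bfv^{(2)}+\bfv^{(3)}$, and charges $|T_1\setminus T_1^{\#}|$ to the $T_1$ block, $|T_2\setminus T_2^{\#}|$ to the $T_2$ block, and $|T_1^{\#}\setminus T_1|+|T_2^{\#}\setminus T_2|$ to $\|\bfv^{(3)}\|_2^2$. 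Your per-coordinate accounting with the counter $c_i$ is in fact a little more careful than that: in the ``doubly wrong'' case you flag, say $i\in T_2\cap T_1^{\#}$, the index lies in $T_1^{\#}\setminus T_1$ but not in $T_3$, so the paper's step $\|\bfv^{(3)}|_{T_1^{\#}\setminus T_1}\|_2^2\geq (0.5)^2|T_1^{\#}\setminus T_1|$ implicitly treats $T_1^{\#}\setminus T_1$ as contained in $T_3$; in your argument such an index falls under the case $c_i=2$, where the thresholds force $|v_i-w_i|>1.5$, giving $(v_i-w_i)^2>2.25\geq \tfrac14 c_i$ with room to spare. So both routes prove the lemma, and your coordinate-wise enumeration closes this small bookkeeping corner explicitly at no extra cost.
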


\begin{proof}
Let $T_{3} := [n]\setminus(T_1\cup T_2)$ and write $\bfv = \bfv^{(1)} + \bfv^{(2)} + \bfv^{(3)}$ where $\bfv^{(i)}$ denotes the part of $\bfv$ supported on $T_{i}$. Observe that:
$$
D^{2} \geq \|(\bfone_{T_1} - \bfone_{T_2}) - \bfv\|_2^{2}  = \|\bfone_{T_1} - \bfv^{(1)}\|_2^{2} + \|-\bfone_{T_2} + \bfv^{(2)}\|_2^{2} + \|\bfv^{(3)}\|^{2}
$$

One can easily verify that:
$$
\|\bfone_{T_1} - \bfv^{(1)}\|_2^{2} = \|\bfone_{T_1\cap T_1^{\#}} - \bfv^{(1)}|_{T_1\cap T_1^{\#}}\|^2 + \| \bfone_{T_1\setminus T_1^{\#}} - \bfv^{(1)}_{T_1\setminus T_1^{\#}}\|_{2}^{2} \geq (0.5)^2 |T_1\setminus T_1^{\#}|
$$ 

Similarly, $\|-\bfone_{T_2} + \bfv^{(2)}\|_2^{2} \geq (0.5)^2|T_2\setminus T_2^{\#}|$, and:
$$
\|\bfv^{(3)}\|_{2}^{2} \geq  \|\bfv^{(3)}|_{T_1^{\#}\setminus T_1}\|^{2} + \|\bfv^{(3)}|_{T_2^{\#}\setminus T_2}\|^{2} \geq (0.5)^2|T_1^{\#}\setminus T_1| + (0.5)^2|T_2^{\#}\setminus T_2|
$$

Putting this all together we get that:
$$
D^2 \geq (0.5)^{2} \left( |T_1\setminus T_1^{\#}| +  |T_2\setminus T_2^{\#}| + |T_1^{\#}\setminus T_1| + |T_2^{\#}\setminus T_2|   \right)  = 0.25 \left(|T_1\bigtriangleup T_1^{\#}| + |T_2\bigtriangleup T_2^{\#}|\right)
$$
\end{proof}

\section{The {\tt RWThresh} algorithm}
\label{sec:RWThresh}
Here, we introduce a simple, diffusion-based local clustering algorithm which we call {\tt RWThresh} (see Algorithm \ref{alg:RWThresh}). We note that {\tt RWThresh} is somewhat similar to other more sophisticated diffusion-based local clustering algorithms, such as {\tt PPR-Grow}, {\tt HK-Grow} and {\tt CapacityReleasingDiffusion}. We do not claim that {\tt RWThresh} outperforms similar existing algorithms; its main utility lies in the fact that it reliably (and provably) produces approximate cuts, $\Omega$, that are of a high enough quality to be used as an initialization for {\tt ClusterPursuit}. 

\begin{algorithm}[h!]
   \caption{ {\tt RWThresh}}
   \label{alg:RWThresh}
\begin{algorithmic}
   \State {\bfseries Input:} Adjacency matrix $A$, a thresholding parameter 
$\epsilon \in (0,1)$, seed vertices $\Gamma\subset C_1$,$\hat{n}_1 \approx n_1$ and 
depth of random walk $t$.
   \State (1) Compute $P = AD^{-1}$ and let $\bfv^{(0)} = D\bfone_{\Gamma}$.
   \State (2) Compute $\bfv^{(t)} = P^{t}\bfv^{(0)}$
   \State (3) Define $\Omega = \tilde{\mathcal{L}}_{(1+\epsilon)\hat{n}_1}(\bfv^{(t)})$.
   \State {\bfseries Output: } $\Omega = \Omega\cup\Gamma$.
\end{algorithmic}
\end{algorithm}

Here $\tilde{\mathcal{L}}_{t}(\cdot)$ is a thresholding operator, similar to $\mathcal{L}_{t}(\cdot)$, but that returns the indices of the $t$ largest, not largest-in-magnitude, components of a vector. To motivate {\tt RWThresh} we observe the following. If $\mathcal{G}_n$ satisfies assumptions (A1)--(A4) then: 
\begin{enumerate}
\item $G_{C_1}$ is sufficiently densely connected that after $t$ steps the random walk has a 
fairly large probability of visiting every $i\in C_1$. 
\item $C_1$ is sufficiently weakly connected to $V\setminus C_1$ that the probability of the 
random walk leaving $C_1$ after $t$ steps is fairly small. 
\end{enumerate} 
Hence Algorithm, \ref{alg:RWThresh} which  
runs a short random walk starting on $\Gamma$ and takes $\Omega$ to be the set of 
vertices most likely to be visited, should produce an $\Omega$ which  
is close to our intuitive notion of a good cluster. Let us quantify this as Theorem \ref{thm:FindingOmegaSSBM}. 

\begin{theorem}
Let $G\sim\mathcal{G}_{n}$ satisfy Assumptions (A1)--(A4) and let $A$ denote the adjacency matrix of $G$. Let $\Omega$ denote the output of {\tt RWThresh} with inputs $A$, any $\epsilon\in (0,1)$, any $t = O(1)$, $\hat{n}_1 = n_1$ and $\Gamma\subset C_1$ with $|\Gamma| = g\epsilon_3^{2t-1}n_1$ for any 
constant $g\in (0,1)$, where $\epsilon_3$ is as in Assumption (A4)). Then $|\Omega\bigtriangleup C_1| \leq (\epsilon + o(1)) n_1$ almost surely. 
\label{thm:FindingOmegaSSBM}
\end{theorem}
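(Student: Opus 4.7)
The plan is to compare the walk $\bfv^{(t)} = P^{t}(D\bfone_\Gamma)$ on $G$ with an idealised walk on $G_{C_1}$, which by (A2) mixes in $O(1)$ steps. I aim for the pointwise picture that at least $n_1-o(n_1)$ vertices of $C_1$ satisfy $\bfv^{(t)}_i \geq \alpha$ while at most $o(n_1)$ vertices outside $C_1$ do, where $\alpha := c\,\text{vol}(\Gamma)/n_1$ for a suitable constant $c\in(0,1)$. Once both bounds are in hand, the set $\{i: \bfv^{(t)}_i\geq\alpha\}$ has cardinality $n_1 + o(n_1) \leq (1+\epsilon) n_1$ for $n$ large, and contains $(1-o(1))n_1$ vertices of $C_1$. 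Hence $\Omega$ (the top $(1+\epsilon)n_1$ entries of $\bfv^{(t)}$ together with $\Gamma$) satisfies $|C_1\setminus\Omega|=o(n_1)$, and since $|\Omega|-|C_1| \leq \epsilon n_1 + |\Gamma| = \epsilon n_1 + o(n_1)$, the identity
\begin{equation*}
|\Omega\bigtriangleup C_1| = \bigl(|\Omega|-|C_1|\bigr) + 2\,|C_1\setminus\Omega|
\end{equation*}
delivers $(\epsilon+o(1))n_1$ as claimed.

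The outside-of-$C_1$ bound is the easier of the two: (A3) gives that each $j\in C_1$ sends a fraction at most $r_j/(1+r_j)\leq\epsilon_2$ of its mass outside $C_1$ per step, so iterating over $t=O(1)$ steps the total $\ell_1$ mass of $\bfv^{(t)}$ outside $C_1$ is $O(t\epsilon_2)\,\text{vol}(\Gamma)=o(\text{vol}(\Gamma))$, and hence at most $o(\text{vol}(\Gamma))/\alpha=o(n_1)$ vertices outside $C_1$ can reach level $\alpha$. To lower-bound the mass inside $C_1$, let $P_1$ denote the sub-stochastic restriction of $P$ to $C_1\times C_1$. Since any mass returning from $V\setminus C_1$ contributes non-negatively, $\bfv^{(t)}|_{C_1}\geq P_1^{t}\bfv^{(0)}|_{C_1}$ entrywise. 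Factorising $P_1 = P_1^{\text{in}}\cdot\text{diag}(1/(1+r_j))$, where $P_1^{\text{in}} = A_{G_{C_1}} D_{G_{C_1}}^{-1}$ is the transition matrix of the walk on $G_{C_1}$, then yields $P_1^{t}\succeq (1-\epsilon_2)^{t}(P_1^{\text{in}})^{t}$ coordinate-wise.

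The central step is analysing $(P_1^{\text{in}})^{t}\bfv^{(0)}|_{C_1}$ via spectral mixing. Conjugating to the symmetric form $N_1^{\text{in}} = D_{G_{C_1}}^{-1/2} A_{G_{C_1}} D_{G_{C_1}}^{-1/2}$ and invoking Lemma~\ref{lemma:Evals_of_L_Lsym_P} with (A2), the top eigenvalue of $N_1^{\text{in}}$ is $1$ with eigenvector $\bfu_1\propto D_{G_{C_1}}^{1/2}\bfone_{C_1}$, and all other eigenvalues lie in $[-\epsilon_1,\epsilon_1]$. Hence $(N_1^{\text{in}})^{t} = \bfu_1\bfu_1^{\top} + E$ with $\|E\|_2\leq\epsilon_1^{t}$. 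Conjugating back, the rank-one piece reproduces exactly the stationary target $d_i^{\text{in}}\,\text{vol}(\Gamma)/\text{vol}^{\text{in}}(C_1)$, which by (A4) equals $(1\pm O(\epsilon_3))\,\text{vol}(\Gamma)/n_1$ for each $i\in C_1$, while the error term has $\ell_2$ norm at most $O\bigl(\epsilon_1^{t}\,d_{\text{av}}^{\text{in}}\sqrt{|\Gamma|}\bigr)$.

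The main obstacle is converting this global $\ell_2$ bound into the pointwise lower bound $\bfv^{(t)}_i\geq\alpha$ on all but $o(n_1)$ vertices of $C_1$. A Chebyshev-style counting argument shows that the number of ``bad'' vertices at which the error dominates a constant fraction of the stationary value $\text{vol}(\Gamma)/n_1$ is at most $O(\epsilon_1^{2t}\,n_1^{2}/|\Gamma|)$, which is $o(n_1)$ precisely when $|\Gamma|\gg \epsilon_1^{2t} n_1$. This explains the hypothesis $|\Gamma| = g\,\epsilon_3^{2t-1}n_1$: in the regimes of interest (notably the stochastic block model of \S\ref{sec:SBM}) $\epsilon_1 = O(\epsilon_3)$, so $\epsilon_3^{2t-1}\gg \epsilon_1^{2t}$ and the prescribed seed-set size makes the mixing error negligible against the stationary floor at all but $o(n_1)$ vertices of $C_1$.
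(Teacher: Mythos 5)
Your proposal is correct in substance and rests on the same two pillars as the paper's argument: the $\ell_1$ leakage bound from (A3) (your per-step bound $r_j/(1+r_j)\leq\epsilon_2$ iterated $t$ times is exactly the paper's recursion giving $\|\bfv^{(t)}_{V\setminus C_1}\|_1\leq t\epsilon_2\,\mathrm{vol}(\Gamma)$), and the spectral-gap mixing of the walk restricted to $G_{C_1}$ after the comparison $P^{t}\bfx\geq(1-\epsilon_2)^{t}(P^{\text{in}})^{t}\bfx$. Where you genuinely diverge is in how the mixing estimate is converted into control of $|\Omega\bigtriangleup C_1|$. The paper exploits the ranking structure of the threshold directly: with $U=C_1\setminus\Omega$ and $W=\Omega\setminus C_1$ it compares $\sum_{i\in U}v^{(t)}_i\leq\sum_{j\in W}v^{(t)}_j$, lower-bounds the $U$-sum by applying the Chung--Graham discrepancy inequality (Lemma \ref{lemma:BoundInnerProd}) with test vector $\bfone_U$, and solves the resulting scalar inequality for $u=|U|/n_1$. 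You instead establish a pointwise floor $v^{(t)}_i\geq c\,\mathrm{vol}(\Gamma)/n_1$ on all but $o(n_1)$ vertices of $C_1$, via the rank-one decomposition $(N_{G_{C_1}})^{t}=\bfu_1\bfu_1^{\top}+E$, an $\ell_2$ bound on the error, and a Chebyshev-type coordinate count, then finish with a level-set argument; this is slightly more work (the coordinate counting step) but yields stronger pointwise information and makes completely transparent why the seed set must satisfy $|\Gamma|\gg\epsilon_1^{2t}n_1$. On the $\epsilon_3$-versus-$\epsilon_1$ point you flag: the paper's own proof in fact substitutes $|\Gamma|=g\epsilon_1^{2t-1}n_1$ into the volume ratio despite the theorem statement's $\epsilon_3^{2t-1}$, so the caveat you add (that the stated hypothesis suffices when $\epsilon_1=O(\epsilon_3)$, as in the SBM) addresses an inconsistency internal to the paper rather than a defect of your argument; with the hypothesis read as $|\Gamma|=g\epsilon_1^{2t-1}n_1$, your Chebyshev count gives $O(\epsilon_1 n_1/g)=o(n_1)$ bad vertices and the proof closes with no extra assumption, exactly as the paper's does.
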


\begin{proof}
The proof is left to Appendix~\ref{A:Part1Proof}.
\end{proof}

We note that there are many local clustering algorithms, for example the {\tt PPR-Grow} and {\tt CapacityReleasingDiffusion} algorithms discussed in \S \ref{sec:LiteratureReview}, that require only $|\Gamma| = O(1)$. However, these algorithms tend to return small clusters, typically of size $|C| = O(1)$.  If $\epsilon_3 = O(1/\log(n))$, as it is in the numerical experiments of \S \ref{sec:SyntheticExperiments}, then Theorem \ref{thm:FindingOmegaSSBM} requires that $|\Gamma| = O(n_1/\text{polylog}(n_1))$, which seems to be a reasonable assumption when finding a cluster of size $O(n)$.  In practice, we find it suffices to take $|\Gamma| = 0.01n_1$ or $|\Gamma| = 0.02n_1$.

\section{Using {\tt ClusterPursuit} for local clustering}
\label{sec:CPRWT}
As mentioned earlier, using {\tt RWThresh} to quickly generate a rough approximation to $C_1$, namely $\Omega$, and then using {\tt ClusterPursuit} to then refine this cut leads to a (weakly) local clustering algorithm. Here we verify that this approach, presented below as algorithm \ref{alg:CP+RWT}, works well for our model of graph.

\begin{algorithm}
   \caption{ {\tt CP+RWT}}
   \label{alg:CPRWT}
\begin{algorithmic}
   \State {\bfseries Input:} Adjacency matrix $A$ and seed vertices $\Gamma\subset C_1$. Parameters $\epsilon\in (0,0.13)$, $s \approx \epsilon n_1$, $R\in [0,1)$, $\hat{n}_1\approx n_1$, $t\in\mathbb{Z}_{+}$  
   \State (1) Let $\Omega = \text{\tt RWThresh}(A,\epsilon,\Gamma,\hat{n}_1,t)$ 
   \State (2) Let $C_1^{\#} = \text{\tt ClusterPursuit}(A,s,R)$
   \State {\bfseries Output: } $C_1^{\#}$
\end{algorithmic}
\label{alg:CP+RWT}
\end{algorithm}

\begin{theorem}
\label{thm:CP+RWT_Works_Well}
Let $G\sim\mathcal{G}_n$ satisfy Assumptions (A1)--(A4) and let $A$ denote the adjacency matrix of $G$. Let $C_{1}^{\#}$ denote the output of {\tt CP+RWT} with inputs $A$, $\epsilon\in (0,0.13)$, $R = 0.5$, $\hat{n}_1 = n_1$, any $t = O(1)$, any $s$ satisfying $\epsilon < s \leq 0.13n_1$ and $\Gamma \subset C_1$  with $|\Gamma| = g\epsilon_{3}^{2t-1}n_1$ for any constant $g\in (0,1)$, where $\epsilon_3$ is as in Assumption (A4). Then:
$$
\frac{\left| C_1\bigtriangleup C_1^{\#}\right|}{|C_1|} = o(1)
$$ 
almost surely, for large enough $n_1$. 
\end{theorem}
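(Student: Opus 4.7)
The plan is to compose the two guarantees already established for the individual stages of {\tt CP+RWT}. The algorithm is a two-stage pipeline: {\tt RWThresh} produces a preliminary cut $\Omega$, which is then handed to {\tt ClusterPursuit} to yield $C_1^{\#}$. Accordingly, the proof reduces to (i) invoking Theorem~\ref{thm:FindingOmegaSSBM} to control $|C_1\bigtriangleup\Omega|$, (ii) verifying that the resulting $\Omega$, together with the chosen sparsity parameter $s$, meets the hypotheses of Theorem~\ref{thm:SCP_ClusterPursuitWorks}, and (iii) reading off the conclusion.

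First I would apply Theorem~\ref{thm:FindingOmegaSSBM}. The hypotheses on $\epsilon$, $t$, $\hat{n}_1$, and $\Gamma$ in Theorem~\ref{thm:CP+RWT_Works_Well} are precisely those required by Theorem~\ref{thm:FindingOmegaSSBM}, so on an event $E_1$ of probability $1 - o(1)$ one obtains $|\Omega\bigtriangleup C_1| \leq (\epsilon + o(1))n_1$. Setting $\epsilon' := |\Omega\bigtriangleup C_1|/n_1$, this gives $\epsilon' \leq \epsilon + o(1)$ on $E_1$.

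Next I would verify the hypotheses of Theorem~\ref{thm:SCP_ClusterPursuitWorks} for this $\Omega$, the given sparsity $s$, and threshold $R = 0.5$. Since $\epsilon \in (0, 0.13)$ is a fixed constant, the bound $\epsilon' \leq \epsilon + o(1)$ yields $\epsilon' \leq 0.13$ for $n_1$ large enough. For the sparsity, interpreting the hypothesis on $s$ as $\epsilon n_1 < s \leq 0.13 n_1$, the constant-order gap between $\epsilon n_1$ and $s$ absorbs the $o(1)$ slippage in $\epsilon'$: $\epsilon' n_1 = \epsilon n_1 + o(n_1) \leq s$ for $n_1$ large enough. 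Hence, on $E_1$, the hypotheses of Theorem~\ref{thm:SCP_ClusterPursuitWorks} are satisfied, and that theorem produces, on an event $E_2$ of probability $1 - o(1)$, the bound $|C_1\bigtriangleup C_1^{\#}|/|C_1| = o(1)$. Since $\mathbb{P}(E_1 \cap E_2) \geq 1 - o(1)$, the desired almost-sure conclusion follows.

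The only potentially delicate point I expect is the sparsity comparison $\epsilon' n_1 \leq s$: if $s$ were chosen to equal $\epsilon n_1$ exactly, then the $o(n_1)$ overshoot in $|\Omega\bigtriangleup C_1|$ could violate the input condition of Theorem~\ref{thm:SCP_ClusterPursuitWorks}. This is precisely why the upper-level theorem leaves a strict, constant-order gap between $\epsilon n_1$ and $s$. Beyond this book-keeping, no new probabilistic estimates are required; the argument is a clean composition of the two previous theorems, with a union bound to combine the two almost-sure events.
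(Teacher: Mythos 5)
Your proposal is correct and follows essentially the same route as the paper: invoke Theorem~\ref{thm:FindingOmegaSSBM} to get $|\Omega\bigtriangleup C_1|\leq(\epsilon+o(1))n_1$, note that for $n_1$ large enough this is at most $s\leq 0.13n_1$, and then apply Theorem~\ref{thm:SCP_ClusterPursuitWorks}. Your extra remarks (the explicit union bound over the two almost-sure events and the observation that the slack between $\epsilon n_1$ and $s$ must absorb the $o(1)$ overshoot) only make explicit what the paper leaves implicit.
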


\begin{proof}
By Theorem \ref{thm:FindingOmegaSSBM}, the call to {\tt RWThresh} in Step (1) of {\tt CP+RWT} almost surely returns an $\Omega$ satisfying $|\Omega\bigtriangleup C_1| \leq (\epsilon + o(1))n_1$ for input parameters with the given values. For large enough $n_1$, we have that $(\epsilon + o(1))n_1\leq  s \leq 0.13n_1$, hence the call to {\tt ClusterPursuit} in Step (2) of {\tt CP+RWT} returns $C_{1}^{\#}$ with $\left| C_1\bigtriangleup C_{1}^{\#}\right|/|C_1| = o(1)$ by Theorem \ref{thm:SCP_ClusterPursuitWorks}, again almost surely. 
\end{proof}

\begin{remark}
In practice (see \S \ref{sec:NumericalExperiments}) we find it generally suffices to take $t = 3$. If $C_1$ is densely connected, one might consider a smaller value of $t$, and conversely one might choose a larger value (say $t = 5$) if $C_1$ is sparsely connected.
\end{remark}

\section{Using {\tt ClusterPursuit} for semi-supervised clustering}
\label{sec:ICPRWT}
In the (global) semi-supervised clustering problem, one is given a small set of seed vertices $\Gamma_a\subset C_a$  in each cluster, usually referred to in this context as ``labeled data''. The goal here is to find a partition into disjoint sets: $V = C_{1}^{\#}\cup C_{2}^{\#}\cup\ldots\cup C_{k}^{\#}$ that closely resembles the ground truth partition $V = C_1\cup C_2\cup\ldots \cup C_k$. An iterated version of {\tt CP+RWT}, which we call {\tt ICP+RWT}, can be used to solve this problem. {\tt ICP+RWT} is presented as algorithm \ref{alg:ICP+RWT}. Note that in the second line of the for loop we use the shorthand $G^{(a+1)} = G^{(a)}\setminus C^{\#}_{a}$ to denote the graph formed from $G^{(a)}$ by removing the vertices $C_{a}^{\#}$. We do not analyze the theoretical performance of {\tt ICP+RWT} here\footnote{There is a minor technical difficulty: one needs to show that if $G$ is drawn from a model satisfying assumptions (A1)--(A4) then each $G^{(a)}$ is also drawn from a model satisfying assumptions (A1)--(A4).} but we provide numerical evidence that {\tt ICP+RWT} is competitive with state-of-the-art semi-supervised graph clustering algorithms in \S \ref{sec:MLBenchmarks}.

\begin{algorithm}
   \caption{ {\tt ICP+RWT}}
   \label{alg:ICPRWT}
\begin{algorithmic}
   \State {\bfseries Input:} Adjacency matrix $A$, labeled data $\Gamma_{a}\subset C_a$ for $a= 1,\ldots, k$. Parameters $\epsilon\in (0,1)$, $R\in [0,1)$, $\hat{n}_a\approx n_a$ and $s_a \approx \epsilon n_a$ for $a = 1,\ldots,k$,  and $t\in\mathbb{Z}_{+}$
   \State {\bfseries Initialize}: $G^{(1)} = G$ and $A^{(1)} = A$. 
   \For{$a=1,\ldots k$}
   		\State Let $C_{a}^{\#} = \text{\tt CP+RWT}(A^{(a)},\Gamma_{a},\epsilon,R,s_a, \hat{n}_a,t)$  
   		\State Let $G^{(a+1)} = G^{(a)}\setminus C^{\#}_{a}$ and let $A^{(a+1)}$ be the adjacency matrix of $G^{(a+1)}$.  
   	\EndFor
   \State {\bfseries Output: } $C_1^{\#},\ldots, C_{k}^{\#}$
\end{algorithmic}
\label{alg:ICP+RWT}
\end{algorithm}

\section{Computational Complexity}
\label{sec:Complexity}
In this section we discuss the run times of the algorithms introduced in this paper. Let $\mathcal{T}_{m}$ denote the cost of a matrix-vector multiply with $A$, $L$ or $P$ (they are all of the same magnitude).

\begin{theorem}
\label{thm:Run_Time:RWT}
{\tt RWThresh} requires $O(n\log(n) + t\mathcal{T}_m)$ operations, where $t$ is the depth of the random walk.
\end{theorem}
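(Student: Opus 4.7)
The plan is to walk through the three substantive steps of Algorithm~\ref{alg:RWThresh} and add up their costs; the claimed bound $O(n\log(n) + t\mathcal{T}_m)$ should come out immediately. Nothing in this proof is really subtle---the only step that contributes the $n\log(n)$ term is the thresholding, and the only step that contributes the $t\mathcal{T}_m$ term is the iterated matrix-vector product.

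First, I would address the preprocessing in step~(1). Since $D$ is diagonal, one need not form $P = AD^{-1}$ explicitly: its action can be applied on the fly at the cost of $O(\mathcal{T}_m)$ per matrix-vector multiply (rescaling by $D^{-1}$ contributes only an additional $O(n)$ per multiply, which is absorbed into $\mathcal{T}_m$). Forming $\mathbf{v}^{(0)} = D\mathbf{1}_{\Gamma}$ takes $O(|\Gamma|) = O(n)$ operations. Hence step~(1) costs $O(n)$.

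Second, step~(2) computes $\mathbf{v}^{(t)} = P^{t}\mathbf{v}^{(0)}$ by performing $t$ successive matrix-vector multiplies against $P$, each of cost $\mathcal{T}_m$. This contributes $O(t\mathcal{T}_m)$.

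Third, the thresholding step~(3) must identify the indices of the $(1+\epsilon)\hat{n}_1$ largest entries of the length-$n$ vector $\mathbf{v}^{(t)}$. A full sort accomplishes this in $O(n\log(n))$; alternatively one can use a linear-time selection algorithm (e.g.\ quickselect) to find the threshold value in $O(n)$ followed by an $O(n)$ sweep to extract the qualifying indices, giving $O(n)$ in the best case or $O(n\log(n))$ using a simple comparison-based sort. Taking the worst case gives $O(n\log(n))$. The final union $\Omega \cup \Gamma$ is $O(n)$. Summing all contributions yields $O(n) + O(t\mathcal{T}_m) + O(n\log(n)) + O(n) = O(n\log(n) + t\mathcal{T}_m)$, as claimed. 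There is no real obstacle here---the only judgement call is whether to bound the thresholding step by $O(n\log n)$ (conservative, clean) or $O(n + k\log k)$ where $k = (1+\epsilon)\hat{n}_1$ (slightly sharper), and the statement of the theorem makes clear that the conservative version is intended.
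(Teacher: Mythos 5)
Your proposal is correct and follows essentially the same argument as the paper: the $t$ matrix-vector multiplies give the $O(t\mathcal{T}_m)$ term and sorting (or selecting) the entries of $\mathbf{v}^{(t)}$ gives the $O(n\log(n))$ term. The extra bookkeeping for step (1) and the union, and the remark about quickselect, are fine but not needed for the stated bound.
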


\begin{proof}
Computing $\bfv^{(t)}$ requires $t$ matrix-vector multiplies and hence requires $O(t\mathcal{T}_m)$ operations. Sorting $\bfv^{(t)}$ in order to find $\Omega$ requires $O(n\log(n))$ operations.
\end{proof}

Let us now analyze the complexity of {\tt ClusterPursuit}

\begin{theorem}
\label{thm:Run_Time:CP}
{\tt ClusterPursuit} requires $O\left(\mathcal{T}_m\log(n)\right)$ operations.
\end{theorem}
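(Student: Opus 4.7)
The plan is to walk through the three steps of {\tt ClusterPursuit} and show that each has cost bounded by $O(\mathcal{T}_m \log(n))$, with the inner iterations of {\tt SubspacePursuit} being where the $\log(n)$ factor enters and where the main technical work lies.

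First, I would dispense with Step (1) and Step (3). Step (1) never forms $L$ explicitly: we treat $L = I - D^{-1}A$ as an implicit operator whose application to a vector costs one multiply by $A$ and a diagonal scaling, i.e. $O(\mathcal{T}_m)$. Computing $\bfy = L \bfone_{\Omega}$ is then a single matrix-vector multiply, contributing $O(\mathcal{T}_m)$. Step (3) is a componentwise thresholding plus two set operations on subsets of $[n]$, so it is $O(n) = O(\mathcal{T}_m)$.

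The bulk of the argument concerns Step (2), which runs $m = O(\log n)$ iterations of Algorithm \ref{algorithm:SP} with $\Phi = L$. I would show that a single iteration of {\tt SubspacePursuit} costs $O(\mathcal{T}_m)$ by treating its three substeps separately. Substep (1) of the for loop computes $L^\top \bfr^{(j-1)}$ (one matrix-vector multiply, $O(\mathcal{T}_m)$) and then extracts its $s$ largest-in-magnitude entries, which can be done in $O(n)$ by quickselect; taking the union with $S^{(j-1)}$ is $O(s) = O(n)$. Substep (3) is another top-$s$ selection on a vector of length $n$, again $O(n)$, and substep (4) is one matrix-vector multiply, $O(\mathcal{T}_m)$.

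The substep that requires care is the constrained least-squares problem $\bfu = \argmin_{\supp(\bfz)\subset \hat{S}^{(j)}} \|\bfy - L\bfz\|_2$, where $|\hat{S}^{(j)}| \leq 2s$ can be as large as $\Theta(n)$, so a direct factorization of $L_{\hat{S}^{(j)}}^\top L_{\hat{S}^{(j)}}$ is not affordable. The approach I would take is to solve this system by an iterative Krylov method (conjugate gradient on the normal equations, or LSQR), where each Krylov step costs only one application of $L_{\hat{S}^{(j)}}$ and one of $L_{\hat{S}^{(j)}}^\top$; both can be realized by embedding into $\mathbb{R}^n$, multiplying by $L$ or $L^\top$, and restricting, for a per-step cost of $O(\mathcal{T}_m)$. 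The key fact that makes this cheap is that by Theorem \ref{thm:BoundAllQuantities}(2), $\delta_{2s}(L) \leq 2\gamma + o(1)$ for any $\gamma\in(0,1)$ whenever $2s \leq \gamma n_1$, so the singular values of $L_{\hat{S}^{(j)}}$ lie in $[\sqrt{1-\delta_{2s}}, \sqrt{1+\delta_{2s}}]$ and the condition number of the reduced normal equations is $O(1)$. Consequently CG reaches the desired accuracy in a constant number of iterations, so the whole substep costs $O(\mathcal{T}_m)$. This is the step I expect to be the main obstacle, since rigorously controlling the number of CG iterations requires invoking the almost-sure RIP bound inherited from Theorem \ref{thm:BoundAllQuantities} and folding the required accuracy into the stopping criterion so that the perturbation-robust guarantee of Theorem \ref{thm:PerturbedSP} still applies.

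Putting the pieces together, each of the $O(\log n)$ outer iterations of {\tt SubspacePursuit} costs $O(\mathcal{T}_m)$, the initialization steps inside Algorithm \ref{algorithm:SP} are of the same order, and Steps (1) and (3) of {\tt ClusterPursuit} contribute lower-order terms, giving the total bound $O(\mathcal{T}_m \log(n))$ claimed in the theorem.
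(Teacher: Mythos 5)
Your proposal is correct and follows essentially the same route as the paper: the cost is dominated by the $m=O(\log n)$ iterations of {\tt SubspacePursuit}, and the constrained least-squares subproblem is solved with an iterative solver (CG/{\tt lsqr}) whose iteration count is $O(1)$ because the RIP bound on $\delta_{2s}(L)$ makes $L_{\hat S^{(j)}}$ well conditioned, so each outer iteration costs $O(\mathcal{T}_m)$. The only cosmetic difference is that the paper quotes the condition-number and iteration-count bounds from Needell--Tropp ($\delta_{2s}\leq\delta_{3s}\leq 0.45$, $\kappa\leq 2.64$, three CG iterations suffice) rather than re-deriving them from Theorem \ref{thm:BoundAllQuantities}.
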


\begin{remark}
 Note that if $A$ is stored as a sparse matrix then $\mathcal{T}_m = O(nd_{\max})$ in which case  the run time of {\tt ClusterPursuit} becomes $O(nd_{\max}\log(n))$. 
\end{remark}

\begin{proof}

The run time of {\tt ClusterPursuit} is dominated by the cost of the call to {\tt SubspacePursuit} (see Algorithm \ref{algorithm:SP}) in step (3) which costs $m$ times the cost of each iteration. We now bound the cost of each iteration. The cost of the $j$-th iteration is dominated by the cost of solving the least squares problem:
$$
\argmin_{\bfz\in\mathbb{R}^{n}} \left\{\left\| L\bfz - \bfy\right\|_2: \ \text{supp}(\bfx)\subset \hat{S}^{j} \right\}.
$$
(step (4) in the ``for'' loop of Algorithm \ref{algorithm:SP}). Because of the support condition, and because $|\hat{S}^{j}| = 2s\leq 0.26 n_1$, this is equivalent to the least squares problem:
\begin{equation}
\argmin_{\bfz\in\mathbb{R}^{2s}} \left\{\left\| L_{\hat{S}^{j}}\bfz - \bfy\right\|_2\right\}
\label{eq:Needell_LSQR}
\end{equation}
We recommend using an iterative method, such as conjugate gradient (in our implementation we use MATLAB's {\tt lsqr} operation). Fortunately, as pointed out in \cite{NT09}, the matrix in question, $L_{\hat{S}^{j}}$ is extremely well conditioned. This is because $\delta_{2s}(L) \leq \delta_{3s}(L) \leq 0.45$, as shown in the proof of Theorem \ref{thm:SCP_ClusterPursuitWorks}. By \cite{NT09}, specifically Proposition 3.1 and the discussion of \S 5, this implies that the condition number is small:
\begin{equation*}
\kappa(L_{\hat{S}^j}^{\top}L_{\hat{S}_j}) := \frac{\lambda_{\max}(L_{\hat{S}^j}^{\top}L_{\hat{S}_j})}{\lambda_{\min}(L_{\hat{S}^j}^{\top}L_{\hat{S}_j})} \leq \frac{1+\delta_{2s}}{1-\delta_{2s}} \leq 2.64
\end{equation*}
The upshot of this is that it only requires a constant number of iterations of conjugate gradient to approximate the solution to the least-squares Problem \eqref{eq:Needell_LSQR} to within an acceptable tolerance. Indeed, Corollary 5.3 of \cite{NT09} argues that three iterations suffices. We play it safe by performing ten iterations. The cost of each iteration of conjugate gradient is equal to (a constant times) the cost of a matrix vector multiply by $L_{\hat{S}_{j}}$ or $L^{\top}_{\hat{S}_{j}}$, which is $\mathcal{T}_{m}$. Hence the total cost of step (3) of {\tt ClusterPursuit} is $O(m\mathcal{T}_{m}) = O(\log(n)\mathcal{T}_{m})$ because we are taking $m=O(\log(n))$. 
\end{proof}

As a direct consequence of Theorems \ref{thm:Run_Time:RWT} and \ref{thm:Run_Time:CP}, we get that {\tt CP+RWT} runs in time $O((nd_{\max}\log(n))$. If the number of clusters, $k$, is $O(1)$, we get that {\tt ICP+RWT} also runs in time $O((nd_{\max}\log(n))$.

\section{Comparison with Existing Literature}
\label{sec:LiteratureReview}
{\tt ClusterPursuit} can naturally be compared with other cut improvement algorithms such as {\tt FlowImprove} \cite{A08}, {\tt LocalFlow} \cite{O14} and {\tt SimpleLocal} \cite{V16}. We note that the performance guarantees for these three algorithms are of a different flavor to ours. Specifically, and translating into the notation of this paper, they bound the conductance of the improved cut, $C_1^{\#}$, by some function of the original cut, $\Omega$. In contrast, our performance guarantees for {\tt ClusterPursuit} are of a more statistical nature. In terms of run-time, {\tt LocalFlow} and {\tt SimpleLocal} are strongly local, so have run times $O(\text{vol}(\Omega)^{\alpha})$ for $\alpha \geq 1$. While this is certainly better than {\tt ClusterPursuit} for finding small clusters, {\em ie} when $|\Omega| = O(1)$, these run times become less attractive for even moderate sized clusters, {\em eg} $|C_1| = O(\sqrt{n})$. In \S \ref{sec:NumericalExperiments} we demonstrate that {\tt ClusterPursuit} is several orders of magnitude faster than {\tt FlowImprove} and {\tt SimpleLocal} in the regime $|C_1| = O(n)$. \\

The idea of combining a fast, diffusion based clustering algorithm with a refinement procedure to create a local clustering algorithm is not new. See, for example, the algorithms  {\tt LEMON} \cite{HSBH15,LHKBH18}, {\tt LOSP} and {\tt LOSP++} \cite{LHBH15}, {\tt LBSA} \cite{SHBH19},  and {\tt FlowSeed} \cite{V19}. We compare {\tt CP+RWT} to a selection of these algorithm in \S \ref{sec:NumericalExperiments}. We note that there exist many diffusion-based local clustering algorithms that may find better approximations to $C_1$ than {\tt RWThresh}. See for example, {\tt PPR-Grow} \cite{ACL07}, {\tt HK-Grow} \cite{KG14} or {\tt CapacityReleasingDiffusion} \cite{W17}. We emphasize that the main advantage of {\tt RWThresh} is that it rapidly and provably finds good enough initial cuts, $\Omega$, to be fed into {\tt ClusterPursuit}. We show in \S \ref{sec:NumericalExperiments} that the combination {\tt CP+RWT} typically outperforms these diffusion-only approaches, particularly for large, sparsely connected clusters. \\

The analysis of {\tt CP+RWT} contained in \S \ref{sec:ClusterPursuit}--\ref{sec:CPRWT} can be compared to the recent works \cite{W17} and \cite{H19}. In both the performance of a local clustering algorithm on graphs drawn from a certain probabilistic model is studied. In both papers, the model is more general in one sense: there is no restriction on the structure of $V\setminus C_1$, but more restrictive in other senses: the ratio $d^{\text{out}}/ d^{\text{in}}$ must be at most $O(1/ \log^2(n_1)) $ in \cite{W17} while the results in \cite{H19} are most meaningful when $n_1 = O(1)$ and $d^{\text{out}} = O(1)$. In contrast, our results tackle the regime where $n_1 = O(n)$ and $d^{\text{out}}/ d^{\text{in}}$ can be bounded by an arbitrarily slowly decaying function of $n_1$. \\

Finally, we mention several recent works that combine notions of sparsity and local clustering. In particular, we mention the works of Fountoulakis, Gleich, Mahoney {\em et al} \cite{GM14,F16,H19} which introduce and study the $\ell_1$ regularized page rank problem. The algorithms {\tt LOSP} and {\tt LOSP++} also set up and solve a sparse recovery problem, although with an additional non-negativity requirement. However, to the best of the authors' knowledge, {\tt ClusterPursuit} is the first algorithm that explicitly phrases the problem of improving a cut, $\Omega$, as the problem of finding a sparse change to the indicator vector $\bfone_{\Omega}$. 

\section{Which Probabilistic Models Satisfy our Assumptions?}
\label{sec:SBM}
First, we verify that a well-studied model of graphs with clusters, namely the stochastic block model, satisfies Assumptions (A1)--(A4) of \S \ref{sec:DataModel}. We first remind the reader of the simpler \ER model:
 
\begin{definition}
We say $G = (V,E)$ is drawn from the \ER model on $n$ vertices with parameter $p$ (and write $G \sim \text{ER}(n,p)$) if $V = [n]$ and $\mathbb{P}[\{i,j\} 
\in E] = p$ for $i,j \in V$, with all such probabilities being independent.
\end{definition}  

\begin{definition}[\cite{HL83,A17}]
Let $\mathbf{n} = (n_1,\ldots, n_k)$ be a vector of positive integers, and let $P$ be a $k\times k$ symmetric matrix with entries $P_{ab} 
\in [0,1]$ for all $a,b$. We say a graph $G = (V,E)$ is drawn from the Stochastic Block Model (written $G \sim 
\text{SBM}(\mathbf{n},P)$) if there exists a partition $V = C_1\cup C_2\ldots \cup C_k$ with $|C_a| = n_a$ such that any 
vertices $i\in C_a$ and $j\in C_b$ are connected by an edge with probability $P_{ab}$, and all edges are inserted independently.
\end{definition} 

Note that if $G \sim \text{SBM}(\mathbf{n},P)$ then each $G_{C_a}\sim \text{ER}(n_a, P_{aa})$. Without loss of generality, we shall assume that $n_{1}\leq n_{2} \leq \ldots \leq n_{k}$. In an appendix, we shall prove the following:

\begin{theorem}
Suppose that $n_1\to \infty$, $P_{aa} = \omega\log(n)/n_a$ for any $\omega\to \infty$ and $P_{ab} = (\beta + o(1))\log(n)/n$ for any $a\neq b$ where $\beta$ is a constant. Then $\text{SBM}(\mathbf{n},P)$ satisfies assumptions (A1)--(A4). 
\label{theorem:Assumptions_for_SBM}
\end{theorem}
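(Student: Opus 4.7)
The plan is to verify each of (A1)--(A4) in turn. Assumption (A1) is immediate from the definition of $\mathrm{SBM}(\mathbf{n}, P)$ together with the implicit requirement that $k = O(1)$. Assumptions (A3) and (A4) reduce to concentration of vertex degrees via scalar Chernoff bounds, while (A2) requires spectral concentration of the normalized Laplacian of each cluster viewed as an Erd\H{o}s--R\'enyi subgraph.

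For (A4), fix $i \in C_a$. Then $d_i^{\text{in}}$ is a sum of $n_a - 1$ independent $\mathrm{Bernoulli}(P_{aa})$ random variables with mean $\mu^{\text{in}} := (n_a - 1)P_{aa} = (1 + o(1))\omega\log(n)$, which crucially is asymptotically the same for every vertex regardless of the cluster (this is what makes the definition $d_{\text{av}}^{\text{in}} = \mathbb{E}[d_i^{\text{in}}]$ sensible). The multiplicative Chernoff bound yields
$$\Pr\bigl[|d_i^{\text{in}} - \mu^{\text{in}}| > \eta\,\mu^{\text{in}}\bigr] \leq 2\exp(-\eta^2\mu^{\text{in}}/3).$$
Choosing $\eta = \omega^{-1/4}$ gives exponent $\Omega(\omega^{1/2}\log n)$, so a union bound over the $n$ vertices shows $d_{\max}^{\text{in}}, d_{\min}^{\text{in}} \in (1 \pm o(1))\mu^{\text{in}}$ almost surely, establishing (A4) with $\epsilon_3 = o(1)$. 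For (A3), the same argument applied to $d_i^{\text{out}}$, whose mean is $\sum_{b \neq a}n_b P_{ab} \leq (\beta + o(1))\log(n)$, together with the lower bound on $d_i^{\text{in}}$, gives $r_i \leq (1+o(1))\beta/\omega = o(1)$ uniformly in $i$ almost surely.

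The main technical step is (A2). By Lemma 2.3, $L_{G_{C_a}}$ and $L_{G_{C_a}}^{\text{sym}}$ have the same spectrum, so I would work with the latter. Since $G_{C_a} \sim \mathrm{ER}(n_a, P_{aa})$ with minimum expected degree $\omega\log(n) \gg \log n_a$, I would invoke the Chung--Radcliffe concentration bound for normalized Laplacians: with probability $1 - o(1)$,
$$\bigl\|L_{G_{C_a}}^{\text{sym}} - \mathcal{L}_a\bigr\|_2 \leq C\sqrt{\log(n_a)/\delta_a} = O(1/\sqrt{\omega}) = o(1),$$
where $\mathcal{L}_a$ is the normalized Laplacian of the expected graph and $\delta_a = \min_i d_i^{\text{in}} = \Theta(\omega\log n)$ almost surely by (A4). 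A direct computation gives $\mathcal{L}_a = \frac{n_a}{n_a - 1}I - \frac{1}{n_a-1}J$, whose spectrum is $\{0\}$ (with eigenvector $\bfone$) and $\{n_a/(n_a-1)\}$ with multiplicity $n_a - 1$. Weyl's inequality then yields $|\lambda_j(L_{G_{C_a}}^{\text{sym}}) - 1| = o(1)$ for all $j \geq 2$, and a union bound over the $k = O(1)$ clusters establishes (A2) with $\epsilon_1 = O(1/\sqrt{\omega}) + O(1/n_1) = o(1)$.

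The main obstacle is (A2), since a scalar Chernoff bound does not suffice and one must invoke a matrix/spectral concentration inequality. The hypothesis $P_{aa} = \omega\log(n)/n_a$ with $\omega \to \infty$ is tailored precisely so that the minimum expected degree exceeds $\log n$ by a diverging factor, which is the threshold at which Chung--Radcliffe (or equivalent Oliveira-style bounds) can push $\epsilon_1$ to $o(1)$. A minor subtlety is that such results compare $L^{\text{sym}}$ to the normalized Laplacian of the expected graph (not to $\mathbb{E}[L^{\text{sym}}]$); this is exactly the version needed, and the degree concentration from (A4) ensures the random normalization $D^{-1/2}$ is close to its expected counterpart.
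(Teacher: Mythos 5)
Your proposal follows essentially the same route as the paper: (A1) is immediate, (A3)--(A4) come from degree concentration, and the crux (A2) is handled by the Chung--Radcliffe concentration bound for normalized Laplacians of random graphs whose minimum expected degree is $\Theta(\omega\log n)\gg\log n$ --- exactly the paper's argument, except that the paper cites Theorem 4 of \cite{CR11} in a form that already packages the comparison with the expected graph and the Weyl step you carry out explicitly. The only real difference is that you establish the degree bounds by hand via Chernoff plus a union bound, where the paper cites Bollob\'as (out-degrees) and Frieze--Karo\'nski (in-degrees); for $d^{\text{in}}_i$, whose mean $(1+o(1))\omega\log n$ diverges relative to $\log n$, your choice $\eta=\omega^{-1/4}$ does beat the union bound, so (A4) is fine. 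One caveat on (A3): you cannot literally run ``the same argument'' on $d^{\text{out}}_i$, since its mean is only $\Theta(\log n)$; any deviation parameter $\eta=o(1)$ gives exponent $o(\log n)$, the union bound over $n$ vertices fails, and indeed the a.s.\ maximum out-degree exceeds its mean by a constant factor depending on $\beta$ (this is precisely why the paper invokes Bollob\'as's bound $d^{\text{out}}_{\max}\leq(1+\eta(\beta))\beta\log n$). So the stated inequality $r_i\leq(1+o(1))\beta/\omega$ is not justified as written, but the flaw is harmless: a Chernoff bound with a constant (possibly $\beta$-dependent) multiplicative slack still yields $d^{\text{out}}_{\max}=O(\log n)$ a.s., hence $r_i=O(1/\omega)=o(1)$ uniformly, which is all that (A3) requires.
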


\begin{proof}
See Appendix \ref{section:Assumptions_for_SBM}.
\end{proof}

As a consequence of this theorem we have that, given a small fraction of vertices in $C_1$, {\tt CP+RWT} will reliably return a $C_1^{\#}$ with $|C_1\bigtriangleup C_1| = o(n_1)$. We experimentally confirm this in \S \ref{sec:NumericalExperiments} for $\omega \sim \log(n)$. In this regime we have that $d_{\max} = O(\log^{2}(n))$ with high probability, hence the run time of {\tt CP+RWT} is $O(n\log^{3}(n))$ by Theorem \ref{thm:Run_Time:CP}.  \\

It is interesting to contrast this result with what is known for the global clustering problem for the stochastic block model. There are several unsupervised algorithms, see for example\cite{AS15} and \cite{MNS18}, that return a partition $V = C_1^{\#}\cup C_{2}^{\#}\cup\ldots\cup C_{k}^{\#}$ such that $C_{a}^{\#} = C_a$ with high probability. However these approaches either have impractically high run times \cite{MNS18} or are tricky to implement in practice \cite{AS15}. In contrast, {\tt CP+RWT} has a low run time, in theory and in practice, and can be implemented in a few lines of code. In addition, the ``one cluster at a time'' nature of {\tt CP+RWT} affords an additional flexibility that may be useful in certain circumstances. \\

On the other hand, we have had less success with using {\tt CP+RWT} for certain random geometric graphs arising as $K$-NN graphs of point clouds in $\mathbb{R}^{d}$. We note that {\tt CP+RWT} is most effective when the adjacency matrix of the $K$-NN graph is sparse but has its non-zero entries uniformly distributed. In contrast, for certain artificial data sets, for example points drawn from a thickened line or sphere embedded in a high dimensional space, this adjacency matrix tends to exhibit a banded structure---at least when nearest neighbors are determined using the Euclidean metric. Experimentally, we have observed that {\tt CP+RWT} performs poorly on these data sets. However, this problem is to a large extent particular to the use of the Euclidean metric. In particular, when a data-driven metric such as those detailed in \cite{MD19} is used to construct the $K$-NN graph, {\tt CP+RWT} performs much better. Moreover, even when using the Euclidean metric {\tt CP+RWT} still performs extremely well on real data sets, such as MNIST, COIL and Optdigits, which are frequently thought of as consisting of data points drawn from a low-dimensional manifold embedded in a high dimensional space (see \S \ref{sec:MLBenchmarks}). 

\section{Numerical Experiments}
\label{sec:NumericalExperiments}

We compare the algorithms {\tt ClusterPursuit}, {\tt CP+RWT} and {\tt ICP+RWT} to the state of the art on the various problems they are designed to solve. Specifically, in \S 10.1 we compare the performance of {\tt ClusterPursuit} on the cut improvement task to two baseline algorithms, namely {\tt FlowImprove} and {\tt SimpleLocal}, for graphs drawn from the stochastic block model. We also compare {\tt CP+RWT} to the local clustering algorithms {\tt HK-Grow}, {\tt PPR-Grow} and {\tt LBSA} for the same data.\footnote{ While there are certainly other worthy local clustering algorithms that deserve to be included, such as {\tt CapacityReleasingDiffusion} \cite{W17} and {\tt FlowSeed} \cite{V19}, we stick to algorithms with a freely available MATLAB implementation}. In \S 10.2 we repeat this experiment for social networks. We take care to choose our data sets and performance measures to allow for easy comparison with similar work in \cite{W17}. In \S \ref{sec:MLBenchmarks} we test the performance of {\tt ICP+RWT} on two data sets commonly studied in the machine learning community---MNIST and OptDigits. We provide a detailed description of the implementation of all algorithms considered in Appendix \ref{sec:NumericalParameters}.

\subsection{Synthetic Data Sets}
\label{sec:SyntheticExperiments}
We consider graphs drawn from $\text{SBM}(\mathbf{n}^{(i)},P^{(i)})$ for two different sets of parameters. The first 
set: $\mathbf{n}^{(1)} = (n_1, 1.5n_1, 2.5n_1, 5n_1)$ and $P^{(1)}$  with $P_{aa} = \log^2(n)/2$ and $P_{ab} = 
5\log(n)/n$ for all $a\neq b$ is designed to satisfy the conditions of Theorem \ref{theorem:Assumptions_for_SBM} while 
presenting a challenge to existing clustering algorithms. The second set: $\mathbf{n}^{(2)} = (n_1,10n_1)$ and $P^{(2)} 
= \left[\begin{smallmatrix} 2\log^2(n)/n & \log(n)/n \\ \log(n)/n & \log(n)/n \end{smallmatrix} \right]$ goes beyond the 
assumptions of Theorem \ref{theorem:Assumptions_for_SBM} and is essentially the planted cluster model studied in 
\cite{H19} and elsewhere. For both sets of parameters we perform two experiments. In the first we test the performance 
of the three cut improvement algorithms when initialized with an $\Omega$ ``close'' to $C_1$. This $\Omega$ is found 
using {\tt RWThresh}. In the second we compare the performance of {\tt CP+RWT} with the performance of the local 
clustering algorithms mentioned above. For both experiments we report both run time and accuracy, 
as measured by the Jaccard Index in Figure~\ref{fig:SCP_Synthetic_Experiments} and in 
Figure~\ref{fig:SCP_Synthetic_Experiments_Weak_Background}, respectively.  

\begin{figure}[htpb]
\centering
 \includegraphics[width =1.8in]{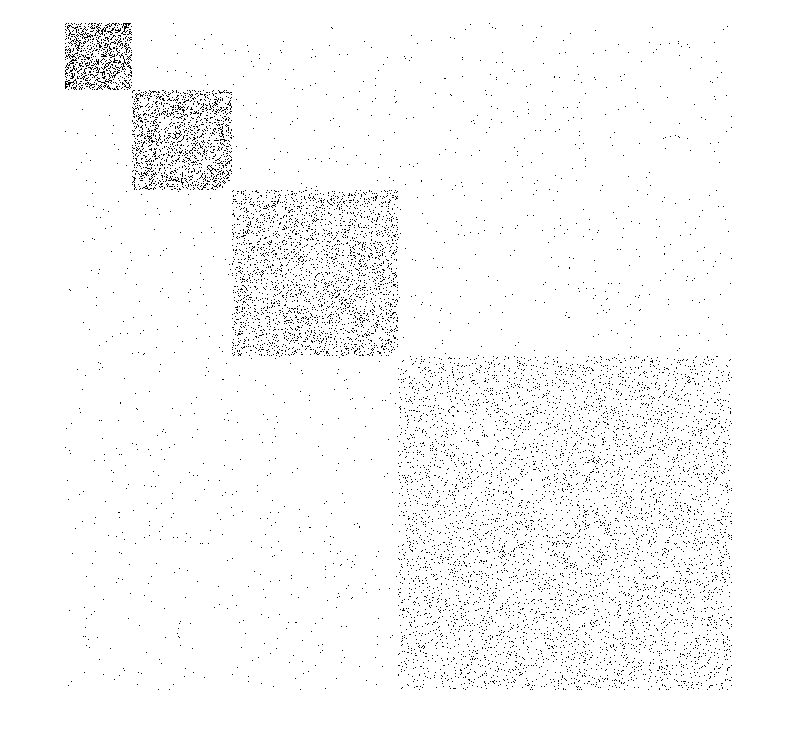} 
\includegraphics[width =1.8in]{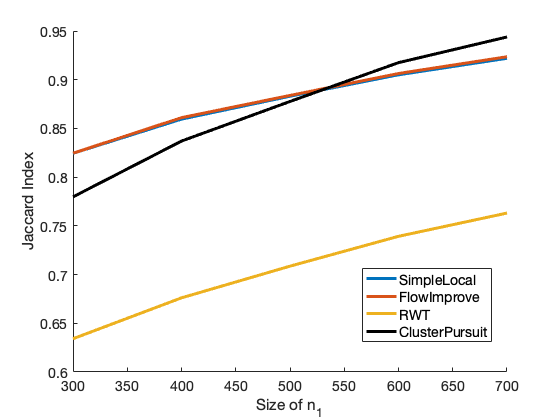}
\includegraphics[width =1.8in]{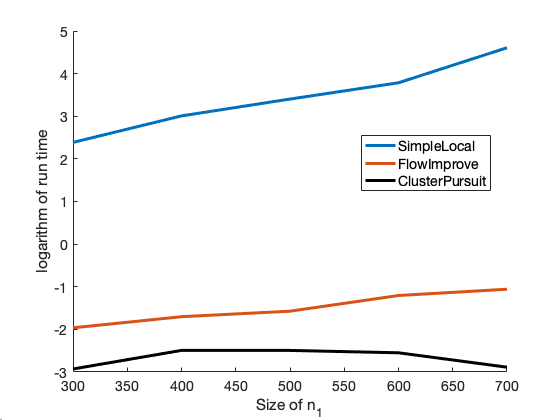} \\

\includegraphics[width =1.8in]{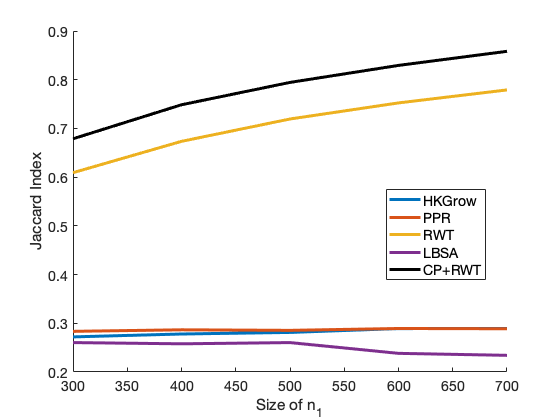}
\includegraphics[width =1.8in]{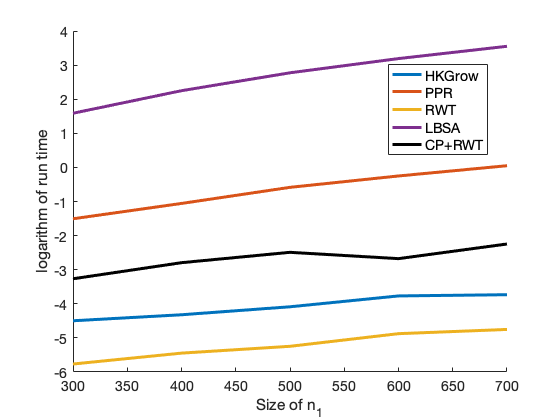}
\includegraphics[width =1.8in]{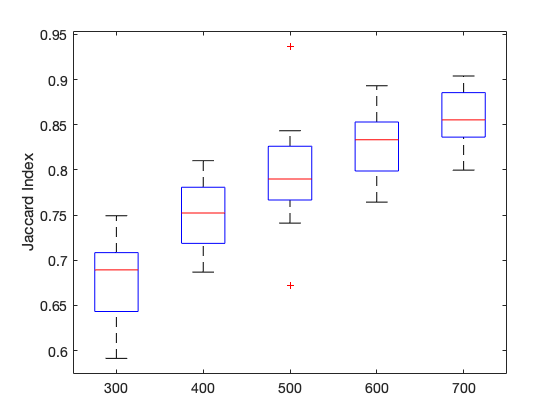}

\caption{ {\em Top row, left to right:} Stylized representation of the adjacency matrix of graphs drawn from 
$\text{SBM}(\mathbf{n}^{(1)},P^{(1)})$, Jaccard index for results of cut improvement ({\tt SimpleLocal} and {\tt 
FlowImprove} always have the same Jaccard index) and(log. of) run time for the three cut improvement algorithms. 
Note that {\tt ClusterPursuit} is at least an order of magnitude faster than the other two, even though {\tt 
FlowImprove} is implemented in C. 
{\em Bottom row, left to right:} Jaccard index for local clustering (The poor performance of the other methods is not an implementation issue. Rather, it is a consequence of the small gap between $P^{(1)}_{aa}$ and $P^{(1)}_{ab}$). (Log. of) run time for local clustering. Box plot of Jaccard index for {\tt CP+RWT}. \label{fig:SCP_Synthetic_Experiments}}
\end{figure}

\begin{figure}[htpb]
\centering
 \includegraphics[width =1.8in]{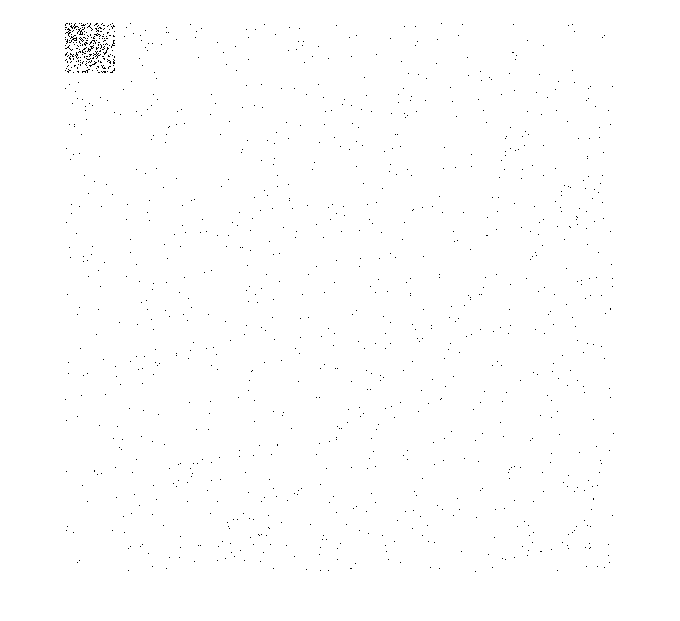} 
\includegraphics[width =1.8in]{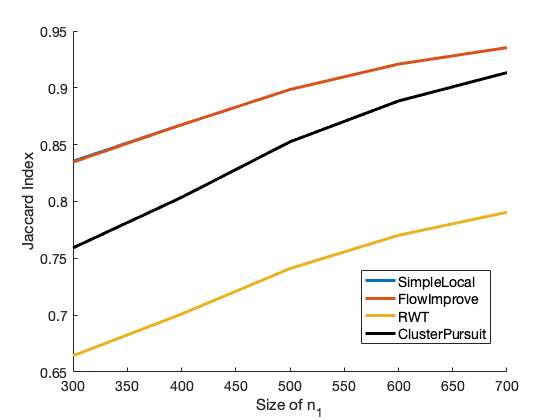}
\includegraphics[width =1.8in]{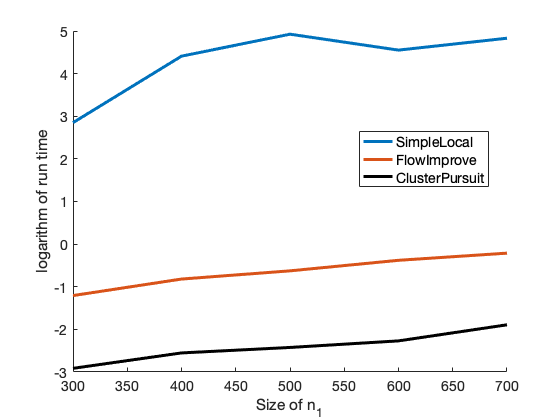} \\

\includegraphics[width =1.8in]{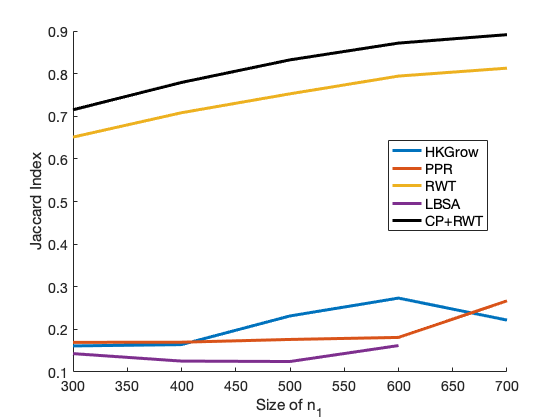}
\includegraphics[width =1.8in]{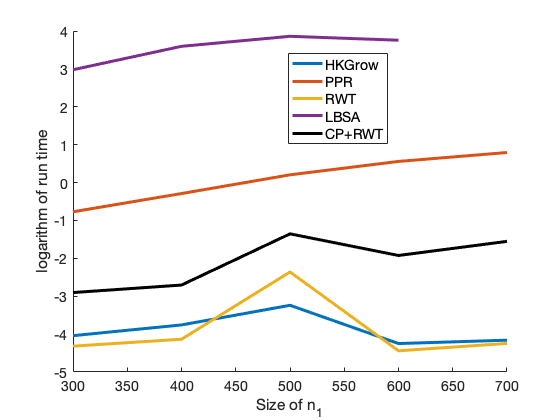}
\includegraphics[width =1.8in]{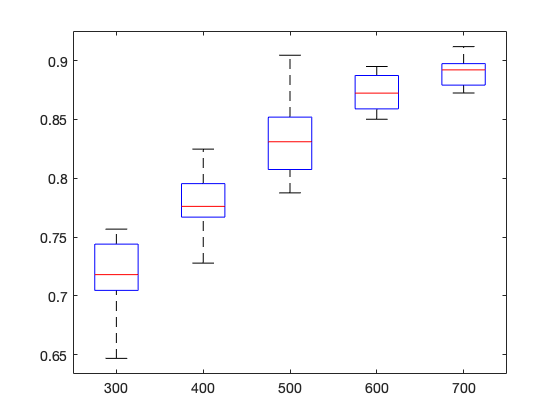}

\caption{ {\em Top row, left to right:} Stylized representation of the adjacency matrix of graphs drawn from $\text{SBM}(\mathbf{n}^{(2)},P^{(2)})$, Jaccard index for results of cut improvement (again, {\tt SimpleLocal} and {\tt FlowImprove} always have the same Jaccard index). (log. of) Run time for the three cut improvement algorithms. {\em Bottom row, left to right:} Jaccard index for local clustering (Again, the poor performance of the benchmark methods is a consequence of the challenging SBM parameters chosen).  (Log. of) run time for local clustering. Box plot of Jaccard index for {\tt CP+RWT}.
\label{fig:SCP_Synthetic_Experiments_Weak_Background} }
\end{figure}

\subsection{Social Networks}
\label{sec:SocialNetworks}
The {\tt facebook100} dataset consists of anonymized Facebook friendship networks at $100$ American universities, and was first introduced and studied in \cite{TMP12}. Certain demographic markers (year of entry, residence etc.) were also collected in an anonymized format. One can think of vertices sharing the same marker as defining a ground truth cluster, although some of these clusters are extremely noisy. We focus on four clusters identified in \cite{W17} as having good ({\em ie} low) or moderately good conductance scores, namely Johns Hopkins class of 2009, Rice University dorm 203, Simmons College class of 2009 and Colgate University class of 2006. The details of these clusters are displayed in Table \ref{table:NetworkStatistics}. For ease of comparison with the results of \cite{W17} we report accuracy using precision and recall scores. We remind the reader that, in the notation of this paper, $\text{precision} = |C_1\cap C_1^{\#}|/|C_1^{\#}|$ and $\text{recall} = |C_1\cap C_1^{\#}|/|C_1|$. It is desirable to have both of these values as close to $1$ as possible. For all four experiments we take $\Gamma$ to be selected uniformly and at random from $C_1$, with $|\Gamma| = 0.02n_1$. We average over fifty independent trials. There results are shown in Figure~\ref{fig:Social_Network_Results}. 

\begin{table}[htpb]
\centering
	\begin{tabular}{ccccc}
	\hline
		School & Cluster & Size of graph & Size of Cluster & Conductance \\ 
		Johns Hopkins & Class of 2009 & $5180$ & $910$ & $0.21$ \\
		Rice & Dorm. 203 & $4087$ & $406$ & $0.47$ \\
		Simmons & Class of 2009 & $1518$ & $289$ & $0.11$ \\
		Colgate & Class of 2006 & $3482$ & $557$ & $0.49$ \\
	\hline
	\end{tabular}
	\caption{Basic properties of the four social networks studied.}
	\label{table:NetworkStatistics}
\end{table}  

\begin{figure}[htpb]
\centering
\includegraphics[width =2in]{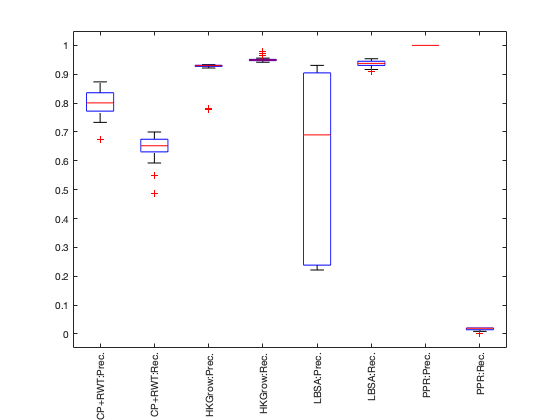}
\includegraphics[width =2in]{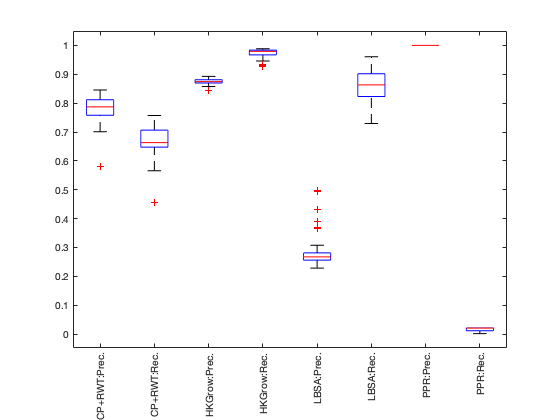} \\

\includegraphics[width =2in]{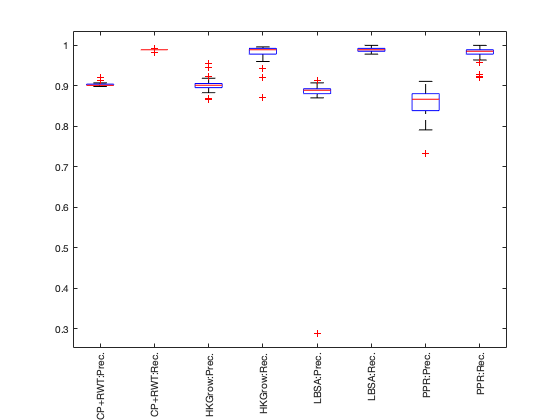}
\includegraphics[width =2in]{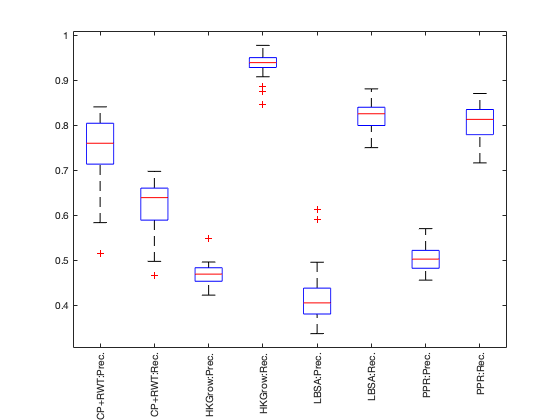}
\caption{Precision and Recall for various local clustering algorithms on the social networks described in Table \ref{table:NetworkStatistics}. Clockwise from top left: Johns Hopkins, Rice, Colgate and Simmons. Note that {\tt CP+RWT} consistently achieves high precision without sacrificing recall.
\label{fig:Social_Network_Results}}
\end{figure}

\subsection{Machine Learning Benchmarks}
\label{sec:MLBenchmarks}
We consider two venerable benchmark data sets:\\

{\bf OptDigits.} This data set consists of grayscale images of handwritten digits 0--9 of size $8\times 8$. There are $n = 5620$ images and the clusters are fairly well balanced with approximately $560$ images of each digit.

{\bf MNIST.} This data set also consists of grayscale images of the handwritten digits 0--9 although here there are $n = 70\ 000$ images, all of size $28 \times 28$. There are approximately $7\ 000$ images of each digit. \\

For each data set we form a $k$-NN graph using the procedure presented in \cite{JME18} and described in detail in Appendix \ref{sec:PreprocessingImages}. The labeled data, $\Gamma_a$, was sampled uniformly at random from $C_a$, and each is of size $g|C_a|$. The accuracy of the classification given by {\tt ICP+RWT}, for increasing $g$, is presented in Table \ref{table:ICPRWT_results}. All results are averaged over twenty independent trials.

\begin{table}
	\centering
	\begin{tabular}{c|ccccc}
	\hline
	\backslashbox{Data Set}{$\%$ Labeled Data} & $0.5$ & $1$ & $1.5$ & $2$ & $2.5$  \\
	\hline
	MNIST & $96.41\%$ & $97.32\%$ & $97.44\%$ & $97.52\% $ & $97.50\%$ \\
	OptDigits & $91.88\%$ & $95.47\%$ & $97.16\%$ & $98.06\%$ & $98.08\%$ \\
	\hline
	\end{tabular}
	\caption{Classification accuracy, as a function of amount of labeled data, for {\tt ICP+RWT} on two well-studied benchmark data sets.}
	\label{table:ICPRWT_results}
\end{table}

\begin{table}[htpb]
\centering
\begin{tabular}{ccc}
\hline
Method & Labeled & Accuracy \\
\hline
TVRF \cite{YT18} & $600$ & $96.8\%$ \\
{\tt ICP+RWT} & $700$ & $97.32\%$ \\
Multi-Class MBO with Auction Dynamics \cite{JME18} & $700$ & $97.43\% $ \\
{\tt ICP+RWT} & $1050$ & $97.44\%$ \\
Ladder Networks \cite{RBHVR15} & $1000$ & $99.16\%$ \\
\hline
\end{tabular}
\caption{Comparing {\tt ICP+RWT} to other, state-of-the-art, semi-supervised methods on MNIST. TVRF and Multi-Class MBO are graph-based, and have similar run times to {\tt ICP+RWT}. The Ladder Network approach uses a deep neural network and hence requires training ($\sim 2$ hours on a GPU) before it can be used for classification.}
\label{table:MNIST_Results}
\end{table}

\bibliographystyle{plain}

\begin{appendices}

\section{Restricted Isometry Property for Laplacians\label{A:Part2Proof}}
In this section, we prove parts (1) and (2) of Theorem \ref{thm:BoundAllQuantities}. We proceed via a series of lemmas.
  
\subsection{Restricted Isometry Property for $L^{\text{in}}$}
 
\begin{lemma}
\label{lemma:GenGraphDeltaBound}
Let $G$ be any connected graph on $n_0$ vertices, and let $s < n_0$. Let $\lambda_{i}:= \lambda_{i}(L)$ denote the $i$-th smallest eigenvalue of $L$. Then:
\begin{equation*}
\delta_{s}(L) \leq \max\{1 - \lambda_{2}^{2}\left(\frac{d_{\min}}{d_{\max}} - \frac{d_{\max}}{d_{\min}}\frac{s}{n_0}\right),
\lambda_{\max}^{2}-1\}.  
\end{equation*}
\end{lemma}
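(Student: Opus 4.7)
The plan is to reduce the RIP bound for the (non-symmetric) random-walk Laplacian $L$ to a spectral question for the symmetric normalized Laplacian $L^{\mathrm{sym}}$, which is PSD, has the same eigenvalues as $L$ by Lemma~\ref{lemma:Evals_of_L_Lsym_P}, and therefore admits a clean eigendecomposition. The starting point is the similarity $L = D^{-1/2}L^{\mathrm{sym}}D^{1/2}$. Substituting $\bfz := D^{1/2}\bfx$ (which preserves the support, so $\bfz$ remains $s$-sparse) yields $L\bfx = D^{-1/2}L^{\mathrm{sym}}\bfz$ and therefore
\begin{equation*}
\tfrac{1}{d_{\max}}\|L^{\mathrm{sym}}\bfz\|_2^2 \;\leq\; \|L\bfx\|_2^2 \;\leq\; \tfrac{1}{d_{\min}}\|L^{\mathrm{sym}}\bfz\|_2^2, \qquad d_{\min}\|\bfx\|_2^2 \;\leq\; \|\bfz\|_2^2 \;\leq\; d_{\max}\|\bfx\|_2^2.
\end{equation*}
Everything then reduces to a two-sided bound on $\|L^{\mathrm{sym}}\bfz\|_2^2$ for $s$-sparse $\bfz$.

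For the upper RIP side I would simply apply $\|L^{\mathrm{sym}}\bfz\|_2 \leq \lambda_{\max}\|\bfz\|_2$ and chain the displayed norm bounds; this produces the second entry $\lambda_{\max}^{2}-1$ of the $\max$, and no use of sparsity is required.

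The lower side is where the sparsity of $\bfx$ must enter, and is the step I expect to require the most care. Because $G$ is connected, $\ker(L^{\mathrm{sym}})$ is one-dimensional and spanned by the unit vector $\bfv_1 := D^{1/2}\bfone/\sqrt{\text{vol}(G)}$, so the spectral theorem gives
\begin{equation*}
\|L^{\mathrm{sym}}\bfz\|_2^2 \;\geq\; \lambda_2^2 \bigl(\|\bfz\|_2^2 - \langle \bfv_1, \bfz\rangle^2\bigr).
\end{equation*}
The key step is then to control the projection onto the kernel: expanding $\langle\bfv_1,\bfz\rangle = \bigl(\sum_{i\in S} d_i x_i\bigr)/\sqrt{\text{vol}(G)}$, where $S = \supp(\bfx)$ has $|S| \leq s$, and applying Cauchy--Schwarz on the support gives $\langle\bfv_1,\bfz\rangle^2 \leq s\,d_{\max}^2\,\|\bfx\|_2^2/\text{vol}(G)$. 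Combining this with $\text{vol}(G) \geq n_0 d_{\min}$, with the lower bound $\|\bfz\|_2^2 \geq d_{\min}\|\bfx\|_2^2$, and finally dividing by $d_{\max}$, should produce precisely $\lambda_2^2\bigl(d_{\min}/d_{\max} - (d_{\max}/d_{\min})(s/n_0)\bigr)\|\bfx\|_2^2$, which matches the first entry of the $\max$.

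The main obstacle is the asymmetry between $\bfx$ and $\bfz$: the sparsity constraint lives in $\bfx$-coordinates, whereas the kernel direction $\bfv_1$ of $L^{\mathrm{sym}}$ is tilted away from $\bfone$ by the factor $D^{1/2}$. Tracking the degree weights $d_i$ carefully through both the Cauchy--Schwarz step and the normalization of $\bfv_1$ is exactly what forces both $d_{\min}/d_{\max}$ and $d_{\max}/d_{\min}$ to appear in the final bound rather than collapsing to a single ratio; this is the reason the estimate is sensitive to the regularity of $G$ rather than being purely spectral.
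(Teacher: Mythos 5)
Your lower-bound argument is correct, and it is essentially the paper's own: the same substitution $\bfz = D^{1/2}\bfx$, the same use of the one-dimensional kernel to get $\|L^{\text{sym}}\bfz\|_2^2 \geq \lambda_2^2\left(\|\bfz\|_2^2 - \langle \bfv_1,\bfz\rangle^2\right)$, and a degree-weighted Cauchy--Schwarz on the support that is equivalent to the paper's chain $\langle\bfv, D\bfone\rangle \leq d_{\max}\|\bfv\|_1 \leq d_{\max}\sqrt{s}\|\bfv\|_2$; combined with $\|\bfz\|_2^2 \geq d_{\min}\|\bfx\|_2^2$, $\mathrm{vol}(G) \geq n_0 d_{\min}$ and the prefactor $1/d_{\max}$, this does yield the first entry of the max exactly as in the paper.

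The gap is on the upper side. Chaining the displayed bounds as you propose gives $\|L\bfx\|_2^2 \leq \frac{1}{d_{\min}}\|L^{\text{sym}}\bfz\|_2^2 \leq \frac{\lambda_{\max}^2}{d_{\min}}\|\bfz\|_2^2 \leq \frac{d_{\max}}{d_{\min}}\lambda_{\max}^2\|\bfx\|_2^2$, i.e.\ an upper restricted-isometry contribution of $\frac{d_{\max}}{d_{\min}}\lambda_{\max}^2 - 1$, not $\lambda_{\max}^2 - 1$: the factor $1/d_{\min}$ coming from $D^{-1/2}$ on the left and the factor $d_{\max}$ coming from $\|\bfz\|_2^2 \leq d_{\max}\|\bfx\|_2^2$ on the right do not cancel, and since $d_{\max}/d_{\min} \geq 1$ this is strictly weaker than the stated second entry whenever the graph is irregular. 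So, as written, your proposal does not establish the lemma; the upper side needs a different argument. The paper sidesteps the degree factors entirely by writing $\|L\bfv\|_2 \leq \|L\|_2\|\bfv\|_2$ and identifying $\|L\|_2$ with $\lambda_{\max}$ (an identification that itself requires care, since $L$ is not symmetric and its operator norm need not equal its largest eigenvalue); your similarity-transform route makes visible why the upper bound is delicate, but by itself it only delivers the weaker constant $\frac{d_{\max}}{d_{\min}}\lambda_{\max}^2 - 1$.
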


\begin{proof}
Recall that the $s$-Restricted Isometry Constant $\delta_{s}(L)$ is the smallest $\delta$ such that, for any $\bfv$ with $\|\bfv\|_{0} \leq s$ and $\|\bfv\|_{2} = 1$:
$(1-\delta) \leq \|L\bfv\|_{2}^{2} \leq (1+\delta). $
The RHS bound is straightforward since 
\begin{equation*}
\|L\bfv\|_{2} \leq \|L\|_{2}\|\bfv\|_{2} = \lambda_{\max}(1) = \lambda_{\max}.
\end{equation*}
The LHS bound requires some work. Recall that $L = I - D^{-1}A$. This matrix is not symmetric, but $L^{\text{sym}} = I - D^{-1/2}AD^{-1/2}$ is. By Lemma \ref{lemma:Evals_of_L_Lsym_P} $L$ and $L^{\text{sym}}$ have the same eigenvalues. Let $\bfw_1,\ldots, \bfw_{n_0}$ be an orthonormal eigenbasis for $L^{\text{sym}}$. 
These eigenvectors are well studied (see, for example, \cite{C96}) and in particular $\bfw_{1} = 
\frac{1}{\sqrt{\text{vol}(G)}}D^{1/2}\bfone$ where $\bfone$ is the all-ones vector. Observe that:
\begin{equation*}
L\bfv = D^{-1/2}\left(D^{1/2}LD^{-1/2}\right)D^{1/2}\bfv = D^{-1/2}L^{\text{sym}}D^{1/2}\bfv = D^{-1/2}L^{\text{sym}} \bfz, 
\end{equation*}
where $\bfz := D^{1/2}\bfv$. It follows that:
\begin{equation}
\|L\bfv\|_{2}  = \| D^{-1/2}L^{\text{sym}} \bfz\|_{2} \geq \frac{1}{\sqrt{d_{\max}}}\|L^{\text{sym}} \bfz\|_{2}. 
\label{eq:LyBound}
\end{equation}
Express $\bfz$ in terms of the orthonormal basis $\{\bfw_1,\ldots, \bfw_{n}\}$, 
namely $\bfz = \sum_{i=1}^{n_0}\alpha_{i}\bfw_{i}$. Then:
\begin{align*}
\|L^{\text{sym}} \bfz\|^{2}_{2} = \|\sum_{i=1}^{n_0} \alpha_i\lambda_i\bfw_i\|^{2}_{2} = \|\sum_{i=2}^{n_0} 
\alpha_i\lambda_i\bfw_i\|^{2}_{2}  \geq \lambda_{2}^{2}\left(\sum_{i=2}^{n_0}\alpha_i^{2}\right) 
\end{align*}
and $\sum_{i=2}^{n_0}\alpha_i^{2} = \|\bfz\|_{2}^{2} - \alpha_{1}^{2}$. We now bound $\|\bfz\|_{2}$ and $\alpha_1$.
\begin{equation*}
\|\bfz\|_{2}^{2}  = \|D^{1/2}\bfv\|_{2}^{2} \geq \left(\sqrt{d_{\min}}\right)^{2}\|\bfv\|_{2}^{2} = d_{\min}
\end{equation*}
while:
\begin{equation*}
\alpha_1 = \langle \bfz,\bfw_1\rangle  = \langle D^{1/2}\bfv,\frac{1}{\sqrt{\text{vol}(G)}}D^{1/2}\bfone \rangle = \frac{1}{\sqrt{
\text{vol}(G)}}\langle \bfv,D\bfone\rangle \leq \frac{d_{\max}}{\sqrt{\text{vol}(G)}} \langle \bfv,\bfone\rangle.  
\end{equation*}
We now use the assumptions on $\bfv$. Specifically $\langle \bfv,\bfone\rangle \leq \|\bfv\|_{1} \leq \sqrt{s}\|\bfv\|_{2} = \sqrt{s}$ and so 
\begin{equation*}
\alpha_1 \leq d_{\max} \frac{\sqrt{s}}{\sqrt{\text{vol}(G)}} \leq  d_{\max} \frac{\sqrt{s}}{\sqrt{d_{\min}n_0}} = \frac{d_{\max}}{\sqrt{d_{\min}}}\frac{\sqrt{s}}{\sqrt{n_0}}.
\end{equation*}
Returning to equation \eqref{eq:LyBound}:
\begin{equation*}
\|L\bfv\|^{2}_{2} \geq \frac{1}{d_{\max}}\|L^{\text{sym}}\bfz\|_{2}^{2} \geq \frac{1}{d_{\max}}\lambda_{2}^{2}\left(d_{\min} - \frac
{d_{\max}^{2}}{d_{\min}}\frac{s}{n_0}\right) = 
\lambda_{2}^{2}\left( \frac{d_{\min}}{d_{\max}} - \frac{d_{\max}}{d_{\min}}\frac{s}{n_0}\right).   
\end{equation*}
These yield the desired estimate. 
\end{proof}

\begin{theorem}
Let $G\sim\mathcal{G}_n$ with $\mathcal{G}_n$ satisfying (A2) and (A4). Then for any $\gamma \in (0,1)$, we have that $\delta_{\gamma n_a}(L^{\text{in}}) \leq \frac{n_a}{n_1}\gamma + o(1)$.
\label{theorem:Bound_RIP_L_in}
\end{theorem}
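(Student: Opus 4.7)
The plan is to exploit the block-diagonal structure of $L^{\text{in}}$. Since $G^{\text{in}}$ has no edges between distinct clusters, after a suitable relabeling of vertices $L^{\text{in}}$ is block diagonal with blocks $L_{G_{C_1}}, \ldots, L_{G_{C_k}}$. For any $\bfv \in \mathbb{R}^n$, decomposing $\bfv = \sum_{b=1}^{k} \bfv^{(b)}$ with $\bfv^{(b)}$ supported on $C_b$, orthogonality gives $\|L^{\text{in}} \bfv\|_2^2 = \sum_b \|L_{G_{C_b}} \bfv^{(b)}\|_2^2$ and $\|\bfv\|_2^2 = \sum_b \|\bfv^{(b)}\|_2^2$. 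Thus it suffices to bound $\delta_{s_b}(L_{G_{C_b}}) \leq \gamma n_a/n_1 + o(1)$ uniformly in $b$, where $s_b := \|\bfv^{(b)}\|_0 \leq \min(\gamma n_a, n_b)$; summing the block-wise estimates will then yield the advertised bound on $\delta_{\gamma n_a}(L^{\text{in}})$.

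The main step is to apply Lemma~\ref{lemma:GenGraphDeltaBound} to each $G_{C_b}$. Assumption (A2) gives $\lambda_2(L_{G_{C_b}}) \geq 1 - \epsilon_1 > 0$ for $n$ large, so $G_{C_b}$ is connected almost surely and the hypothesis is met whenever $s_b < n_b$. In that case the lemma yields
\[
\delta_{s_b}(L_{G_{C_b}}) \leq \max\left\{1 - \lambda_2^2\left(\frac{d_{\min}}{d_{\max}} - \frac{d_{\max}}{d_{\min}}\frac{s_b}{n_b}\right),\ \lambda_{\max}^2 - 1\right\},
\]
where all four quantities refer to the subgraph $G_{C_b}$. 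Assumption (A2) further supplies $\lambda_{\max}^2 - 1 = o(1)$ and $\lambda_2^2 \geq 1 - o(1)$, while (A4) gives $d_{\max}/d_{\min} \leq (1+\epsilon_3)/(1-\epsilon_3) = 1 + o(1)$. Plugging in, the estimate collapses to $\delta_{s_b}(L_{G_{C_b}}) \leq s_b/n_b + o(1)$, and since $s_b \leq \gamma n_a$ and $n_b \geq n_1$ this is at most $\gamma n_a/n_1 + o(1)$ uniformly in $b$.

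The main obstacle is the edge case $s_b = n_b$, which falls outside the hypothesis of Lemma~\ref{lemma:GenGraphDeltaBound}; indeed $\bfone_{C_b} \in \ker L_{G_{C_b}}$ forces $\delta_{n_b}(L_{G_{C_b}}) \geq 1$. However, $s_b = n_b$ can only occur when $n_b \leq \gamma n_a$, so that $\gamma n_a/n_1 \geq n_b/n_1 \geq 1$. In this regime the trivial bound $\delta_{n_b}(L_{G_{C_b}}) \leq 1$, valid whenever $\|L_{G_{C_b}}\|_2 \leq 1 + \epsilon_1 \leq \sqrt{2}$, is already at most $\gamma n_a/n_1 + o(1)$, so the uniform block bound persists. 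Summing the block-wise inequalities then gives $\delta_{\gamma n_a}(L^{\text{in}}) \leq \gamma n_a/n_1 + o(1)$ almost surely, as claimed.
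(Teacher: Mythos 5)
Your proposal is correct and follows essentially the same route as the paper: exploit the block-diagonal structure of $L^{\text{in}}$, apply Lemma~\ref{lemma:GenGraphDeltaBound} to each block $L_{G_{C_b}}$, and use (A2) and (A4) to reduce the bound to $s/n_b + o(1) \leq \gamma n_a/n_1 + o(1)$. The only difference is that you explicitly treat the edge case $s_b \geq n_b$ (where the lemma's hypothesis fails and $\delta \geq 1$), which the paper glosses over and instead defuses in Remark~\ref{remark:Explaining_why_smallest_cluster} by noting the bound is only meaningful for $\gamma < n_1/n_a$ and restricting attention to $a = 1$.
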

\begin{proof}
Firstly, observe that $L^{\text{in}}$ is block diagonal with blocks $L_{G_{C_b}}$. For any block diagonal matrix we have that $\delta_{s}(L^{\text{in}}) = \max_{b}\delta_{s}(L_{G_{C_b}})$. By Lemma \ref{lemma:GenGraphDeltaBound} we have that:
\begin{equation}
\delta_{s}(L_{G_{C_b}}) \leq \max_{b}\{1 - \lambda_{2}(L_{G_{C_b}})^{2}\left(\frac{d^{\text{in}}_{\min}}{d^{\text{in}}_{\max}} - \frac{d^{\text{in}}_{\max}}{d^{\text{in}}_{\min}}\frac{s}{n_b}\right),
\lambda_{\max}(L_{G_{C_b}})^{2}-1\}.
\label{eq:L_G_C_a_RIP_bound}
\end{equation}
From assumption (A4) we get that:
\begin{equation*}
\frac{d^{\text{in}}_{\min}}{d^{\text{in}}_{\max}} =\frac{1-\epsilon_3}{1+\epsilon_3} = 1-o(1) \quad \text{ and } \quad \frac{d^{\text{in}}_{\max}}{d^{\text{in}}_{\min}} =\frac{1+\epsilon_3}{1-\epsilon_3} = 1+o(1). 
\end{equation*}
From assumption (A2) we get that:
\begin{equation*}
\lambda_{2}(L_{G_{C_b}})^{2} \geq  (1 - \epsilon_1)^{2} = 1 - 2\epsilon_1 + \epsilon_1^{2} = 1 - o(1)
\end{equation*}
and similarly $\lambda_{\max}(L_{G_{C_b}})^{2}-1 = o(1)$. Plugging this in to \eqref{eq:L_G_C_a_RIP_bound} with $s = \gamma n_a$ gives:
\begin{equation*}
\delta_{\gamma n_a}(L_{G_{C_b}})  \leq \max\left\{ \frac{\gamma n_a}{n_b} + o(1), o(1)\right\} \leq \gamma\frac{n_a}{n_1} + o(1) \ \implies \ 
\delta_{\gamma n_a}(L^{\text{in}}) \leq  \gamma\frac{n_a}{n_1} + o(1).
\end{equation*}
\end{proof}

\begin{remark}
\label{remark:Explaining_why_smallest_cluster}
We note that the RIP is only meaningful for $\delta_{\gamma n_a} < 1$. Hence the above theorem is only meaningful for $\gamma < \frac{n_1}{n_a} - o(1)$. To avoid this complicating technicality, we henceforth assume that $a = 1$, {\em i.e.} that the target cluster is $C_1$.
\end{remark}

\subsection{Bounding the size of the Perturbation}
 \begin{theorem}
Suppose that $G \sim \mathcal{G}_n$ with $\mathcal{G}_n$ satisfying (A3). If $L$ denotes the Laplacian of $G$ and $M:= L - L^{\text{in}}$ then $\|M\|_{2} \leq o(1)$.
\label{thm:MBound}
\end{theorem}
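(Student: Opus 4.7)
The plan is to compute the entries of $M = L - L^{\text{in}}$ explicitly, bound $\|M\|_{\infty}$ and $\|M\|_{1}$ separately, and then invoke the standard mixed-norm inequality $\|M\|_{2} \leq \sqrt{\|M\|_{1}\|M\|_{\infty}}$. This reduces the spectral-norm estimate to two elementary row/column sum computations.

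First I would read off the entries of $M$. Since the graph is simple, $M_{ii}=0$. For $i\neq j$ in the same cluster, using $A_{ij}=A^{\text{in}}_{ij}$,
\begin{equation*}
M_{ij} = -\frac{A_{ij}}{d_{i}} + \frac{A^{\text{in}}_{ij}}{d^{\text{in}}_{i}} = A_{ij}\left(\frac{1}{d^{\text{in}}_{i}}-\frac{1}{d_{i}}\right) = \frac{A_{ij}\, d^{\text{out}}_{i}}{d^{\text{in}}_{i}\, d_{i}},
\end{equation*}
whereas for $i,j$ in different clusters $A^{\text{in}}_{ij}=0$ and hence $M_{ij}=-A^{\text{out}}_{ij}/d_{i}$. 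The row-sum bound then follows effortlessly: summing $|M_{ij}|$ over $j$ and using $\sum_{j}A^{\text{in}}_{ij}=d^{\text{in}}_{i}$ together with $\sum_{j}A^{\text{out}}_{ij}=d^{\text{out}}_{i}$ produces two copies of $d^{\text{out}}_{i}/d_{i}$, so $\|M\|_{\infty}\leq 2\max_{i}d^{\text{out}}_{i}/d_{i}=2\max_{i}r_{i}/(1+r_{i})\leq 2\epsilon_{2}=o(1)$ by (A3).

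The column-sum bound is more delicate, because the denominators $d_{i}$ now vary across the summation index. Here I would appeal to (A4), which supplies $d_{i}\geq d^{\text{in}}_{i}\geq (1-\epsilon_{3})d^{\text{in}}_{\text{av}}$, to factor a uniform lower bound on $d_{i}$ out of the sum. For any column $j$ lying in some cluster $C_{a}$, the in-cluster contribution is at most $\epsilon_{2}\cdot d^{\text{in}}_{j}/((1-\epsilon_{3})d^{\text{in}}_{\text{av}})\leq \epsilon_{2}(1+\epsilon_{3})/(1-\epsilon_{3})=o(1)$ (using $\sum_{i}A^{\text{in}}_{ij}=d^{\text{in}}_{j}$ and then (A4) on $d^{\text{in}}_{j}$), and the cross-cluster contribution is bounded by $d^{\text{out}}_{j}/((1-\epsilon_{3})d^{\text{in}}_{\text{av}})$, which is also $o(1)$ after applying (A3) to $d^{\text{out}}_{j}\leq \epsilon_{2}d^{\text{in}}_{j}$. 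Thus $\|M\|_{1}=o(1)$, and combining gives $\|M\|_{2}\leq\sqrt{\|M\|_{1}\|M\|_{\infty}}=o(1)$.

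The main technical obstacle is the column-sum estimate: unlike the row sum, which collapses cleanly via degree identities, the column sum sees denominators $d_{i}$ drawn from rows that need not be close to $d^{\text{in}}_{\text{av}}$. Without a uniformity assumption on in-degrees one could imagine a single vertex $j$ whose neighbours all have unusually small degree, blowing up the column sum; it is precisely assumption (A4), which I would invoke alongside (A3), that rules out this pathology and completes the argument.
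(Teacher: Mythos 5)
Your proof is correct, but it takes a genuinely different route from the paper, and the comparison is instructive. Your entrywise computation of $M$ agrees with the paper's (the paper organizes it through the identity $\tfrac{1}{d^{\text{in}}_i+d^{\text{out}}_i}=\tfrac{1}{d^{\text{in}}_i}-\tfrac{1}{d^{\text{in}}_i}\tfrac{r_i}{r_i+1}$, arriving at the same row structure), but from there the paper bounds $\|M\|_2$ by the maximum absolute row sum via Gershgorin's disks, using \emph{only} (A3); it never touches column sums and never needs (A4). Your route instead uses the interpolation bound $\|M\|_2\le\sqrt{\|M\|_1\|M\|_\infty}$, which forces you to control column sums and hence to import (A4) --- an assumption not present in the theorem's statement. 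Strictly speaking you therefore prove a slightly weaker claim than the one stated; in practice this costs nothing, since every invocation of this theorem in the paper (Theorem \ref{thm:BoundAllQuantities} via Theorem \ref{theorem:Bound_epsY_and_epsPhi}) takes place under (A1)--(A4). What your approach buys in exchange is rigor: $M$ is not symmetric, so Gershgorin controls only the spectral radius, and equating that with the operator $2$-norm --- as the paper's displayed chain does --- is not justified for non-normal matrices; your mixed-norm inequality is valid for arbitrary matrices. Indeed, the pathology you describe at the end (a column vertex all of whose neighbours have unusually small degree, inflating a single column while every row sum stays below $2\epsilon_2$) is exactly the situation a row-sum-only argument cannot see, which is why some degree regularity such as (A4) genuinely enters your version of the proof rather than being a dispensable convenience.
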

\begin{proof}
Letting $\delta_{ij}$ denote the Kronecker delta symbol, observe that 
\begin{equation*}
L_{ij} := \delta_{ij} - \frac{1}{d_i}A_{ij} = \delta_{ij} - \frac{1}{d^{\text{in}}_{i} + d^{\text{out}}_{i}}\left(A^{\text{in}}_{ij} + A^{\text{out}}_{ij}\right).
\end{equation*}
Earlier we defined $r_i = d_i^{\text{out}}/d_i^{\text{in}}$. We now use the following easily verifiable identity:
\begin{equation*}
\frac{1}{d^{\text{in}}_{i} + d^{\text{out}}_{i}} = \frac{1}{d^{\text{in}}_{i}} - \frac{1}{d^{\text{in}}_{i}}
\left(\frac{r_i}{r_i+1}\right).
\end{equation*}
Thus:
\begin{align*}
L_{ij} &= \delta_{ij} - \left(\frac{1}{d^{\text{in}}_{i}} - \frac{1}{d^{\text{in}}_{i}}
\left(\frac{r_i}{r_i+1}\right)\right)\left( A^{\text{in}}_{ij} + A^{\text{out}}_{ij}\right) \\
&= \left(\delta_{ij}  - \frac{1}{d^{\text{in}}_{i}}A^{\text{in}}_{ij}\right)  - \frac{1}{d^{\text{in}}_{i}}A^{\text
{out}}_{ij} + \frac{1}{d^{\text{in}}_{i}}\left(\frac{r_i}{r_i+1}\right)\left( A^{\text{in}}_{ij} + A^{\text{out}}_{ij}\right)  \\
	& = L^{\text{in}}_{ij} - \frac{1}{d^{\text{in}}_{i}}\left( 1 - \frac{r_i}{r_i+1}\right) A^{\text{out}}_{ij} + \frac{1}{d^{\text{in}}_{i}}\left(\frac{r_i}{r_i+1}\right)A^{\text{in}}_{ij} \\
	& = L^{\text{in}}_{ij} - \frac{1}{d^{\text{in}}_{i}}\left(\frac{1}{r_i+1}\right)A^{\text{out}}_{ij} + \frac{1}{d^{\text{in}}_{i}}\left(\frac{r_i}{r_i+1}\right)A^{\text{in}}_{ij}. 
\end{align*}
That is, $M_{ij} = - \frac{1}{d^{\text{in}}_{i}}\left(\frac{1}{r_i+1}\right)A^{\text{out}}_{ij} + 
\frac{1}{d^{\text{in}}_{i}}\left(\frac{r_i}{r_i+1}\right)A^{\text{in}}_{ij}$. 
To bound the spectral norm we use Gershgorin's disks, noting that $M_{ii} = 0$ for all $i$:
\begin{align*}
 \|M\|_{2} & = \max_{i}\{|\mu_{i}|: \ \mu_{i} \text{ eigenvalue of } M\} \leq \max_{i} \sum_{j} |M_{ij}|  \\
		 & = \max_{i}  \frac{1}{d^{\text{in}}_{i}}\left(\frac{1}{r_i+1}\right)\sum_{j} A^{\text{out}}_{ij} + \frac{1}{d^{\text{in}}_{i}}\left(\frac{r_i}{r_i+1}\right)\sum_{j}A^{\text{in}}_{ij}  \\
		 & = \max_{i} \left\{\frac{1}{d^{\text{in}}_{i}}\left(\frac{1}{r_i+1}\right)(d^{\text{out}}_{i}) + \frac{1}{d^{\text{in}}_{i}}\left(\frac{r_i}{r_i+1}\right)(d^{\text{in}}_{i})\right\}\\
		 & = \max_{i} \left\{ \left(\frac{r_i}{r_i+1}\right) + \left(\frac{r_i}{r_i+1}\right)\right\} \leq 2 \max_{i}r_{i} \leq 2\epsilon_{2} = o(1)
\end{align*}
by (A3).
\end{proof}

\begin{theorem}
Suppose that $G\sim\mathcal{G}_n$ with $\mathcal{G}_n$ satisfying (A1)--(A4). If $L$ denotes the Laplacian of $G$ and $|C_1\bigtriangleup \Omega| = \epsilon n_1$ with $\epsilon \leq 0.13$ then $\epsilon_{\bfy} = o(1)$ and $\epsilon^{ \gamma n_1}_{L} = o(1)$ for any $\gamma \in (0,1)$. 
\label{theorem:Bound_epsY_and_epsPhi}
\end{theorem}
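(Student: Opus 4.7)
The plan is to bound both ratios by showing each numerator is $o(1)$ and each denominator is bounded below appropriately. Theorem~\ref{thm:MBound} supplies the key input $\|M\|_2 = o(1)$, which at once controls the numerators: $\|M\|_2^{(\gamma n_1)} \leq \|M\|_2 = o(1)$, and, since $|\Omega| \leq |C_1| + |C_1\bigtriangleup\Omega| = (1+\epsilon)n_1$,
\[
\|\bfe\|_2 = \|M\bfone_\Omega\|_2 \leq \|M\|_2\sqrt{|\Omega|} \leq o(1)\cdot\sqrt{(1+\epsilon)n_1}.
\]
It remains to show $\|L^{\text{in}}\|_2^{(\gamma n_1)}$ is bounded below by a positive constant and $\|\bfy^{\text{in}}\|_2$ by a constant multiple of $\sqrt{\epsilon n_1}$.

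The first bound is immediate: since $L^{\text{in}} = I - (D^{\text{in}})^{-1}A^{\text{in}}$ has unit diagonal, every column $\bfl_i$ of $L^{\text{in}}$ satisfies $\|\bfl_i\|_2 \geq 1$. Hence for any $S$ with $|S| = \gamma n_1$, $\|L^{\text{in}}_S\|_2 \geq \max_{i \in S}\|\bfl_i\|_2 \geq 1$, so $\|L^{\text{in}}\|_2^{(\gamma n_1)} \geq 1$ and $\epsilon^{\gamma n_1}_L = o(1)$ follows.

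For $\|\bfy^{\text{in}}\|_2$, I would use that $L^{\text{in}}\bfone_{C_1} = 0$ (Theorem~\ref{thm:EvecFormBasis}) and decompose $\bfone_W - \bfone_U = \bfp + \bfq$, where $\bfp$ is the orthogonal projection onto $\ker L^{\text{in}} = \text{span}\{\bfone_{C_1},\ldots,\bfone_{C_k}\}$. Writing $u := |U|$ and $w_b := |W \cap C_b|$ for $b \neq 1$, a direct computation gives $\|\bfp\|^2 = u^2/n_1 + \sum_{b\neq 1} w_b^2/n_b$, and since $u, w_b \leq \epsilon n_1 \leq 0.13\,n_b$, Pythagoras yields
\[
\|\bfq\|^2 = u\!\left(1 - \tfrac{u}{n_1}\right) + \sum_{b\neq 1} w_b\!\left(1 - \tfrac{w_b}{n_b}\right) \geq 0.87\,\epsilon n_1.
\]

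The main obstacle is the lower bound $\|L^{\text{in}}\bfq\|_2 \geq (1-o(1))\|\bfq\|_2$, which is delicate because $L^{\text{in}}$ is not symmetric, so the spectral gap assumption (A2) does not directly translate into a singular value estimate. I would work block-by-block via the similarity $L_{G_{C_b}} = D_b^{-1/2}L^{\text{sym}}_{G_{C_b}}D_b^{1/2}$, where $D_b$ is the degree matrix of $G_{C_b}$. The restriction $\bfq^{(b)}$ of $\bfq$ to $C_b$ satisfies $\bfone^\top\bfq^{(b)} = 0$ by construction. Setting $\bfw = D_b^{1/2}\bfq^{(b)}$ and projecting onto $\ker L^{\text{sym}}_{G_{C_b}} = \text{span}\{D_b^{1/2}\bfone\}$, the relevant inner product is
\[
\langle \bfw, D_b^{1/2}\bfone\rangle = \sum_{i\in C_b} \bfq^{(b)}_i d^{\text{in}}_i = \sum_{i\in C_b} \bfq^{(b)}_i\bigl(d^{\text{in}}_i - d^{\text{in}}_{\text{av}}\bigr),
\]
which by (A4) and Cauchy--Schwarz is at most $\epsilon_3\, d^{\text{in}}_{\text{av}}\sqrt{n_b}\,\|\bfq^{(b)}\|_2$. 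This shows $\bfw$ is almost orthogonal to $\ker L^{\text{sym}}_{G_{C_b}}$; combining with $\lambda_2(L^{\text{sym}}_{G_{C_b}}) = \lambda_2(L_{G_{C_b}}) \geq 1 - \epsilon_1$ from (A2) and the degree concentration from (A4) yields $\|L_{G_{C_b}}\bfq^{(b)}\|_2 \geq (1-o(1))\|\bfq^{(b)}\|_2$. Summing the squared inequalities over $b$ gives $\|L^{\text{in}}\bfq\|_2 \geq (1-o(1))\sqrt{0.87\,\epsilon n_1}$, and in the ratio the $\sqrt{n_1}$ factors cancel to produce $\epsilon_{\bfy} = o(1)$, as required.
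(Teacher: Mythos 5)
Your proof is correct, and it reaches both bounds by a genuinely different route than the paper for the two denominator estimates (the numerator bounds $\|M\|_2^{(\gamma n_1)}\le\|M\|_2=o(1)$ and $\|\bfe\|_2=\|M\bfone_\Omega\|_2\le o(1)\sqrt{(1+\epsilon)n_1}$ coincide with the paper's). For $\|L^{\text{in}}\|_2^{(\gamma n_1)}$ the paper goes through the singular-value interlacing lemma (Lemma \ref{lemma:s_norm_bound_with_eigenvalues}) and the spectrum of $L^{\text{in}}$ to get a lower bound of $1-o(1)$; your observation that the unit diagonal already forces every column, hence every $\gamma n_1$-column submatrix, to have norm at least $1$ is more elementary and sidesteps the paper's slightly delicate identification of singular values with eigenvalues of the non-symmetric $L^{\text{in}}$. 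For $\|\bfy^{\text{in}}\|_2$ the paper simply invokes its RIP estimate $\delta_{\epsilon n_1}(L^{\text{in}})\le \epsilon+o(1)$ (Theorem \ref{theorem:Bound_RIP_L_in}, built on Lemma \ref{lemma:GenGraphDeltaBound}) applied to the $\epsilon n_1$-sparse vector $\bfone_W-\bfone_U$, machinery it needs anyway for part (2) of Theorem \ref{thm:BoundAllQuantities}; you instead exploit the specific structure of $\bfone_W-\bfone_U$, projecting off the kernel explicitly (giving $\|\bfq\|_2^2\ge 0.87\,\epsilon n_1$ since $u/n_1,\,w_b/n_b\le 0.13$) and proving the lower bound $\|L^{\text{in}}\bfq\|_2\ge(1-o(1))\|\bfq\|_2$ blockwise via the similarity $L_{G_{C_b}}=D_b^{-1/2}L^{\text{sym}}_{G_{C_b}}D_b^{1/2}$, using (A2) for the spectral gap and (A4) to control both the near-orthogonality of $D_b^{1/2}\bfq^{(b)}$ to $\ker L^{\text{sym}}_{G_{C_b}}$ and the distortion from the $D_b^{\pm1/2}$ factors. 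This is essentially a self-contained re-derivation, for this particular vector, of what Lemma \ref{lemma:GenGraphDeltaBound} provides in general; both arguments ultimately lean on (A2) and (A4) in the same way to tame the non-symmetry of $L^{\text{in}}$. Your sketch of the final "combining" step compresses routine details (normalizing the kernel projection by $\|D_b^{1/2}\bfone\|_2$ and comparing $\|D_b^{1/2}\bfq^{(b)}\|_2$ to $\sqrt{d^{\text{in}}_{\text{av}}}\|\bfq^{(b)}\|_2$), but these go through exactly as indicated, so there is no gap; like the paper, you implicitly treat $\epsilon$ as a fixed positive constant when cancelling the $\sqrt{n_1}$ factors.
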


\begin{proof}
Recall that $\displaystyle \epsilon_{\bfy} = \frac{\|\bfe\|_{2}}{\|\bfy^{\text{in}}\|_{2}} \text{ and } \epsilon^{\gamma n_1}_{L} = \frac{\|M\|_{2}^{(\gamma n_1)}}{\|L^{\text{in}}\|_{2}^{(\gamma n_1)}}$. Using the bound on the restricted isometry constant of $L^{\text{in}}$ from Theorem \ref{theorem:Bound_RIP_L_in}  we have:
\begin{align*}
\|\bfy^{\text{in}}\|^{2}_{2} &= \|L^{\text{in}}\left(\bfone_{W} - \bfone_{U}\right)\|_{2}^{2} \geq \left(1-\delta_{\epsilon n_1}(L^{\text{in}})\right)\|\bfone_{W} - \bfone_{U}\|_{2}^{2} \\
	& \geq \left(\epsilon -o(1)\right)|C_1\bigtriangleup\Omega| = (\epsilon^2 - o(1))n_1
\end{align*}

Thus $\|\bfy^{\text{in}} + \bfz^{\text{in}}\|_{2} \geq \sqrt{\epsilon^2 - o(1)}\sqrt{n_1}$. On the other hand:
$$
\|\bfe\|_{2} = \|\bfy - \bfy^{\text{in}}\|_{2} = \|L\bfone_{\Omega} - L^{\text{in}}\bfone_{\Omega}\|_{2} = \|M\bfone_{\Omega}\|_{2} \leq \|M\|_{2}\|\bfone_{\Omega}\|_{2} \leq o(1)\sqrt{(1+\epsilon)n_1} 
$$
Thus:
$$
\epsilon_{\bfy} = \frac{\|\bfe\|_{2}}{\|\bfy^{\text{in}}\|_{2}} \leq \frac{o(1)\sqrt{(1+\epsilon)}\sqrt{n_1}}{\sqrt{(\epsilon^{2} - o(1))}\sqrt{n_1}} = o(1)
$$
as $\epsilon$ is a constant, {\em i.e.} independent of $n_1$. The bound on $\epsilon^{\gamma n_1}_{L}$ is easier. By Lemma \ref{lemma:s_norm_bound_with_eigenvalues} and Property 3: 
$$
\|L^{\text{in}}\|_{2}^{(\gamma n_1)} \geq \sigma_{\gamma n_1 - 1}(L^{\text{in}}) = \lambda_{\gamma n_1 -1}(L^{\text{in}}) \geq \lambda_{k+1}(L^{\text{in}})
$$
as long as $\gamma n_1 \geq k+3$, which is certainly the case for large enough $n_1$. Because $\lambda_{1}(L_{G_{C_1}}) = \ldots = \lambda_{1}(L_{G_{C_k}}) = 0$ and the spectrum of $L^{\text{in}}$ is the union of the spectra of the $L_{G_{C_a}}$, it follows that:
$$
\lambda_{k+1}\left(L^{\text{in}}\right) = \min_{a=1}^{k}\lambda_{2}(L_{G_{C_a}}) \geq 1-\epsilon_1 = 1-o(1) 
$$
by (A1).
By Theorem \ref{thm:MBound} and Lemma \ref{lemma:s_norm_bound_with_eigenvalues} $\|M\|_{2}^{(\gamma n_1)} \leq \|M\|_2 =o(1)$. It follows that:
$$
\epsilon^{\gamma n_1}_{L} = \frac{\|M\|_{2}^{(\gamma n_1)}}{\|L^{\text{in}}\|_{2}^{(\gamma n_1)}} = \frac{o(1)}{1-o(1)} = o(1).
$$
\end{proof}

\subsection{Restricted Isometry Property for $L$}
Finally, we extend from $\delta_{s}(L^{\text{in}})$ to $\delta_{s}(L)$ using the following result of Herman and Strohmer (cf. \cite{HS10}):

\begin{theorem}
\label{thm:PerturbedRIC}
Suppose that $ \Phi = \hat{\Phi} + M$. Let $\hat{\delta}_{s}$ and $\delta_{s}$ denote the $s$ restricted isometry constants of $\hat{\Phi}$ and $\Phi$ respectively. Then:
\begin{equation*}
\delta_{s} \leq (1+\hat{\delta}_{s})\left(1+\epsilon^{s}_{\Phi}\right)^{2} - 1. 
\end{equation*}
\end{theorem}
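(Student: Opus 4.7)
The plan is to work directly from the definition of the $s$-restricted isometry constant, applied to each $s$-sparse $\bfx \in \mathbb{R}^n$ after writing $\Phi\bfx = \hat{\Phi}\bfx + M\bfx$ and invoking the triangle inequality on both the upper and lower RIP estimates.

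The first preliminary step I would carry out is to bound the submatrix norm $\|\hat{\Phi}\|_2^{(s)}$ in terms of $\hat{\delta}_s$. Since $\|\hat{\Phi}\|_2^{(s)} = \max_{|S| \leq s}\|\hat{\Phi}_S\|_2$ equals the largest value of $\|\hat{\Phi}\bfv\|_2$ over unit $s$-sparse $\bfv$, the RIP for $\hat{\Phi}$ gives $\|\hat{\Phi}\|_2^{(s)} \leq \sqrt{1+\hat{\delta}_s}$. Combined with the definition $\|M\|_2^{(s)} = \epsilon^s_\Phi \|\hat{\Phi}\|_2^{(s)}$, this yields $\|M\bfx\|_2 \leq \epsilon^s_\Phi \sqrt{1+\hat{\delta}_s}\|\bfx\|_2$ for every $s$-sparse $\bfx$.

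The upper deviation side is then routine: for any $s$-sparse $\bfx$,
\begin{equation*}
\|\Phi\bfx\|_2 \;\leq\; \|\hat{\Phi}\bfx\|_2 + \|M\bfx\|_2 \;\leq\; \sqrt{1+\hat{\delta}_s}\,(1+\epsilon^s_\Phi)\,\|\bfx\|_2,
\end{equation*}
which on squaring yields $\|\Phi\bfx\|_2^2 \leq (1+\hat{\delta}_s)(1+\epsilon^s_\Phi)^2\|\bfx\|_2^2$, exactly the upper bound required.

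The only subtle step, and what I expect to be the main obstacle, is showing that the lower deviation side of the RIP yields the same constant. The reverse triangle inequality delivers $\|\Phi\bfx\|_2 \geq \sqrt{1-\hat{\delta}_s}\|\bfx\|_2 - \epsilon^s_\Phi\sqrt{1+\hat{\delta}_s}\|\bfx\|_2$, and to obtain $\|\Phi\bfx\|_2^2 \geq \bigl(2 - (1+\hat{\delta}_s)(1+\epsilon^s_\Phi)^2\bigr)\|\bfx\|_2^2$ I would verify the algebraic inequality
\begin{equation*}
\bigl(\sqrt{1-\hat{\delta}_s} - \epsilon^s_\Phi\sqrt{1+\hat{\delta}_s}\bigr)^2 \;\geq\; 2 - (1+\hat{\delta}_s)(1+\epsilon^s_\Phi)^2.
\end{equation*}
Expanding both sides, this reduces to $\sqrt{1-\hat{\delta}_s^{\,2}} \leq (1+\hat{\delta}_s)(1+\epsilon^s_\Phi)$, which is immediate since the left side is at most $1$. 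The degenerate case where the reverse triangle inequality is vacuous (its right-hand side negative) must be handled separately: in that regime one checks that $\epsilon^s_\Phi > \sqrt{(1-\hat{\delta}_s)/(1+\hat{\delta}_s)}$ forces $(1+\hat{\delta}_s)(1+\epsilon^s_\Phi)^2 \geq 2$, so the lower bound becomes automatic. Taking the maximum of the two deviations over all $s$-sparse unit vectors then gives $\delta_s \leq (1+\hat{\delta}_s)(1+\epsilon^s_\Phi)^2 - 1$, as claimed.
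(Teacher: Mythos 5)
Your proof is correct: the bound $\|M\bfx\|_2\le\epsilon^s_\Phi\sqrt{1+\hat{\delta}_s}\,\|\bfx\|_2$ for $s$-sparse $\bfx$, the triangle inequality on both RIP sides, and your algebraic verification that the lower deviation (including the degenerate case where the reverse triangle inequality becomes vacuous, and implicitly the trivial case $\hat{\delta}_s\ge 1$) is dominated by the upper one all check out. The paper does not prove this theorem but simply cites Herman and Strohmer \cite{HS10}, and your argument is essentially the standard one given there, so there is nothing further to compare.
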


\begin{corollary}
Let $L$ denote the Laplacian of $G\sim \mathcal{G}_n$ satisfying (A1)--(A4). Then we have 
$\delta_{\gamma n_1}(L) \leq \gamma + o(1)$ for any $\gamma \in (0,1)$. 
\end{corollary}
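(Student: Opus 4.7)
The plan is to invoke the Herman--Strohmer perturbation bound (Theorem~\ref{thm:PerturbedRIC}) with the unperturbed matrix $\hat\Phi := L^{\text{in}}$ and the perturbed matrix $\Phi := L$, so that $M = L - L^{\text{in}}$ in their notation matches exactly the $M$ defined in \S\ref{sec:ClusterPursuit}. The conclusion of that theorem, applied at sparsity level $s = \gamma n_1$, reads
\begin{equation*}
\delta_{\gamma n_1}(L) \;\leq\; \bigl(1+\delta_{\gamma n_1}(L^{\text{in}})\bigr)\bigl(1+\epsilon^{\gamma n_1}_{L}\bigr)^{2} - 1,
\end{equation*}
so the corollary reduces to controlling the two factors on the right-hand side under assumptions (A1)--(A4).

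Both factors are already in hand. First, Theorem~\ref{theorem:Bound_RIP_L_in} applied with $a=1$ yields $\delta_{\gamma n_1}(L^{\text{in}}) \leq (n_1/n_1)\gamma + o(1) = \gamma + o(1)$; this is the single place where restricting attention to the \emph{smallest} cluster $C_1$ is crucial, as noted in Remark~\ref{remark:Explaining_why_smallest_cluster}, since for $a>1$ the bound would pick up a factor $n_a/n_1$ which could exceed $1/\gamma$ and render the RIP bound vacuous. Second, Theorem~\ref{theorem:Bound_epsY_and_epsPhi} gives $\epsilon^{\gamma n_1}_{L} = o(1)$, directly from the spectral-norm bound $\|M\|_2 = o(1)$ of Theorem~\ref{thm:MBound} together with the lower bound $\|L^{\text{in}}\|_2^{(\gamma n_1)} \geq 1 - o(1)$ coming from the spectral gap assumption (A2) and Lemma~\ref{lemma:s_norm_bound_with_eigenvalues}.

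Finally, substituting these two estimates and expanding,
\begin{equation*}
\delta_{\gamma n_1}(L) \;\leq\; \bigl(1+\gamma+o(1)\bigr)\bigl(1+o(1)\bigr)^{2} - 1 \;=\; \gamma + o(1),
\end{equation*}
which is the desired bound. There is no real obstacle here: all three of the key technical inputs (the RIP of $L^{\text{in}}$, the smallness of $\|M\|_2$, and the Herman--Strohmer transfer) have been established in the preceding lemmas, and the proof is a one-line algebraic combination. The only subtle point worth flagging in the write-up is the restriction to $a=1$, which keeps the leading constant in front of $\gamma$ equal to one rather than $n_a/n_1$, and which is the reason Theorem~\ref{thm:SCP_ClusterPursuitWorks} is stated for the smallest cluster.
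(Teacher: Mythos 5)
Your proof is correct and is essentially identical to the paper's: it invokes the Herman--Strohmer bound (Theorem~\ref{thm:PerturbedRIC}) with $\hat\Phi = L^{\text{in}}$, $\Phi = L$, and substitutes $\delta_{\gamma n_1}(L^{\text{in}}) \leq \gamma + o(1)$ from Theorem~\ref{theorem:Bound_RIP_L_in} and $\epsilon^{\gamma n_1}_{L} = o(1)$ from Theorem~\ref{theorem:Bound_epsY_and_epsPhi}. Your added remark about why the restriction to the smallest cluster $C_1$ keeps the constant equal to one matches the paper's Remark~\ref{remark:Explaining_why_smallest_cluster}.
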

\begin{proof}
By Theorem \ref{thm:PerturbedRIC} we have that:
$$
\delta_{\gamma n_1}(L) \leq (1+\delta_{\gamma n_1}(L^{\text{in}}))(1 + \epsilon^{\gamma n_1}_{L})^{2} - 1.
$$
Substituting the values of  $\delta_{\gamma n_1}(L^{\text{in}})$ and $\epsilon^{\gamma n_1}_{L}$ from Theorems \ref{theorem:Bound_RIP_L_in} and \ref{theorem:Bound_epsY_and_epsPhi} yields the claim.
\end{proof}

\section{Proof of Theorem \ref{thm:FindingOmegaSSBM}\label{A:Part1Proof}}
Before proving this theorem we prove the a series of  lemmas. We first note that Assumptions (A3) and (A4) easily allow us to bound $\text{vol}(S)$, which will be required in the proof of Theorem \ref{thm:FindingOmegaSSBM}:

\begin{lemma}
\label{lemma:BoundVolume}
Suppose that $\mathcal{G}_n$ satisfies (A3) and (A4). For any $S\subset V$ define 
$\text{vol}^{\text{in}}(S) = \sum_{i} d^{\text{in}}_i$. Then for any $G\in \mathcal{G}_n$ we 
have that:
\begin{equation*}
(1)\  \displaystyle (1-\epsilon_3)|S|d^{\text{in}}_{\text{av}} \leq \text{vol}^{\text{in}}(S) 
\leq (1+\epsilon_3)|S|d^{\text{in}}_{\text{av}}; \hbox{ and } 
(2) \ \displaystyle \text{vol}^{\text{in}}(S) \leq \text{vol}(S) \leq (1+\epsilon_2)\text{vol}^{\text{in}}(S). 
\end{equation*}
\end{lemma}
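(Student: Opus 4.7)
The plan is to prove both inequalities by direct vertex-by-vertex bounds obtained from the two assumptions, then sum over $i \in S$. There is no real obstacle here; the lemma is essentially a bookkeeping statement that lets one pass between the ``in'' volume and the full volume, and between pointwise degree bounds and aggregate volume bounds.

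For part (1), I would start from Assumption (A4), which gives $(1-\epsilon_3)d^{\text{in}}_{\text{av}} \leq d^{\text{in}}_{\min} \leq d^{\text{in}}_i \leq d^{\text{in}}_{\max} \leq (1+\epsilon_3)d^{\text{in}}_{\text{av}}$ almost surely for every $i \in [n]$. Summing these pointwise inequalities over $i \in S$ yields
$$
(1-\epsilon_3)|S|\,d^{\text{in}}_{\text{av}} \;\leq\; \sum_{i\in S} d^{\text{in}}_i \;=\; \text{vol}^{\text{in}}(S) \;\leq\; (1+\epsilon_3)|S|\,d^{\text{in}}_{\text{av}},
$$
which is exactly the claim.

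For part (2), I would use the decomposition $d_i = d^{\text{in}}_i + d^{\text{out}}_i$ that comes from splitting $G$ into $G^{\text{in}} \cup G^{\text{out}}$. Summing over $S$ gives $\text{vol}(S) = \text{vol}^{\text{in}}(S) + \sum_{i\in S} d^{\text{out}}_i$. Since $d^{\text{out}}_i \geq 0$, the lower bound $\text{vol}^{\text{in}}(S) \leq \text{vol}(S)$ is immediate. For the upper bound, Assumption (A3) says $r_i = d^{\text{out}}_i / d^{\text{in}}_i \leq \epsilon_2$ for all $i$, so $d^{\text{out}}_i \leq \epsilon_2 d^{\text{in}}_i$. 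Summing over $S$ yields $\sum_{i\in S} d^{\text{out}}_i \leq \epsilon_2 \text{vol}^{\text{in}}(S)$, and therefore $\text{vol}(S) \leq (1+\epsilon_2)\text{vol}^{\text{in}}(S)$, as required.

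The only subtlety worth flagging is that both (A3) and (A4) are stated as ``almost sure'' events over the random graph model $\mathcal{G}_n$; the conclusion of the lemma therefore also holds almost surely, uniformly in $S \subset V$, because the bounds are derived from pointwise (in $i$) inequalities that do not depend on $S$.
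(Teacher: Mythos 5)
Your proof is correct and follows essentially the same route as the paper: pointwise degree bounds from (A4) summed over $S$ for part (1), and the decomposition $d_i = d^{\text{in}}_i + d^{\text{out}}_i$ together with $d^{\text{out}}_i \leq \epsilon_2 d^{\text{in}}_i$ from (A3) for part (2). No gaps.
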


\begin{proof}
For part (1), observe that:
\begin{equation*}
\text{vol}^{\text{in}}(S) = \sum_{i\in S} d_{i}^{\text{in}} \geq |S|d^{\text{in}}_{\min} 
\geq |S|(1-\epsilon_3)d^{\text{in}}_{\text{av}}, 
\end{equation*}
where the final inequality is from (A4). The bound $\text{vol}^{\text{in}}(S) \leq 
(1+\epsilon_3)|S|d^{\text{in}}_{\text{av}}$ follows similarly. For part (2) we note that by 
assumption (A3) $d_{i} = d_{i}^{\text{in}} + d_{i}^{\text{out}} \leq d_{i}^{\text{in}} + 
\epsilon_2d_{i}^{\text{in}} = (1+\epsilon_2)d_{i}^{\text{in}}$. Hence:
$$
\text{vol}(S)  = \sum_{i\in S} d_{i} \leq \sum_{i\in S} (1+\epsilon_2)d_{i}^{\text{in}} = 
(1+\epsilon_2)\text{vol}^{\text{in}}(S)
$$
while the lower bound follows simply from the fact that $d_{i} \geq d_{i}^{\text{in}}$.
\end{proof}

\begin{lemma}
\label{lemma:BoundInnerProd}
Let $G\in \mathcal{G}_n$ satisfies Assumptions (A1)--(A4). If $N_{G_{C_1}}:= D_{G_{C_1}}^{-1/2} A_{G_{C_1}} D_{G_{C_{1}}}^{-1/2}$ and $U,\Gamma \subset C_1$ then:
$$
\left| \langle D^{1/2}_{G_{C_1}}\bfone_{U},N^{t}_{G_{C_1}}D^{1/2}_{G_{C_1}}\bfone_{\Gamma}\rangle - \frac{\text{vol}^{\text{in}}(U)\text{vol}^{\text{in}}(\Gamma)}{\text{vol}^{\text{in}}(G_{C_1})} \right| \leq \epsilon_1^{t}\sqrt{\text{vol}^{\text{in}}(U)\text{vol}^{\text{in}}(\Gamma)}
$$
\end{lemma}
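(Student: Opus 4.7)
The plan is to diagonalize $N_{G_{C_1}}$ and extract the top-eigenvalue contribution as the main term, while controlling the remaining terms using the spectral gap guaranteed by (A2). Concretely, since $L_{G_{C_1}}$ is similar to $L^{\text{sym}}_{G_{C_1}} = I - N_{G_{C_1}}$ (Lemma \ref{lemma:Evals_of_L_Lsym_P}), the eigenvalues $\mu_i$ of $N_{G_{C_1}}$ satisfy $\mu_i = 1 - \lambda_i(L_{G_{C_1}})$. Assumption (A2) then yields $\mu_1 = 1$ (with multiplicity one, since $G_{C_1}$ is almost surely connected under our assumptions) and $|\mu_i| \leq \epsilon_1$ for $i \geq 2$. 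The Perron eigenvector is explicit: $\bfw_1 = \frac{1}{\sqrt{\text{vol}^{\text{in}}(G_{C_1})}}\, D_{G_{C_1}}^{1/2}\bfone$.

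Next I would set $\bfu := D_{G_{C_1}}^{1/2}\bfone_U$ and $\bfg := D_{G_{C_1}}^{1/2}\bfone_\Gamma$ and expand each in an orthonormal eigenbasis $\{\bfw_i\}$ of $N_{G_{C_1}}$, writing $\bfu = \sum_i \alpha_i \bfw_i$ and $\bfg = \sum_i \beta_i \bfw_i$. Then
\begin{equation*}
\langle \bfu, N_{G_{C_1}}^t \bfg \rangle = \sum_{i=1}^{n_1} \mu_i^t\, \alpha_i \beta_i = \alpha_1 \beta_1 + \sum_{i \geq 2} \mu_i^t\, \alpha_i \beta_i,
\end{equation*}
since $\mu_1 = 1$. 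A direct computation of $\alpha_1 = \langle \bfu, \bfw_1\rangle$ gives $\alpha_1 = \text{vol}^{\text{in}}(U)/\sqrt{\text{vol}^{\text{in}}(G_{C_1})}$ because $\bfone_U^\top D_{G_{C_1}} \bfone = \text{vol}^{\text{in}}(U)$; similarly $\beta_1 = \text{vol}^{\text{in}}(\Gamma)/\sqrt{\text{vol}^{\text{in}}(G_{C_1})}$. Hence $\alpha_1 \beta_1$ equals exactly the main term $\text{vol}^{\text{in}}(U)\text{vol}^{\text{in}}(\Gamma)/\text{vol}^{\text{in}}(G_{C_1})$ appearing in the claim.

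For the remainder, I would bound $|\mu_i|^t \leq \epsilon_1^t$ uniformly for $i \geq 2$, pull this factor out of the sum, and apply Cauchy--Schwarz:
\begin{equation*}
\Bigl|\sum_{i\geq 2}\mu_i^t \alpha_i \beta_i\Bigr| \leq \epsilon_1^t \sum_{i \geq 2} |\alpha_i \beta_i| \leq \epsilon_1^t \sqrt{\textstyle\sum_{i\geq 2} \alpha_i^2}\sqrt{\textstyle\sum_{i\geq 2}\beta_i^2} \leq \epsilon_1^t \|\bfu\|_2 \|\bfg\|_2.
\end{equation*}
Finally, $\|\bfu\|_2^2 = \bfone_U^\top D_{G_{C_1}} \bfone_U = \text{vol}^{\text{in}}(U)$ and analogously $\|\bfg\|_2^2 = \text{vol}^{\text{in}}(\Gamma)$, giving the claimed error bound $\epsilon_1^t \sqrt{\text{vol}^{\text{in}}(U)\text{vol}^{\text{in}}(\Gamma)}$.

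There is no serious obstacle here; the only subtlety is that the argument requires $\mu_1 = 1$ to be a simple eigenvalue of $N_{G_{C_1}}$ so that the Perron eigenvector has the closed form above. This follows because (A2) forces $\lambda_2(L_{G_{C_1}}) \geq 1 - \epsilon_1 > 0$, which in particular implies that $G_{C_1}$ is connected almost surely, so Theorem \ref{thm:EvecFormBasis} gives simplicity of the zero eigenvalue of $L_{G_{C_1}}$ and hence of $\mu_1 = 1$ for $N_{G_{C_1}}$.
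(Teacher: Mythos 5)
Your proof is correct and follows essentially the same route as the paper: the paper simply outsources the spectral-expansion inequality to Lemma 2 of \cite{CG08} and then converts $\lambda_{n_1-1}(N_{G_{C_1}}) = 1-\lambda_{2}(L_{G_{C_1}})\leq\epsilon_1$ via Lemma~\ref{lemma:Evals_of_L_Lsym_P} and (A2), whereas you reprove that inequality directly by expanding in an orthonormal eigenbasis of $N_{G_{C_1}}$, identifying the Perron term $\alpha_1\beta_1$ with the main term, and applying Cauchy--Schwarz to the rest. Your use of both halves of (A2) to get $|\mu_i|\leq\epsilon_1$ for all $i\geq 2$ (thus also controlling possible negative eigenvalues of $N_{G_{C_1}}$), together with the simplicity of $\mu_1=1$ from connectivity, is exactly what is needed, so there is no gap.
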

\begin{proof}
From the proof of Lemma 2 in \cite{CG08} (note that they use $M_{G_{C_1}}$ instead of $N_{G_{C_1}}$) we get that:
 $$
\left| \langle D^{1/2}_{G_{C_1}}\bfone_{U},N^{t}_{G_{C_1}}D^{1/2}_{G_{C_1}}\bfone_{\Gamma}\rangle - \frac{\text{vol}^{\text{in}}(U)\text{vol}^{\text{in}}(\Gamma)}{\text{vol}^{\text{in}}(G_{C_1})} \right| \leq \lambda_{n_1-1}(N_{G_{C_1}})^{t}\sqrt{\text{vol}^{\text{in}}(U)\text{vol}^{\text{in}}(\Gamma)}
$$
By Lemma \ref{lemma:Evals_of_L_Lsym_P} and (A2) we get that $\lambda_{n_1-1}(N_{G_{C_1}}) = 1 - \lambda_2(L_{G_{C_1}}) \leq \epsilon_1$. 
\end{proof}

\begin{proof}[Proof of Theorem \ref{thm:FindingOmegaSSBM}]
As in \S \ref{sec:ClusterPursuit}, let $U = C_1\setminus\Omega$ and $W = \Omega\setminus C_1$. Let $|U| = un_1$, in which case $|W| = (\epsilon + u)n_1$. We shall prove that $u = o(1)$. By definition, $\Omega$ is the set of the $(1+\epsilon)n_1$ largest entries in $\bfv^{(t)} := P^{t}D\bfone_{\Gamma}$. Because $U$ is not in $\Omega$, but $W$ is, we must have $v^{(t)}_{i} \leq  v^{(t)}_{j}$ for every $i\in U$ and $j \in W$. We sum first over $j\in W$ and then sum over $i\in U$ to obtain:
\begin{equation*}
v^{(t)}_{i} \leq  v^{(t)}_{j} \implies (\epsilon + u)n_1 v^{(t)}_i \leq \sum_{j\in W} v^{(t)}_{j} \Longrightarrow (\epsilon + u)n_1\sum_{i\in U} v^{(t)}_i \leq un_1\sum_{j\in W} v^{(t)}_{j}.
\end{equation*}
It follows that:
\begin{equation}
\sum_{i\in U} v^{(t)}_i \leq \frac{u}{\epsilon + u} \sum_{j\in W} v^{(t)}_{j} \leq \sum_{j\in W} v^{(t)}_{j}.  
\label{eq:BreakInequality}
\end{equation}
Looking ahead, we shall show that if inequality \eqref{eq:BreakInequality} holds then $u = o(1)$. \\

We first show that the term on the left-hand side of inequality \ref{eq:BreakInequality}, {\em i.e.} the sum over the vertices in $C_{1}$ that were missed by $\Omega$, is necessarily quite large. We do this by relating $P$ to $P^{\text{in}}$, the random walk transition matrix for the graph $G^{\text{in}}$. Note that $G^{\text{in}}$ is a disjoint union of the graphs $G_{C_{a}}$. For every $i\in [n]$, define $q_{i} := d^{\text{in}}_{i}/d_{i}$. Observe that $1/d_{i} = q_i/d^{\text{in}}_{i}$ and thus $D^{-1} = D_{\text{in}}^{-1}Q$ where $Q$ is the diagonal matrix with $(i,i)$-th entry $q_i$. Now:
$$
P = AD^{-1} = \left(A^{\text{in}} + A^{\text{out}}\right)D^{-1} = A^{\text{in}}\left(D_{\text{in}}^{-1}Q\right) + A^{\text{out}}D^{-1} = P^{\text{in}}Q + A^{\text{out}}D^{-1}.
$$ 
Observe that $P$, $P^{\text{in}}Q$ and $A^{\text{out}}D^{-1}$ all have non-negative entries. It follows that for any non-negative vector $\bfx$:
$P\bfx$ and $P^{\text{in}}Q\bfx$ are also non-negative and 
$P\bfx \geq P^{\text{in}}Q\bfx$, where the inequality should be interpreted componentwise.
One can the extend the inequality by iterated multiplication:
$$
P^{t}\bfx \geq \left(P^{\text{in}}Q\right)^{t}\bfx \geq q^{t}_{\min}\left(P^{\text{in}}\right)^{t}\bfx
$$
and again the inequality should be interpreted componentwise. Now:
\begin{align}
\sum_{i\in U} v^{(t)}_i & = \langle \bfone_{U}, \bfv^{(t)} \rangle = \langle \bfone_{U}, P^{t}D\bfone_{\Gamma} \rangle \geq \langle \bfone_{U}, q^{t}_{\min}\left(P^{\text{in}}\right)^{t}D\bfone_{\Gamma}\rangle \nonumber  \\
& = q_{\text{min}}^{t}\langle \bfone_{U},\left(P^{\text{in}}\right)^{t}D_{\text{in}}\bfone_{\Gamma}\rangle \nonumber = q_{\text{min}}^{t}\langle \bfone_{U},\left(P_{G_{C_1}}\right)^{t}D_{G_{C_1}}\bfone_{\Gamma}\rangle,
\label{eq:LowerBoundOnUSum}
\end{align}
where the final line follows as $U,\Gamma \subset C_{1}$. 

Our goal now is to bound the quantity $\langle \bfone_{U},\left(P_{G_{C_1}}\right)^{t}D_{G_{C_1}}\bfone_{\Gamma}\rangle$. One can rearrange the iterated matrix product slightly:
\begin{align*}
\left(P_{G_{C_1}}\right)^{t} & = \left(A_{G_{C_1}}D^{-1}_{G_{C_1}}\right)^{t} = A_{G_{C_1}}D^{-1}_{G_{C_1}}A_{G_{C_1}}D^{-1}_{G_{C_1}} \ldots A_{G_{C_1}}D^{-1}_{G_{C_1}} \\
	& = D^{1/2}_{G_{C_1}}\left(D^{-1/2}_{G_{C_1}}A_{G_{C_1}}D^{-1/2}_{G_{C_1}}\right)
	\left(D^{-1/2}_{G_{C_1}}A_{G_{C_1}}D^{-1/2}_{G_{C_1}}\right) \ldots \left(D^{-1/2}_{G_{C_1}}A_{G_{C_1}}D^{-1/2}_{G_{C_1}}\right)D^{-1/2}_{G_{C_1}} \\
	& = D^{1/2}_{G_{C_1}}N_{G_{C_1}}^{t}D^{-1/2}_{G_{C_1}},
\end{align*}

Hence, we have 
\begin{align*}
\langle \bfone_{U},\left(P_{G_{C_1}}\right)^{t}D_{G_{C_1}}\bfone_{\Gamma}\rangle &= \langle \bfone_{U},\left(D^{1/2}_{G_{C_1}}N_{G_{C_1}}^{t}D^{-1/2}_{G_{C_1}}\right)D_{G_{C_1}}\bfone_{\Gamma}\rangle \\
& = \langle D^{1/2}_{G_{C_1}}\bfone_{U},N_{G_{C_1}}^{t}D^{1/2}_{G_{C_1}}\bfone_{\Gamma}\rangle 
 \geq \frac{\text{vol}^{\text{in}}(U)\text{vol}^{\text{in}}(\Gamma)}{\text{vol}^{\text{in}}(G_{C_1})} - \epsilon_1^{t}\sqrt{\text{vol}^{\text{in}}(U)\text{vol}^{\text{in}}(\Gamma)},
\end{align*}
where the final inequality follows from Lemma \ref{lemma:BoundInnerProd}. Returning to \eqref{eq:LowerBoundOnUSum}:
\begin{equation}
\sum_{i\in U} v^{(t)}_i \geq q_{\min}^{t}\left(\frac{\text{vol}^{\text{in}}(U)\text{vol}^{\text{in}}(\Gamma)}{\text{vol}^{\text{in}}(G_{C_1})} - \epsilon_1^{t}\sqrt{\text{vol}^{\text{in}}(U)\text{vol}^{\text{in}}(\Gamma)}\right).
\label{eq:BoundInnerProdWithU}
\end{equation}
We now consider the right hand side of \eqref{eq:BreakInequality}, {\em i.e.} the sum over $W$. 
Because $W\subset V\setminus C_{1}$ we have that:
$$
\sum_{j\in W} v^{(t)}_{j} \leq \sum_{j\in V\setminus C_1}|v^{(t)}_{j}| = \|\bfv^{(t)}_{V\setminus C_1}\|_{1}
$$
Thus it remains to bound $\|\bfv_{V\setminus C_1}^{(t)}\|_{1}$. Observe that:
$$
\bfv_{V\setminus C_1}^{(t)} = A^{\text{in}}D^{-1}\bfv_{V\setminus C_1}^{(t-1)} + \left(A^{\text{out}}D^{-1}\bfv^{(t-1)}\right)_{V\setminus C_1}.
$$
Clearly
$$
\left\|\left(A^{\text{out}}D^{-1}\bfv^{(t-1)}\right)_{V\setminus C_1}\right\|_{1} \leq \left\|A^{\text{out}}D^{-1}\bfv^{(t-1)}\right\|_{1}
$$
and so
$$
\|\bfv_{V\setminus C_1}^{(t)}\|_{1} \leq \|A^{\text{in}}D^{-1}\bfv_{V\setminus C_1}^{(t-1)}\|_{1} + \|A^{\text{out}}D^{-1}\bfv^{(t-1)}\|_{1} \leq \|A^{\text{in}}D^{-1}\|_{1}\|\bfv_{V\setminus C_1}^{(t-1)}\|_{1} + \|A^{\text{out}}D^{-1}\|_{1}\|\bfv^{(t-1)}\|_{1}
$$

Moreover: $\|A^{\text{in}}D^{-1}\|_{1} = \max_{j} \sum_{i}\frac{A^{\text{in}}_{ij}}{d_{j}} = \max_{j}\frac{d^{\text{in}}_{j}}{d_{j}} \leq 1$ and similarly $\|A^{\text{out}}D^{-1}\|_{1} = \max_{j}\frac{d^{\text{out}}_{j}}{d_j} \leq \max_{j}r_{j} \leq \epsilon_2$ by assumption (A2). Thus $
\|\bfv_{V\setminus C_1}^{(t)}\|_{1} \leq 1\|\bfv_{V\setminus C_1}^{(t-1)}\|_{1} + \epsilon_2\|\bfv^{(t-1)}\|_{1}
$. Solving this recursion relation we obtain:
$$
\|\bfv_{V\setminus C_1}^{(t)}\|_{1} \leq \epsilon_2 \sum_{s=0}^{t-1}\|\bfv^{(s)}\|_1 + \|\bfv_{V\setminus C_1}^{(0)}\|_{1}
$$
Because $\bfv^{(0)} = D\bfone_{\Gamma}$ and $\Gamma\subset C_1$, it follows that $\|\bfv^{(0)}_{V\setminus C_{1}}\|_{1} = 0$ and $\|\bfv^{(0)}\|_{1} = \text{vol}(\Gamma)$. Because $\|P\|_1 = 1$ it follows that $\|\bfv^{(s)}\|_1 = \|\bfv^{(0)}\|_1 = \text{vol}(\Gamma)$ for all $s$. Thus:

\begin{equation}
\sum_{j\in W} v^{(t)}_{j} \leq \|\bfv_{V\setminus C_1}^{(t)}\|_{1} \leq t\epsilon_2\text{vol}(\Gamma) \leq t\epsilon_2(1+\epsilon_2)\text{vol}^{\text{in}}(\Gamma),
\label{eq:BoundInnerProdWithW}
\end{equation}
where the final inequality follows from Lemma \ref{lemma:BoundVolume}. Now let us put this all together. Returning to \eqref{eq:BreakInequality} with \eqref{eq:BoundInnerProdWithU} and \eqref{eq:BoundInnerProdWithW} in hand:
\begin{align}
& q_{\min}^{t}\left(\frac{\text{vol}^{\text{in}}(U)\text{vol}^{\text{in}}(\Gamma)}{\text{vol}^{\text{in}}(G_{C_1})} - \epsilon_1^{t}\sqrt{\text{vol}^{\text{in}}(U)\text{vol}^{\text{in}}(\Gamma)}\right) \leq t\epsilon_2(1+\epsilon_2)\text{vol}^{\text{in}}(\Gamma)  \\
\Longrightarrow & q_{\min}^{t}\left(\frac{\text{vol}^{\text{in}}(U)}{\text{vol}^{\text{in}}(G_{C_1})} - \epsilon_1^{t}\sqrt{\frac{\text{vol}^{\text{in}}(U)}{\text{vol}^{\text{in}}(\Gamma)}}\right) \leq t\epsilon_2(1+\epsilon_2). \label{eq:BalanceBound}
\end{align}

From Lemma \ref{lemma:BoundVolume} and the assumptions on $|U|$ and $|\Gamma|$:
\begin{align*}
& \frac{\text{vol}^{\text{in}}(U)}{\text{vol}^{\text{in}}(G_{C_1})} \geq \frac{(1-\epsilon_3)d^{\text{in}}_{\text{av}}|U|}{(1+\epsilon_3)d^{\text{in}}_{\text{av}}|C_1|} = \frac{(1-\epsilon_3)un_1}{(1+\epsilon_3)n_1} = \frac{1-\epsilon_3}{1+\epsilon_3}u \\
& \frac{\text{vol}^{\text{in}}(U)}{\text{vol}^{\text{in}}(\Gamma)} \leq \frac{(1+\epsilon_{3})d^{\text{in}}_{\text{av}}|U|}{(1-\epsilon_3)d^{\text{in}}_{\text{av}}|\Gamma|} \leq \frac{(1+\epsilon_3)}{(1-\epsilon_3)}\frac{u}{g\epsilon_{1}^{2t-1}}
\end{align*} 
Finally because $q_i = 1 - r_i$ it follows that $q_{\min} \geq 1 - \epsilon_2$. Putting this all into equation \eqref{eq:BalanceBound}:
$$
(1-\epsilon_2)^{t}\left(\frac{1-\epsilon_3}{1+\epsilon_3}u - \epsilon_1^{1/2}\sqrt{\frac{(1+\epsilon_3)}{(1-\epsilon_3)}\frac{u}{g}}\right) \leq t\epsilon_2(1+\epsilon_2) 
$$
At this stage it is illuminating to use the assumption that $\epsilon_1,\epsilon_2,\epsilon_3 = o(1)$. Observe that:
$$
\frac{1-o(1)}{1+o(1)} = 1-o(1), \quad \frac{1+o(1)}{1-o(1)} = 1+o(1), \ \text{ and } (1-o(1))^{t} = 1-o(1)
$$
where the final equality follows as $t$ is constant with respect to $n$. Hence:
$$
(1-o(1))u - o(\sqrt{u}) \leq o(1) \Longrightarrow u \leq o(1) + o(u).
$$
This is only possible if $u = o(1)$. It follows that $|C_1\bigtriangleup\Omega| = |U| + |W| = (\epsilon + 2u)n_1 = (\epsilon + o(1))n_1$ as stated. 
\end{proof}

\section{Showing the SBM satisfies our assumptions\label{A:SBM}}
\label{section:Assumptions_for_SBM}

Here we verify that $\text{SBM}(\mathbf{n},P)$ satisfies the assumptions (A1)--(A4), under the hypotheses of Theorem \ref{theorem:Assumptions_for_SBM}. Recall that we are assuming that $P_{ab} = (\beta+o(1))\log(n)/n$ for $a\neq b$, and that $P_{aa} = \omega\log(n)/n_a$ for $a = 1,\ldots, k$. As we are also assuming that $n_1\to \infty$, and $n_1$ is the size of the smallest cluster, we get that $k = O(1)$, {\em i.e.} (A1) holds.

\begin{theorem}[see \cite{B82, B01}]
\label{thm:Concentration_q}
Let $G \sim \text{ER}(n,q)$ with $q = (\beta+o(1))\log(n)/n$. There exist a function 
$\eta(\beta)$ satisfying $0 < \eta(\beta) < 1$ and $ \lim_{\beta\to\infty} \eta(\beta) = 0$ such that 
\begin{equation*}
d_{\text{max}}(G) = (1+\eta(\beta))\beta\log{n} + o(1) \leq 2\beta\log(n) + o(1) \text{ a.s. }
\end{equation*}
\end{theorem}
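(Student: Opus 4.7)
The plan is to prove both the upper and lower bound for $d_{\max}$ by applying Chernoff-type estimates to the binomial degree distributions, together with a union bound (upper) and a second-moment argument (lower). Let $X_i$ denote the degree of vertex $i$; each $X_i \sim \text{Binomial}(n-1,q)$ with mean $\mu := (n-1)q = \beta\log(n)\cdot(1+o(1))$. Define the Cram\'er rate function $h(\eta) := (1+\eta)\log(1+\eta) - \eta$, which is continuous, strictly increasing on $[0,\infty)$, with $h(0)=0$ and $h(\eta)\to\infty$ as $\eta\to\infty$. Let $\eta(\beta)$ be the unique positive solution of $h(\eta)\beta = 1$. Since $1/\beta \to 0$ as $\beta\to\infty$, continuity of $h^{-1}$ at $0$ gives $\eta(\beta)\to 0$, and one checks $h(1) = 2\log 2 - 1$, so $\eta(\beta) < 1$ whenever $\beta > 1/(2\log 2 - 1)$; this will furnish the comparison $d_{\max} \leq 2\beta\log n + o(1)$.

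For the upper bound, fix $\varepsilon > 0$ and set $\eta_+ := \eta(\beta)+\varepsilon$, so $h(\eta_+)\beta = 1+\delta$ for some $\delta>0$. The standard Chernoff inequality for binomials yields
$$
P\bigl(X_i \geq (1+\eta_+)\mu\bigr) \leq \exp\bigl(-h(\eta_+)\mu\bigr) = n^{-(1+\delta)+o(1)}.
$$
A union bound over the $n$ vertices gives $P(d_{\max} \geq (1+\eta_+)\mu) \leq n^{-\delta + o(1)} \to 0$, so $d_{\max} \leq (1+\eta(\beta)+\varepsilon)\beta\log n + o(1)$ almost surely.

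For the matching lower bound, fix $\varepsilon>0$, set $\eta_- := \eta(\beta)-\varepsilon$ (positive for $\beta$ large), and let $N := |\{i : X_i \geq (1+\eta_-)\mu\}|$. A sharp Cram\'er lower bound on the binomial tail (obtained by isolating the dominant term $\binom{n-1}{k}q^k(1-q)^{n-1-k}$ at $k = \lceil(1+\eta_-)\mu\rceil$ and applying Stirling) yields $P(X_i \geq (1+\eta_-)\mu) \geq n^{-(1-\delta')+o(1)}$ for some $\delta'>0$, whence $E[N] \geq n^{\delta'-o(1)} \to \infty$. To control $\mathrm{Var}(N)$, note that for $i\neq j$ the variables $X_i,X_j$ differ from independent copies only through the single shared edge $\{i,j\}$; conditioning on the presence/absence of this edge (which occurs with probability $q = o(1)$) gives $\mathrm{Cov}(\mathbf{1}_{X_i\geq t}, \mathbf{1}_{X_j\geq t}) = o\bigl(P(X_i\geq t)^2\bigr)$, and hence $\mathrm{Var}(N) = (1+o(1))E[N]$. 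Chebyshev's inequality then gives $P(N=0) \leq \mathrm{Var}(N)/E[N]^2 \to 0$, so $d_{\max} \geq (1+\eta(\beta)-\varepsilon)\beta\log n$ almost surely. Since $\varepsilon$ was arbitrary, the two matching bounds combine (after letting $\varepsilon\to 0$ at an appropriate rate with $n$) to yield $d_{\max} = (1+\eta(\beta))\beta\log n + o(1)$.

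The main obstacle is the second-moment step: although the weak pairwise correlations between degrees are intuitively harmless, making the covariance estimate fully rigorous requires either an explicit conditioning calculation on the shared edge $\{i,j\}$ or a careful application of FKG/negative-correlation tools. Since this is a classical result, I would lean on the detailed treatments in \cite{B82,B01} for the concentration bookkeeping, treating the present write-up as a derivation of $\eta(\beta)$ from the Cram\'er functional together with verification that the stated asymptotic equality is consistent with the cited sharp-degree-concentration results.
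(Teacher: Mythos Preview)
The paper does not prove this theorem; it is stated with the attribution ``see \cite{B82,B01}'' and invoked as a black box from the random-graph literature. So there is no ``paper's own proof'' to compare against---your write-up is strictly more than what the paper supplies.

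That said, your sketch is exactly the classical argument Bollob\'as uses: identify $\eta(\beta)$ via the Cram\'er equation $\beta h(\eta)=1$, get the upper bound by Chernoff plus a union bound, and the lower bound by a second-moment computation on the number of high-degree vertices. Two minor points worth flagging. First, your own computation shows $\eta(\beta)<1$ only once $\beta > 1/(2\log 2 - 1)\approx 2.59$; the paper's blanket claim ``$0<\eta(\beta)<1$'' is thus a slight overstatement for small $\beta$, though harmless for the downstream application (Corollary~\ref{cor1}), which only needs the weaker bound $d_{\max}^{\text{out}} = O(\log n)$. Second, your argument as written yields $d_{\max} = (1+\eta(\beta)+o(1))\beta\log n$, i.e.\ a multiplicative $o(1)$ error, hence an additive $o(\log n)$ rather than the additive $o(1)$ in the displayed statement; the additive $o(1)$ in the paper is almost certainly an informality (Bollob\'as's result is stated multiplicatively), and again the application only uses the cruder $\leq 2\beta\log n + o(1)$.
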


\begin{theorem}[see \cite{FK16}, Theorem 3.4 (ii)]
If  $G \sim \text{ER}(n_a,p)$ with $p_a = \omega\log(n)/n_a$ where $\omega \to \infty$, then $d_{\min}(G) = (1- 
o(1))\omega\log(n)$ and $d_{\max}(G) = (1+o(1))\omega\log(n)$ a.s.
\label{thm:Concentration_p}
\end{theorem}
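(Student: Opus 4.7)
The plan is to prove this via a standard two-step argument: a Chernoff bound on each individual vertex degree, followed by a union bound over all $n_a$ vertices. Observe that in $G \sim \text{ER}(n_a, p_a)$ the degree of each vertex $i$ is a sum of $n_a-1$ independent $\text{Bernoulli}(p_a)$ random variables, so $d_i \sim \text{Binomial}(n_a - 1, p_a)$ with mean $\mu := (n_a-1)p_a = (1-o(1))\omega\log n$. Our target is to show that $|d_i - \mu| \leq \delta(n)\mu$ for all $i$ simultaneously, where $\delta(n) \to 0$.

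I would apply the multiplicative Chernoff bound: for any $\delta \in (0,1)$,
$$
\mathbb{P}\bigl(|d_i - \mu| \geq \delta\mu\bigr) \leq 2\exp\!\left(-\delta^{2}\mu/3\right).
$$
A union bound over the $n_a \leq n$ vertices then gives
$$
\mathbb{P}\bigl(\exists\, i: |d_i - \mu| \geq \delta\mu\bigr) \leq 2n_a\exp\!\left(-\delta^{2}\mu/3\right) \leq 2\exp\!\left(\log n - (1-o(1))\delta^{2}\omega\log(n)/3\right).
$$
To make this $o(1)$ we need $\delta^{2}\omega \to \infty$ (so that the exponent tends to $-\infty$), while to obtain the desired relative error we also need $\delta \to 0$. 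Since $\omega \to \infty$ by hypothesis, both requirements can be met by, e.g., $\delta(n) = \omega^{-1/4}$, for which $\delta^{2}\omega = \omega^{1/2} \to \infty$ and $\delta \to 0$.

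On the complementary event, which holds almost surely, every vertex satisfies $(1-\delta)\mu \leq d_i \leq (1+\delta)\mu$, whence
$$
d_{\min}(G) \geq (1-o(1))\omega\log n \quad \text{and} \quad d_{\max}(G) \leq (1+o(1))\omega\log n,
$$
yielding the claim. The only subtle point is the simultaneous choice of $\delta(n)$; this is where the assumption $\omega \to \infty$ is essential, because in the critical regime $\omega = \Theta(1)$ the bound $\delta^{2}\omega\log n$ cannot dominate $\log n_a$ for $\delta \to 0$, and indeed the concentration of the maximum degree is known to fail in that regime. Beyond this calibration, every step is a routine application of textbook concentration, so I expect no genuine obstacle.
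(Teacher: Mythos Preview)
Your argument is correct. The paper does not actually supply a proof of this statement; it merely cites Theorem~3.4(ii) of \cite{FK16} and uses the result as a black box. Your Chernoff-plus-union-bound derivation is the standard route to such degree concentration and is essentially what one finds in \cite{FK16}, so there is nothing materially different to compare. The calibration $\delta = \omega^{-1/4}$ is clean, and your remark that the argument breaks down when $\omega = \Theta(1)$ is on point. One very minor comment: for $\mu = (1-o(1))\omega\log n$ you are implicitly using $n_a \to \infty$, which the paper assumes via $n_1 \to \infty$ but which you might state explicitly.
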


\begin{theorem}
\label{thm:EigConcentration}
Suppose that $G \sim \text{ER}(n_a,p)$ with $p = \omega \log(n)/n_a$ where $\omega\to\infty$. Then we have almost surely $|\lambda_{i}(L) - 1|  = O(\omega^{-1/2}) = o(1)$ for all $i  > 1$.
\end{theorem}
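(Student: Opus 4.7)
The plan is to work with the symmetric normalized Laplacian, since by Lemma~\ref{lemma:Evals_of_L_Lsym_P} we have $\lambda_i(L) = \lambda_i(L^{\text{sym}})$, and $L^{\text{sym}} = I - D^{-1/2}AD^{-1/2}$ reduces the problem to controlling the non-principal eigenvalues of $N := D^{-1/2}AD^{-1/2}$. Concretely, if $\mu_1\geq \mu_2 \geq \cdots \geq \mu_{n_a}$ are the eigenvalues of $N$, then $\lambda_i(L^{\text{sym}}) = 1 - \mu_{n_a - i + 1}$, so the statement is equivalent to $|\mu_j| = O(\omega^{-1/2})$ for $j \geq 2$.

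First I would split $A$ into its mean and a centered noise term, writing $A = \mathbb{E}[A] + R$ with $\mathbb{E}[A] = p(J - I)$, where $J = \mathbf{1}\mathbf{1}^\top$. Conjugating by $D^{-1/2}$ gives
\begin{equation*}
N \;=\; p\,D^{-1/2}JD^{-1/2} \;-\; p\,D^{-1} \;+\; D^{-1/2}RD^{-1/2}.
\end{equation*}
By Theorem~\ref{thm:Concentration_p}, almost surely $d_i = (1+o(1))\omega\log n$ uniformly in $i$, hence $\|D^{-1}\|_2 = (1+o(1))/(\omega\log n)$, and the diagonal term $pD^{-1}$ has spectral norm $O(1/n_a) = o(\omega^{-1/2})$. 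The rank-one piece $pD^{-1/2}JD^{-1/2} = p(D^{-1/2}\mathbf{1})(D^{-1/2}\mathbf{1})^\top$ has a single nonzero eigenvalue equal to $p\sum_i d_i^{-1} = (1+o(1))pn_a/(\omega\log n) = 1+o(1)$, using $p = \omega\log n/n_a$; all its other eigenvalues are $0$.

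The main technical step is to bound $\|D^{-1/2}RD^{-1/2}\|_2$. Using the uniform degree concentration again gives $\|D^{-1/2}RD^{-1/2}\|_2 \leq (1+o(1))\|R\|_2/(\omega\log n)$, so it suffices to control $\|R\|_2 = \|A - \mathbb{E}[A]\|_2$. This is precisely the setting of the Chung--Radcliffe spectral concentration result \cite{CR11} (see also Theorem~3.6 of \cite{FK16}), which guarantees that for $G\sim\text{ER}(n_a,p)$ with $n_a p = \omega\log n \to \infty$, almost surely
\begin{equation*}
\|A - \mathbb{E}[A]\|_2 \;=\; O(\sqrt{n_a p}) \;=\; O(\sqrt{\omega \log n}).
\end{equation*}
Plugging this in yields $\|D^{-1/2}RD^{-1/2}\|_2 = O((\omega\log n)^{-1/2}) = O(\omega^{-1/2})$ almost surely. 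I expect this invocation of \cite{CR11} (or an equivalent matrix Bernstein/Seginer-type bound) to be the main obstacle in writing out a fully self-contained proof, since everything else is Weyl perturbation.

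Finally, I would assemble the pieces with Weyl's inequality. The ``mean'' part $pD^{-1/2}JD^{-1/2} - pD^{-1}$ has top eigenvalue $1+o(1)$ and all other eigenvalues of magnitude $O(1/n_a)$. Adding the noise $D^{-1/2}RD^{-1/2}$ shifts every eigenvalue by at most $O(\omega^{-1/2})$, so $\mu_1 = 1 + O(\omega^{-1/2})$ and $|\mu_j| = O(\omega^{-1/2})$ for $j \geq 2$. Translating back to $L^{\text{sym}}$ gives $\lambda_1(L^{\text{sym}}) = O(\omega^{-1/2})$ (which just confirms the graph is connected and the Fiedler vector is trivial) and $|\lambda_i(L^{\text{sym}}) - 1| = O(\omega^{-1/2})$ for $i \geq 2$, which by the reduction at the start is the desired bound on $\lambda_i(L)$.
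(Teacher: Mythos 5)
Your proof is correct, but it follows a genuinely different route from the paper. The paper's proof is essentially a one-line citation: Theorem 4 of \cite{CR11} directly gives concentration of the eigenvalues of $L^{\text{sym}}$ around those of the expectation graph, namely $|\lambda_i(L^{\text{sym}})-1|\leq\sqrt{6\log(2n_a)/(\omega\log n)}=O(\omega^{-1/2})$, and Lemma \ref{lemma:Evals_of_L_Lsym_P} together with $\log n\geq\log n_a$ transfers this to $L$. You instead rebuild that concentration from more primitive ingredients: the split $A=\mathbb{E}[A]+R$, the degree concentration of Theorem \ref{thm:Concentration_p}, a spectral-norm bound on $\|A-\mathbb{E}[A]\|_2$, and Weyl's inequality applied to the rank-one mean part plus the small diagonal correction. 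This makes the source of the $\omega^{-1/2}$ rate transparent and is more self-contained in structure, though it still leans on an external concentration input at the decisive step, exactly where the paper leans on \cite{CR11}. Three small points to tighten: (i) the bound $\|A-\mathbb{E}[A]\|_2=O(\sqrt{n_ap})$ with no logarithmic factor is a Feige--Ofek/Vu-type result and is not what \cite{CR11} proves (their adjacency bound carries an extra $\sqrt{\log n_a}$), and the attribution to Theorem 3.6 of \cite{FK16} is doubtful; fortunately either version suffices, since dividing by $d_{\min}=(1-o(1))\omega\log n$ gives $O\bigl(\sqrt{\log n_a/(\omega\log n)}\bigr)=O(\omega^{-1/2})$ because $\log n_a\leq\log n$. (ii) Your claim that the diagonal term $pD^{-1}$ contributes $O(1/n_a)=o(\omega^{-1/2})$ implicitly uses $p\leq 1$, i.e.\ $\omega\leq n_a/\log n$; this deserves one clause. (iii) The parenthetical that $\lambda_1(L^{\text{sym}})=O(\omega^{-1/2})$ ``confirms connectivity'' is off ($\lambda_1=0$ always; connectivity concerns $\lambda_2$), but it plays no role in the argument.
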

\begin{proof}
Theorem 4 in \cite{CR11} shows that
$$
|\lambda_{i}(L^{\text{sym}}) - 1| \leq \sqrt{\frac{6\log(2n_a)}{\omega \log(n)}}.
$$
By Lemma \ref{lemma:Evals_of_L_Lsym_P} $L^{\text{sym}}$ and $L$ have the same spectrum. The result follows as $\log(n) \geq \log(n_a)$
\end{proof}

As each $G_{C_a} \sim \text{ER}(n_a,p)$, it follows from Theorem \ref{thm:EigConcentration} that:
\begin{corollary}
	$\text{SBM}(\mathbf{n},P)$ with parameters as in Theorem \ref{theorem:Assumptions_for_SBM} satisfies assumption (A2) with $\epsilon_1 = O(\omega^{-1/2})$.
\end{corollary}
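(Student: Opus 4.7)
The plan is to leverage Theorem \ref{thm:EigConcentration} directly on each cluster subgraph, since in the stochastic block model each $G_{C_a}$ is itself an Erd\H{o}s--R\'enyi graph on $n_a$ vertices with edge probability $P_{aa} = \omega\log(n)/n_a$. Because $\omega \to \infty$ by hypothesis, this $\text{ER}(n_a, P_{aa})$ exactly satisfies the setup of Theorem \ref{thm:EigConcentration} (with the role of $n$ in that theorem played by $n_a$, and the role of $\log(n)/n_a$ already present by construction). So for each fixed $a$, almost surely $|\lambda_i(L_{G_{C_a}}) - 1| = O(\omega^{-1/2})$ for every $i > 1$.

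From this pointwise control on the spectrum of each $L_{G_{C_a}}$ I would read off the two inequalities needed for (A2): specialize $i = 2$ to get $\lambda_2(L_{G_{C_a}}) \geq 1 - O(\omega^{-1/2})$, and specialize $i = n_a$ to get $\lambda_{n_a}(L_{G_{C_a}}) \leq 1 + O(\omega^{-1/2})$. Setting $\epsilon_1 := C\omega^{-1/2}$ for a sufficiently large absolute constant $C$ (which is $o(1)$ since $\omega \to \infty$) gives $1 - \epsilon_1 \leq \lambda_2(L_{G_{C_a}})$ and $\lambda_{n_a}(L_{G_{C_a}}) \leq 1 + \epsilon_1$, which is exactly the statement of (A2).

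The only subtlety is turning an almost-sure statement for a single $a$ into an almost-sure statement holding simultaneously for all $a \in [k]$. Since (A1) gives $k = O(1)$ and the subgraphs $G_{C_1}, \ldots, G_{C_k}$ are mutually independent in the SBM, a union bound over the $k$ clusters preserves the $o(1)$ failure probability: if each cluster violates its eigenvalue bound with probability at most $\eta_a(n) = o(1)$, the union bound gives a total failure probability of at most $\sum_{a=1}^k \eta_a(n) = O(1)\cdot o(1) = o(1)$. Thus (A2) holds almost surely with $\epsilon_1 = O(\omega^{-1/2}) = o(1)$, as claimed. I do not expect any real obstacle here — the work is entirely front-loaded into Theorem \ref{thm:EigConcentration}, and the corollary is essentially a bookkeeping step.
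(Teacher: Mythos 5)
Your proposal is correct and follows essentially the same route as the paper: apply Theorem \ref{thm:EigConcentration} to each $G_{C_a}\sim\text{ER}(n_a,P_{aa})$ and combine over the $k=O(1)$ clusters (the paper leaves the union-bound bookkeeping implicit, but it is the same argument). Nothing further is needed.
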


We now discuss the remaining two assumptions. Let $G^{\text{in}}$ and $G^{\text{out}}$ be as in \S \ref{sec:Preliminaries}. If 
$G \sim \text{SBM}(\mathbf{n},P)$ then $G^{\text{in}}$ consists of $k$ disjoint \ER graphs,  $G_{C_a} \sim \text{ER}(n_a,p)$. The graph $G^{\text{out}}$ is not an \ER graph, as there is zero probability of it containing an 
edge between two vertices in the same cluster (because we have removed them). However, we can profitably think of $G^{\text{out}}$ 
as a subgraph of some $\widetilde{G^{\text{out}}} \sim \text{ER}(n,q)$. In particular, any upper bounds on the degrees of vertices 
in $\widetilde{G^{\text{out}}}$ are automatically bounds on the degrees in $G^{\text{out}}$. Thus, we have the following 
corollaries of Theorems \ref{thm:Concentration_p} and \ref{thm:Concentration_q}:

\begin{corollary}
\label{cor1}
If $G \sim \text{SBM}(\mathbf{n},P)$ with parameters as in Theorem \ref{theorem:Assumptions_for_SBM} then $d^{\text{out}}_{\max}(G) \leq 2\beta\log{n} + o(1)$ a.s.
\end{corollary}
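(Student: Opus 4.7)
The plan is to reduce the statement to a direct application of Theorem~\ref{thm:Concentration_q} via a straightforward stochastic-domination argument, as is already foreshadowed in the paragraph preceding the corollary.

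First I would make precise the heuristic ``$G^{\text{out}}$ is a subgraph of some $\widetilde{G^{\text{out}}}\sim\text{ER}(n,q)$.'' Under the hypothesis of Theorem~\ref{theorem:Assumptions_for_SBM}, every off-diagonal entry satisfies $P_{ab}=(\beta+o(1))\log(n)/n$, so I can set $q:=(\beta+o(1))\log(n)/n$ with the same $o(1)$ term (taking the worst case over the finitely many pairs $a\neq b$, permissible since $k=O(1)$). I would then construct an explicit coupling: for each unordered pair $\{i,j\}$ with $i\in C_a$, $j\in C_b$, $a\neq b$, draw a single uniform $U_{ij}\in[0,1]$ and place an edge in $G^{\text{out}}$ iff $U_{ij}\leq P_{ab}$ and in $\widetilde{G^{\text{out}}}$ iff $U_{ij}\leq q$. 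For pairs $\{i,j\}$ with $i,j$ in the same cluster, $G^{\text{out}}$ has no edge while $\widetilde{G^{\text{out}}}$ has an edge with probability $q$ independently. Since $P_{ab}\leq q$ for all $a\neq b$, this yields $E(G^{\text{out}})\subseteq E(\widetilde{G^{\text{out}}})$ almost surely.

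Consequently, for every vertex $i$, the degree of $i$ in $G^{\text{out}}$ is bounded above by the degree of $i$ in $\widetilde{G^{\text{out}}}$, so
\[
d^{\text{out}}_{\max}(G)\ \leq\ d_{\max}(\widetilde{G^{\text{out}}}).
\]
Applying Theorem~\ref{thm:Concentration_q} to $\widetilde{G^{\text{out}}}\sim\text{ER}(n,q)$ with $q=(\beta+o(1))\log(n)/n$ gives $d_{\max}(\widetilde{G^{\text{out}}})\leq 2\beta\log(n)+o(1)$ almost surely, which yields the claim.

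There is really no substantive obstacle here: the content of the proof lives entirely in Theorem~\ref{thm:Concentration_q}, and the only thing one must do is absorb the $o(1)$ correction in $P_{ab}$ into the $o(1)$ correction in $q$ and verify that the coupling is monotone. One mild care-point is that in $G^{\text{out}}$ there are \emph{no} intra-cluster edges, so one must not accidentally bound the \emph{wrong} degree in the coupling; the construction above handles this by letting $\widetilde{G^{\text{out}}}$ carry extra (intra-cluster) edges that can only increase degrees. This completes the proposal.
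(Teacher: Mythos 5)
Your proposal is correct and is essentially the paper's own argument: the paper likewise bounds $d^{\text{out}}_{\max}(G)$ by viewing $G^{\text{out}}$ as a subgraph of $\widetilde{G^{\text{out}}}\sim\text{ER}(n,q)$ and invoking Theorem \ref{thm:Concentration_q}. Your explicit monotone coupling just spells out the stochastic-domination step the paper leaves implicit.
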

\begin{proof}
Consider $G^{\text{out}}$ as a subgraph of $\widetilde{G^{\text{out}}} \sim \text{ER}(n,q)$ and apply Theorem \ref{thm:Concentration_q}
\end{proof}

\begin{corollary}
\label{cor:Bounding_d_in}
If $G \sim \text{SBM}(\mathbf{n},P)$ with parameters as in Theorem \ref{theorem:Assumptions_for_SBM}, then $d^{\text{in}}_{\min}(G) \geq 
(1-o(1))\omega\log(n)$ and $d^{\text{in}}_{\max}(G) \leq (1+o(1))\omega\log(n)$ a.s.
\end{corollary}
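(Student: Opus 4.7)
The proof proposal proceeds by reducing the statement about the cluster-restricted degrees in $G$ to a direct application of Theorem \ref{thm:Concentration_p} (the FK16 degree concentration result) on each of the $k$ disjoint Erd\H{o}s--R\'enyi subgraphs that make up $G^{\text{in}}$.

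First I would observe that, by the very definition of $G^{\text{in}}$ and the definition of $\text{SBM}(\mathbf{n},P)$, the graph $G^{\text{in}}$ is a disjoint union $G_{C_1}\sqcup\cdots\sqcup G_{C_k}$, where $G_{C_a}\sim \text{ER}(n_a,p_a)$ and $p_a = \omega\log(n)/n_a$, and moreover these $k$ subgraphs are mutually independent as random graphs. Consequently, for each vertex $i\in C_a$ the quantity $d_i^{\text{in}}$ (its degree in $G^{\text{in}}$) coincides exactly with its ordinary degree inside $G_{C_a}$, so that
\begin{equation*}
d^{\text{in}}_{\min}(G) = \min_{a\in [k]} d_{\min}(G_{C_a}) \quad \text{and} \quad d^{\text{in}}_{\max}(G) = \max_{a\in [k]} d_{\max}(G_{C_a}).
\end{equation*}

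Next I would invoke Theorem \ref{thm:Concentration_p} individually on each $G_{C_a}$. Since $\omega\to\infty$ by hypothesis and $p_a$ has precisely the form required by that theorem, we obtain, for each fixed $a$, that $d_{\min}(G_{C_a}) = (1-o(1))\omega\log(n)$ and $d_{\max}(G_{C_a}) = (1+o(1))\omega\log(n)$ almost surely, that is, with failure probability $o(1)$ as $n\to\infty$. Since $k = O(1)$ (this follows from Assumption (A1), which was already established at the start of this appendix), a union bound over the $k$ cluster-indexed events keeps the total failure probability at $o(1)$, so the bounds hold simultaneously for all $a\in [k]$ almost surely.

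Combining these two steps gives $d^{\text{in}}_{\min}(G) \geq (1-o(1))\omega\log(n)$ and $d^{\text{in}}_{\max}(G) \leq (1+o(1))\omega\log(n)$ almost surely, as required. The only mildly delicate point is confirming that Theorem \ref{thm:Concentration_p} as quoted produces the $\log(n)$ factor rather than a $\log(n_a)$ factor; this is not a genuine obstacle because the hypotheses of Theorem \ref{theorem:Assumptions_for_SBM} specify $p_a = \omega\log(n)/n_a$ with $n$ (not $n_a$) in the numerator, and the cited version of the theorem is stated in exactly this form. No further work is needed.
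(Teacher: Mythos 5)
Your proposal is correct and follows essentially the same route as the paper: identify $d^{\text{in}}_i$ for $i\in C_a$ with the degree of $i$ in $G_{C_a}\sim \text{ER}(n_a,\omega\log(n)/n_a)$, apply Theorem \ref{thm:Concentration_p} to each cluster subgraph, and use $k=O(1)$ to take the max/min over clusters while keeping the failure probability $o(1)$. The only cosmetic difference is that you invoke a union bound where the paper appeals to independence of the $d_{\max}(G_{C_a})$; both handle the finitely many clusters correctly.
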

\begin{proof}
If $i\in C_a$ then $d^{\text{in}}_{i} = d_{i}(G_{C_a})$, where $G_{C_a} \sim \text{ER}(n_a,p)$. Clearly:
\begin{equation*}
d^{\text{in}}_{\text{max}}(G) = \max_{i} d^{\text{in}}_{i} = \max_{a} d_{\text{max}}(G_{C_a})
\end{equation*}
By Theorem \ref{thm:Concentration_p}, $d_{\text{max}}(G_a) = (1+o(1))\omega\log(n)$ a.s. Note that 
the $d_{\text{max}}(G_{C_a})$ are independent random variables, and since we are taking a maximum over $k = \mathcal{O}(1)$ of them, it 
follows that $\max_a d_{\text{max}}(G_{C_a}) \leq (1+o(1))\omega\log(n)$ a.s. too. The proof for $d^{\text{in}}_{\min}(G)$ is similar.
\end{proof}

\begin{corollary}
\label{cor:Bounding_r}
$\text{SBM}(\mathbf{n},P)$ with parameters as in Theorem \ref{theorem:Assumptions_for_SBM} satisfies assumption (A3) with $\epsilon_2 = O(\omega^{-1})$.
\end{corollary}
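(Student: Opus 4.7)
The plan is to combine the two preceding corollaries directly, as they provide matching upper and lower bounds on the numerator and denominator of $r_i = d_i^{\text{out}}/d_i^{\text{in}}$. Specifically, for every vertex $i$, we have the trivial bound
\begin{equation*}
r_i \;=\; \frac{d_i^{\text{out}}}{d_i^{\text{in}}} \;\leq\; \frac{d_{\max}^{\text{out}}(G)}{d_{\min}^{\text{in}}(G)}.
\end{equation*}
So the entire task reduces to controlling the right-hand side uniformly.

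First I would invoke Corollary \ref{cor1}, which gives $d_{\max}^{\text{out}}(G) \leq 2\beta \log n + o(1)$ almost surely under the hypotheses on $P$ of Theorem \ref{theorem:Assumptions_for_SBM}. Then I would invoke Corollary \ref{cor:Bounding_d_in}, which gives $d_{\min}^{\text{in}}(G) \geq (1-o(1))\omega \log n$ almost surely. Since both events are almost sure, their intersection is also almost sure (the complement is a union of two $o(1)$-probability events, which is still $o(1)$). On this intersection,
\begin{equation*}
\max_{i\in [n]} r_i \;\leq\; \frac{2\beta \log n + o(1)}{(1 - o(1))\,\omega \log n} \;=\; \frac{2\beta + o(1)}{(1-o(1))\,\omega} \;=\; O(\omega^{-1}).
\end{equation*}
Since $\omega \to \infty$ by hypothesis, this quantity is $o(1)$, so we may set $\epsilon_2 := 2\beta/\omega \cdot (1+o(1)) = O(\omega^{-1})$ and conclude that assumption (A3) holds almost surely.

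There is really no obstacle here; the corollary is an immediate consequence of the two preceding results once one notes that $r_i$ is uniformly dominated by the worst-case ratio. The only mild subtlety is that \textbf{both} almost sure statements must be used simultaneously, but this is handled by the routine observation that a finite intersection of almost sure events is almost sure (which is compatible with the $o(1)$ convention of \S \ref{sec:DataModel}). No further estimates or probabilistic arguments are required beyond citing the previous corollaries.
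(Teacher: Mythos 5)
Your proof is correct and follows essentially the same route as the paper: bound $r_i$ by $d^{\text{out}}_{\max}/d^{\text{in}}_{\min}$, then apply Corollaries \ref{cor1} and \ref{cor:Bounding_d_in} to get the $O(\omega^{-1})$ bound. The extra remark about intersecting the two almost sure events is a harmless elaboration of what the paper leaves implicit.
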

\begin{proof}
First of all, it is clear that for any $i$, $d^{\text{out}}_{i}/d^{\text{in}}_{i} \leq d^{\text{out}}_{\max}/d^{\text{in}}_{\min}$. From Corollaries \ref{cor1} and \ref{cor:Bounding_d_in} we have:
\begin{align*}
\frac{d^{\text{out}}_{\max}}{d^{\text{in}}_{\min}} & \leq \frac{2\beta\log{n} + o(1)}{(1- o(1))\omega\log(n)} =  \frac{2\beta + o(1)}{(1-o(1))\omega} = O(\omega^{-1}).
\end{align*} 
\end{proof}

\begin{corollary}
$\text{SBM}(\mathbf{n},P)$ with parameters as in Theorem \ref{theorem:Assumptions_for_SBM} satisfies assumption (A4).
\end{corollary}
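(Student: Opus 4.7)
The plan is to combine the almost-sure concentration of $d^{\text{in}}_{\max}$ and $d^{\text{in}}_{\min}$ provided by Corollary \ref{cor:Bounding_d_in} with a direct computation of $d^{\text{in}}_{\text{av}} := \mathbb{E}[d^{\text{in}}_i]$ in order to produce an explicit $\epsilon_3 = o(1)$ for which (A4) holds.

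First, I would pin down $d^{\text{in}}_{\text{av}}$. For a vertex $i \in C_a$, the in-degree $d^{\text{in}}_i$ is exactly the degree of $i$ inside the induced subgraph $G_{C_a} \sim \text{ER}(n_a, P_{aa})$, so
\begin{equation*}
d^{\text{in}}_{\text{av}}(i) = (n_a - 1)P_{aa} = (n_a-1)\frac{\omega \log(n)}{n_a} = \left(1 - \tfrac{1}{n_a}\right)\omega \log(n).
\end{equation*}
Since $n_1 \to \infty$ and $n_1 \leq n_a$ for all $a$, the quantity $1/n_a$ is $o(1)$ uniformly in $a$, so $d^{\text{in}}_{\text{av}}(i) = (1-o(1))\omega\log(n)$ for every vertex $i$. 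In particular the common asymptotic scale $\omega\log(n)$ is independent of the cluster, which eliminates any ambiguity in the definition of $d^{\text{in}}_{\text{av}}$ across clusters.

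Next, I would invoke Corollary \ref{cor:Bounding_d_in} to obtain, almost surely,
\begin{equation*}
(1-o(1))\omega\log(n) \leq d^{\text{in}}_{\min}(G) \leq d^{\text{in}}_{\max}(G) \leq (1+o(1))\omega\log(n).
\end{equation*}
Dividing through by $d^{\text{in}}_{\text{av}} = (1-o(1))\omega\log(n)$ and using that $(1 \pm o(1))/(1-o(1)) = 1 \pm o(1)$, one gets
\begin{equation*}
\frac{d^{\text{in}}_{\max}(G)}{d^{\text{in}}_{\text{av}}} \leq 1+o(1), \qquad \frac{d^{\text{in}}_{\min}(G)}{d^{\text{in}}_{\text{av}}} \geq 1-o(1),
\end{equation*}
so (A4) holds almost surely with, say, $\epsilon_3$ equal to the maximum of these two $o(1)$ quantities. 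Tracking the rates from Theorem \ref{thm:Concentration_p} one can take $\epsilon_3 = O(\sqrt{\log(n)/(\omega\log(n))}) + O(1/n_1)$, which tends to $0$ since $\omega \to \infty$ and $n_1 \to \infty$.

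There is no real obstacle here; the only subtlety worth mentioning is that $d^{\text{in}}_{\text{av}}(i)$ a priori depends on $a$ (the cluster containing $i$), but the choice $P_{aa} = \omega\log(n)/n_a$ was calibrated precisely so that this dependence is washed out to leading order, leaving the same $(1-o(1))\omega\log(n)$ scale for every vertex. This is what allows the almost sure concentration of the extreme in-degrees around $\omega\log(n)$ to translate directly into a concentration around $d^{\text{in}}_{\text{av}}$.
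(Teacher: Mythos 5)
Your proposal is correct and follows essentially the same route as the paper: identify $d^{\text{in}}_{\text{av}}$ as (asymptotically) $\omega\log(n)$ and then apply the almost-sure bounds on $d^{\text{in}}_{\max}$ and $d^{\text{in}}_{\min}$ from Corollary \ref{cor:Bounding_d_in}. Your extra care with the $(n_a-1)P_{aa}$ computation and the cluster-dependence of $\mathbb{E}[d^{\text{in}}_i]$ is a harmless refinement of the paper's terser argument (though the explicit rate you quote for $\epsilon_3$ is not actually supplied by Theorem \ref{thm:Concentration_p} as stated, only the $o(1)$ conclusion is).
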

\begin{proof}
Observe that $d^{\text{in}}_{\text{av}} = \omega\log(n)$. The result then follows from Corollary \ref{cor:Bounding_d_in}.
\end{proof}

\section{Implementation of Algorithms}
\label{Supplement:Implementation}
All numerical experiments were done in {\tt MATLAB} on a mid 2012 Macbook pro with a 2.5 GHz Intel Core i5 processor and 16 GB of RAM. \\

\noindent
{\tt FlowImprove} We use an implementation available at \url{https://dgleich.wordpress.com/2011/09/19/fast-partition-improvement-with-flowimprove/} that uses the MATLAB-BGL package available at \url{https://github.com/dgleich/matlab-bgl}. We are extremely grateful to D. Gleich for some assistance in getting MATLAB-BGL to run on Mac OS X. {\tt FlowImprove} has two parameters: a vertex weighting vector $\mathbf{p}\in\mathbb{R}^{n}$ and a maximum number of iterations. We keep both at their default values, namely the all-ones vector and 5 respectively, for all experiments.\\

\noindent
{\tt SimpleLocal} We use the implementation available at \url{https://github.com/nveldt/SimpleLocal} which uses the Gurobi (\url{https://www.gurobi.com/}) optimization package in the maxflow subroutine. {\tt SimpleLocal} has one parameter, a locality parameter $\delta$. \\

\noindent
{\tt HKGrow} We use the implementation of this algorithm available at 
\url{https://www.cs.purdue.edu/homes/dgleich/codes/hkgrow/}. This implementation requires no input parameters.  \\

\noindent
{\tt PPR-Grow} We use the implementation that is available at \url{https://www.cs.purdue.edu/homes/dgleich/codes/hkgrow/}. Again, note that this is C++ code with a MATLAB wrapper, so we expect it to be faster than MATLAB-only code. This algorithm has a teleportation parameter, $\alpha$, and a tolerance parameter, $\epsilon$.  \\

\noindent
{\tt LBSA} We use the MATLAB implementation provided by the authors of \cite{SHBH19}, available at \url{https://github.com/PanShi2016/LBSA}.  The {\tt LBSA} algorithm actually includes six distinct methods; we use the heat kernel sampling with Lanczos method, denoted in \cite{SHBH19} as {\tt hkLISA}, as experimental evidence presented in the aforementioned paper suggests that this variant performs best. We also tried other methods (specifically heat kernel sampling with power method, and random walk sampling with power and Lanczos methods), but did not observe any significant difference in performance on our data sets. This algorithm requires one parameter, $k_2$, which governs the number of Lanczos iterations to take.\\

{\tt ClusterPursuit}, {\tt CP+RWT} and {\tt ICP+RWT} are all written in MATLAB and available as the ``ClusterPursuit'' package from the second author's website. \\

Note that the implementations of {\tt FlowImprove}, {\tt HKGrow} and {\tt PPR-Grow} used are all written in C++  and run in MATLAB using the {\tt mex} API. As such, we expect these implementations to run  several times faster than MATLAB-only implementations of these algorithms.

\section{Parameters for Numerical Experiments}
\label{sec:NumericalParameters}
\subsection{Synthetic Data}
For both local clustering experiments, {\em i.e.} using $\text{SBM}(\mathbf{n}^{(1)},P^{(1)})$ and $\text{SBM}(\mathbf{n}^{(2)},P^{(2)})$, we use the same parameters. For {\tt PPR-Grow}, following the discussion in \S C.2 of \cite{W17}, we try several values of $\alpha$ in the range $[\lambda/2,2\lambda]$ where $\lambda$ is the smallest non-zero eigenvalue of $L$. We observe best performance for $\alpha = \lambda$ so we use this value. For {\tt LBSA} we use $k_2 = 4$, as suggested in \cite{SHBH19}. For {\tt CP+RWT} we use parameters that align with Theorem 5.1, namely $\epsilon = 0.13/2$, $s = 0.13 n_1$, $R = 0.5$, $t = 3$ and $\hat{n}_1 = n_1$. For all algorithms we use the same seed set, $\Gamma$, drawn uniformly at random from $C_1$ and of size $|\Gamma| = 0.01n_1$. We make no attempt to tune parameters here for {\tt CP+RWT}, and note that one can get even better performance by choosing larger values of $\epsilon$ and $s$. \\

For the cut improvement experiments, we generate an initial cut using {\tt RWThresh} with parameters $\epsilon = 0.13$ $\hat{n}_1 = n_1$, $t = 3$ and $\Gamma$ chosen uniformly at random from $C_1$ with $|\Gamma| = 0.01n_1$. For {\tt SimpleLocal} we set the parameter $\delta$ to $0.5$. For {\tt ClusterPursuit} we experimented with various values of $s$, and reported results for $s = 0.26n_1$ for $\text{SBM}(\mathbf{n}^{(1)},P^{(1)})$ and $s = 0.16n_1$ for $\text{SBM}(\mathbf{n}^{(2)},P^{(2)})$. We fix $R = 0.5$ and $\hat{n}_1 = n_1$. 

\subsection{Social Networks}
For {\tt LBSA}, we again use $k_2 = 4$. For {\tt PPR-Grow}, we tune $\alpha$ to  $\alpha = 4\lambda$, where again $\lambda$ is the smallest non-zero eigenvalue of $L$. For {\tt CP+RWT} we take $\epsilon = 0.25$, $R = 0.5$, $t = 3$ and $s = 0.5n_1$.  

\subsection{MNIST and OptDigits}
For both MNIST and Optdigits, we take $\epsilon = 0.13$, $R = 0.5$, $t=3$, $s = 0.26n_1$ and $\hat{n}_a = n_a$.

\section{Preprocessing Image Data}
\label{sec:PreprocessingImages}
For MNIST and OptDigits, we construct a weighted $K$-NN graph as follows. Note that $K$, the number of neighbors, has no relation to $k$, the number of clusters.

\begin{itemize}
\item Let $\mathcal{X} = \{\bfx_1,\ldots, \bfx_{n}\}$ denote the vectorized version of the data set. That is, if the data set consists of $8\times 8$ images then $\mathcal{X}\subset \mathbb{R}^{64}$.
\item Fix parameters $r=10$ and $K=15$.
\item For all $i\in [n]$, define $\sigma_i:= \|\bfx_i - \bfx_{[r,i]}\|$, where $\bfx_{[r,i]}$ denotes the $r$-th closest point in $\mathcal{X}$ to $\bfx_i$. (If there is a tie, break it arbitrarily). Let $\text{NN}(\bfx_i,K)\subset \mathcal{X}$ denote the set of the $K$ closest points in $\mathcal{X}$ to $\bfx_i$. Again, one may break ties arbitrarily if they occur.
\item Define $\tilde{A}$ as: $\tilde{A}_{ij} = \left\{\begin{array}{cc}\exp\left(-\|\bfx_i - \bfx_j\|^{2}/\sigma_i\sigma_j\right) & \text{ if } \bfx_j \in NN(\bfx_i,K)\\ 0 & \text{otherwise} \end{array}\right.$
\item Observe that $\tilde{A}$ is not necessarily symmetric, as it may occur that $\bfx_j \in \text{NN}(\bfx_i,K)$ while $\bfx_i \notin \text{NN}(\bfx_j,K)$. So, we take $A = \tilde{A}^{\top}\tilde{A}$ to be the adjacency matrix that we use in our experiments
\end{itemize}

\end{appendices}

\end{document}